\newsavebox{\bigimage}
\newcommand{\pvalue}{$p$-value\xspace}
\newcommand{\pvalues}{$p$-values\xspace}
\newcommand{\premt}{preordered martingale test\xspace}
\newcommand{\mst}{martingale Stouffer test\xspace}
\newcommand{\postmt}{adaptively ordered martingale test\xspace}
\newcommand{\imt}{interactively ordered martingale test\xspace}
\newcommand{\tset}{testing set\xspace}
\newcommand{\one}{\mathbbm{1}\xspace}
\newtheorem{definition}{Definition}
\newtheorem{claim}{Claim}
\newtheorem{theorem}{Theorem}
\newtheorem{lemma}{Lemma}
\newtheorem{setting}{Setting}
\newtheorem{remark}{Remark}
\newcommand{\I}{\mathcal{I}}
\DeclareMathOperator*{\argmin}{argmin}
\DeclareMathOperator*{\argmax}{argmax}
\newlength{\widebarargwidth}
\newlength{\widebarargheight}
\newlength{\widebarargdepth}
\long\def\@makecaption#1#2{
        \vskip 0.8ex
        \setbox\@tempboxa\hbox{\small {\bf #1:} #2}
        \parindent 1.5em  
        \dimen0=\hsize
        \advance\dimen0 by -3em
        \ifdim \wd\@tempboxa >\dimen0
                \hbox to \hsize{
                        \parindent 0em
                        \hfil 
                        \parbox{\dimen0}{\def\baselinestretch{0.96}\small
                                {\bf #1.} #2
                                %%\unhbox\@tempboxa
                                } 
                        \hfil}
        \else \hbox to \hsize{\hfil \box\@tempboxa \hfil}
        \fi
        }
\begin{document}

\begin{center}

{\bf{\LARGE{Interactive Martingale Tests for the Global Null}}}

\vspace*{.2in}

{\large{
\begin{tabular}{cccc}
Boyan Duan &  Aaditya Ramdas & Sivaraman Balakrishnan & Larry Wasserman \\
\end{tabular}
\texttt{\{boyand,aramdas,siva,larry\}@stat.cmu.edu}
}}

\vspace*{.2in}

\begin{tabular}{c}
Department of Statistics and Data Science,\\
Carnegie Mellon University,  Pittsburgh, PA  15213
\end{tabular}

\vspace*{.2in}

\today

\vspace*{.2in}

\begin{abstract}
Global null testing is a classical problem going back about a century to Fisher’s and Stouffer’s combination tests. In this work, we present simple martingale analogs of these classical tests, which are applicable in two distinct settings: (a) the online setting in which there is a possibly infinite sequence of \pvalues, and (b) the batch setting, where one uses prior knowledge to preorder the hypotheses. Through theory and simulations, we demonstrate that our martingale variants have higher power than their classical counterparts even when the preordering is only weakly informative. Finally, using a recent idea of “masking” \pvalues, we develop a novel interactive test for the global null that can take advantage of covariates and repeated user guidance to create a data-adaptive ordering that achieves higher detection power against structured alternatives. 
\end{abstract}
\end{center}

\section{Introduction}

This paper proposes new martingale-based methods for testing the global null corresponding to hypotheses $\{H_i\}_{i\in \I}$ using a corresponding set of \pvalues $\{p_i\}_{i\in \I}$ and possibly other covariates $\{x_i\}_{i\in\I}$, where the index set $\I$ can be finite or countably infinite. Global null testing corresponds to testing if all individual hypotheses are truly nulls~(denoted as $H_i = 0$), against its complement:
$$
\mathcal{H_G}_0: H_i = 0 \text{ for all } i\in\I, \quad \mathcal{H_G}_1: H_i = 1 \text{ for at least one } i \in \I.
$$
As we review later in the introduction, this is a well-studied classical problem.
We consider two settings, the batch setting and the online setting, and our proposed framework applies to both settings:  
\begin{itemize}
    \item Batch setting: we have access to a fixed batch of $n$ hypotheses, thus ${\I = \{1,\ldots,n\}}$.
    \item Online setting: an unknown and potentially infinite number of hypotheses arrive sequentially in a stream, thus $\I = \{1,2,\ldots,k,\ldots\}$.
\end{itemize}
Most common global null tests involve a one-step operation, comparing a single statistic with a critical value derived from its null distribution. Observing that many classical tests effectively use a martingale-type test statistic, we propose novel martingale analogs of these tests that are inherently sequential (multi-step) in nature, and thus naturally apply in the online setting, or in the batch setting if an ordering can be created using prior knowledge and/or the data. Intriguingly, the ordering may also be created \emph{interactively}: this means that an analyst may adaptively create the ordering in a data-dependent manner if they adhere to a particular protocol of \emph{masking} and \emph{unmasking} (the definition is introduced later in equation~\eqref{default_decompse}). In order to understand why our interactive martingale tests have desirable properties (both controlling type-I errors and having higher power in structured settings), it is necessary to present them last, after having derived the vanilla non-interactive martingale global null tests, which are also novel in their own right. Specifically, for the purposes of progressively developing intuition, our 
treatment follows three steps of increasing complexity:
\begin{itemize}
    \item (Preordered setting, Section~\ref{sec:premt}) In the batch setting, the analyst employs \emph{prior} knowledge (data-independent) to preorder the hypotheses. In the online setting, an ordering of hypotheses is provided by nature. 
    \item (Data-adaptive ordering, Section~\ref{sec:postmt}) In the batch setting, the hypotheses are unordered, but an adaptive data-dependent ordering is created based on ``masked'' $p$-values. In the online setting, nature orders hypotheses, but the analyst discards some hypotheses from the ordering based on their masked $p$-values. Even though the data-adaptive and preordered settings proceed sequentially and handle the $p$-values one at a time, the analyst plays no role \emph{during} this sequential process, as all the rules for how to order the hypotheses are prespecified before the data is observed.
    \item (Interactive ordering, Section~\ref{sec:imt}). The utility of masking to enable interaction with a human is most compelling in the batch setting, where in addition to the unordered hypotheses, we suppose that the analyst also has additional side information in the form of covariates, and perhaps prior knowledge in the form of structural constraints on the non-null set. Using these, and any working models of their choice, the analyst interactively creates an ordering by initially observing only masked $p$-values, and progressively unmasking them one at a time. The analyst can update their prior knowledge and/or structural constraints and/or working model in the middle of the process (when only some hypotheses have been ordered and their $p$-values unmasked), thus intervening to change the rest of the ordering. It is important to note that \emph{even though an analyst is allowed to make subjective decisions at each step of the interaction, an algorithm can be deployed in place of the analyst}.
\end{itemize}

Since all our tests proceed sequentially in nature, accumulating evidence from one hypothesis at a time, the type-I error guarantee we achieve is that
\[
\mathbb{P}_0( \exists i \in \I: \text{the test stops and rejects } \mathcal{H_G}_0 \text{ after step } i) \leq \alpha,
\]
where $\mathbb{P}_0$ is the probability under the global null $\mathcal{H_G}_0$. They are judged based on their power,
\[
\mathbb{P}_1( \exists i \in \I: \text{the test stops and rejects } \mathcal{H_G}_0 \text{ after step } i),
\]
where $\mathbb{P}_1$ is the probability under some alternative in $\mathcal{H_G}_1$.
We remark that even though we formulate our tests in terms of a target type-I error level $\alpha$, there is an equivalent formulation in terms of creating a sequential ``always-valid'' $p$-value for the global null that is valid at any arbitrary stopping time. Section~\ref{sec:anytime} explicitly connects these two interpretations.

\subsection{Assumptions}

Instead of assuming that the marginal distribution of null $p$-values is exactly uniform, we relax it by allowing conservative $p$-values defined in two different ways. We either assume that (a) if the global null is true, all $p$-values are stochastically larger than uniform:
\begin{equation}
\label{cond:stoch_dom}
\text{If } \mathcal{H_G}_0 \text{ is true}, ~ \mathbb{P}(p_i \leq t) \leq t \text{ for all } t\in[0,1], i \in \I.
\end{equation}
or assume that (b) if the global null is true, all $p$-values are \emph{mirror-conservative}:
\begin{equation}
\label{cond:mirror_consv}
\text{If } \mathcal{H_G}_0 \text{ is true}, ~ f_i(a) \leq f_i(1 - a) \text{ for all } 0 \leq a \leq 0.5, i \in \I,
\end{equation}
where $f_i$ is the probability mass function of $p_i$ for discrete $p$-values or the density function otherwise.
Neither of the aforementioned conditions implies the other, though the former is more commonly made. Examples of mirror-conservative \pvalues include permutation $p$-values and one-sided tests of univariate parameters with monotone likelihood ratio~\citep{lei2018adapt}.
In the majority of the paper, it may be easier for the reader to pretend that the null $p$-values are exactly uniform for simplicity. Later in the paper, we explicitly demonstrate the distinct advantages of our tests for conservative $p$-values. We also assume that if the global null is true, the null \pvalues are independent of each other:
\[
\text{If } \mathcal{H_G}_0 \text{ is true}, ~ \{p_i\}_{i \in \I} \text{ are jointly independent.}
\]
This is also a common assumption; Fisher's test~\citep{fisher1992statistical} and Tukey's Higher Criticism~\citep{donoho2015special} are two other examples. Even though we are cognizant that independence is a strong assumption that only holds in some limited situations in practice (like meta-analysis), we wish to explore how much it can be exploited to design novel tests, for instance enabling the use of martingale techniques and ``masking'', as described soon. 

We remark that all aforementioned assumptions on the null $p$-values only need to hold under the global null. If the global null is not true, we do not require the null $p$-values (or the non-nulls) to have any particular marginal distribution or to satisfy any independence assumptions.

\subsection{Related work}

Our paper builds on and connects three distinct lines of work: classical work on global null testing, modern ideas on permitting interaction using $p$-value masking, and recent ideas on uniform martingale concentration inequalities. We discuss these separately below.

\paragraph{Global null testing.}
Most previous tests for the global null have been designed to work in the batch setting, and it continues to be an active area of research~\citep{ruger1978maximale, ruschendorf1982random, kost2002combining, owen2009karl, vovk2012combining}. Our work is most directly connected to tests which accumulate information as a sum, such as Fisher's and Stouffer's tests~\citep{stouffer1949american}.

There are many other global null tests like the Bonferroni method, Simes' test~\citep{simes1986improved}, and Higher Criticism, and our techniques do not apply to these. Importantly,
% \begin{quote}
\emph{we do not claim that our interactive martingale tests are more powerful than prior work in any universal sense, but instead, our goal is to expand the creative design space of new procedures that can involve a human in the loop and explore their potential benefits.}
% \end{quote}

\paragraph{Permitting interaction by masking the $p$-values.}
The motivation behind masking $p$-values is to permit interaction with an analyst, who may freely employ models, prior knowledge and intuition, without any risk of violating type-I error control. The main idea is to decompose each individual \pvalue $p_i$ into two parts, 
\begin{equation}
\label{default_decompse}
h(p_i) = 2\cdot1\{p_i < 0.5\} - 1
  \quad\mathrm{and}\quad 
g(p_i) = \min \{p_i, 1 - p_i\}.
\end{equation}
Here, $g(p_i)$ is called the \emph{masked $p$-value}, while $h(p_i)$ is called the \emph{missing bit} since it is either plus or minus one. The critical observation is that $h(p_i)$ and $g(p_i)$ are independent if $H_i$ is null ($p_i$ is uniformly distributed). Masking was introduced recently by Lei and Fithian~\cite{lei2018adapt} in the context of false discovery rate (FDR) control, and further generalized and extended in Lei, Ramdas and Fithian~\cite{lei2017star} for FDR control under structural constraints, and then followed by work on FWER control~\citep{duan2020familywise}. The underlying property of masking can be traced to the ``knockoff'' method by Barber and Cand{\`e}s~\citep{barber2015controlling, arias2017distribution}. In this paper, we show that masking is also useful for global null testing in structured settings, and permitting interaction with an insightful analyst can improve power (but it is impossible for any analyst to violate type-I error control).

\paragraph{Uniform martingale concentration inequalities.} 
All new test statistics in this paper are designed to be martingales under the global null. The type-I error control guarantees for our tests thus stem from utilizing \emph{uniform} martingale concentration inequalities. These ``boundary crossing'' inequalities are high probability statements about the behavior of the entire trajectory of the martingale. In fact, several of our martingales have increments which are either fair coin flips ($\pm 1$) or standard Gaussians, which are some of the most well studied objects in sequential analysis, especially through their natural connections to Brownian motion~\citep{siegmund1986boundary}. In this paper, we care about nonasymptotic guarantees on the type-I error, and hence we use some recent line-crossing inequalities~\citep{howard2020time} and new curve-crossing inequalities~\citep{howard2020time1} that are nonasymptotic generalizations of the law of the iterated logarithm, which goes back to the work by Robbins~\citep{robbins1970statistical} (see Appendix~\ref{apd:stou_bounds} for a detailed comparison). For a martingale $M_k$, these boundaries are denoted $u_\alpha(k)$ and satisfy
\[
\mathbb{P}(\exists k \in \mathbb{N}: M_k > u_\alpha(k)) \leq \alpha.
\]
In the next section, we provide the exact expressions for the $u_\alpha(k)$ that we use, which are chosen because they have similar qualitative behavior but tighter constants than earlier work, references to which may be found within the aforementioned papers.

\subsection{Outline}

To progressively build intuition, the \premt is described in Section~\ref{sec:premt} followed by the \postmt in Section~\ref{sec:postmt}. In Section~\ref{sec:imt}, the general \imt is presented. For all these methods, the type-I error guarantees are presented immediately after the algorithms. 
% We then compare the detection power of the \imt with the \mst and existing alternatives in both the batch setting and the online setting. 
However, power guarantees for all algorithms in the Gaussian sequence model are derived in Section~\ref{sec:power}. We then perform extensive simulations in Section~\ref{sec:ipm}. In Section~\ref{sec:robust}, we examine the robustness of our test to conservative nulls. Section~\ref{sec:anytime} explicitly describes how to interpret our tests as tracking an anytime-valid sequential $p$-value. Finally in Section~\ref{sec:other_decompose}, we discuss alternative ways of masking $p$-values. We end with a brief summary in Section~\ref{sec:dis}, and defer all proofs and additional experiments to the Appendix.

\section{The \premt} \label{sec:premt}

The \premt is not a single test, but instead, a general framework to extend the application of many classical methods that use the sum or product of transformed $p$-values, such as Stouffer's method~\citep{stouffer1949american} and Fisher's method~\citep{fisher1992statistical}, from the batch setting to the online setting. In this section, the ordering of hypotheses is fixed in advance by nature, or by the analyst using prior knowledge to place potential/suspected non-nulls early in the ordering.

\paragraph{The general framework.}
Our test takes the following general form:
\begin{equation}
\label{test:premt_batch}
   \text{Reject the null if } \sum_{i  = 1}^k f(p_i) \geq u_\alpha(k), \text{ for some } k \in \I,
\end{equation}
where $f(\cdot)$ is some transformation of the \pvalue, and $\{u_\alpha(k)\}_{k \in \mathbb{N}}$ is a boundary sequence depending on the choice of $f$. 
The boundary is determined by first establishing that the sequence $\{\sum_{i  = 1}^k f(p_i)\}_{k \in \mathbb{N}}$ is a martingale under the global null (after appropriate centering if needed). We then characterize the tail behavior of the martingale increments $f(p_i)$ for a uniform $p$-value. Finally, to control the type-I error, we employ recent results  ~\citep{howard2020time,howard2020time1} which provide boundaries under parametric and nonparametric conditions on the increments, such that with high probability the entire trajectory of the martingale is contained within the boundary. 

The preordered martingale test improves on its original batch version in two aspects. First, the applicability of the original test is extended from the batch setting to the online setting. Second, in the case of sparse non-nulls, the martingale version greatly improves the detection power if the non-nulls appear early on. As an example of converting a classic test to its martingale version, we develop the martingale Stouffer test below. Two more examples can be found in Appendix~\ref{apd:fisher} for a martingale Fisher test using $f(p_i)=-2\log p_i$,  and Appendix~\ref{apd:chi-square} for a martingale chi-square test using $f(p_i)=\left[\Phi^{-1}(1 - p_i)\right]^2$.
%An additional example involving a martingale Fisher test using $f(p_i)=-2\log p_i$ can be found in Appendix~\ref{apd:fisher}.

\paragraph{An example: martingale Stouffer test (MST).} 
The batch test by Stouffer~\cite{stouffer1949american} calculates $S_n = \sum_{i=1}^n \Phi^{-1}(1 - p_i)$, where $\Phi(\cdot)$ denotes the standard Gaussian CDF. Since the distribution of $S_n$ under the global null is $\mathcal{N}(0,n)$, the batch test rejects when $S_n > \sqrt{n} \Phi^{-1}(1-\alpha)$. To design the martingale test, simply observe that $\{S_k\}_{k \in \I}$ is a martingale whose increments $f(p_i)  = \Phi^{-1}(1 - p_i)$ are standard Gaussians under the global null. There are several types of uniform boundaries $u_\alpha(k)$ for a Gaussian increment martingale, and here we give two examples: linear and curved. The first boundary (transformed from equation~(2.29) in Howard et al. \cite{howard2020time}), which can be derived from the Gaussian sequential probability ratio test~\citep{wald1945sequential}, grows linearly with time. Specifically, the test rejects the global null if
\begin{equation}
\label{test:mst_lin}
    \exists k \in \mathbb{N}: \sum_{i=1}^k \Phi^{-1}(1-p_i) \geq \sqrt{\frac{-\log\alpha}{2m}}k + \sqrt{\frac{-m\log\alpha}{2} },
\end{equation}
where $m \in \mathbb{R}_+$ is a tuning parameter that determines the time at which the bound is tightest: a larger $m$ results in a lower slope but a larger offset, making the bound loose early on. We suggest a default value of $m=n/4$ if the number of hypotheses $n$ is finite, but it should be chosen based on the time by which we expect to have encountered most non-nulls (if any). In contrast, the \mst with a curved boundary (equation~(2) in Howard et al. \cite{howard2020time1}) rejects the global null if
\begin{equation}
\label{test:mst_curve}
    \exists k \in \mathbb{N}: \sum_{i=1}^k \Phi^{-1}(1-p_i) \geq 1.7\sqrt{k\left(\log\log(2 k) + 0.72 \log \frac{5.2}{\alpha}\right)}.
\end{equation}
These bounds differ in the quota of error budget distributed to every step $k = 1,2,\ldots$, which can influence the detection power of the martingale test as it is more likely to exceed a tighter bound. Curved bounds have a slower growth rate $O(\sqrt{k\log\log k})$ than the linear bounds, indicating a tighter bound for large enough $k$, but they are usually looser for small $k$. Comparisons of the test with several linear and curved boundaries are given in Appendix~\ref{apd:stou_bounds}. Generally, the linear bound is recommended for the batch setting, and the curved bound for the online setting. 

The \mst with either boundary controls the type-I error, if under the global null the sum $\{\sum_{i=1}^k \Phi^{-1}(1-p_i)\}_{k \in \mathbb{N}}$ is stochastically upper bounded by a martingale with standard Gaussian increments, which holds under our assumption that the null \pvalues are stochastically larger than uniform, as stated below. 
\begin{theorem}
\label{thm:mst_error}
If the $p$-values are independent and stochastically larger than uniform under the global null, then
the \mst with linear boundary~\eqref{test:mst_lin} or curved boundary~\eqref{test:mst_curve} controls the type-I error at level $\alpha$.
\end{theorem}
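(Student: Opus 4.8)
The plan is to reduce the conservative-null case to the exactly-Gaussian case by a stochastic-domination argument, and then invoke the cited uniform boundary results essentially as a black box. First I would record the key distributional fact about the increments. Writing $X_i := \Phi^{-1}(1-p_i)$, the assumption $\mathbb{P}(p_i \le t)\le t$ forces $X_i$ to be stochastically dominated by a standard Gaussian: for any $x$, $\mathbb{P}(X_i \le x) = \mathbb{P}(p_i \ge 1-\Phi(x)) \ge 1-(1-\Phi(x)) = \Phi(x)$, so $X_i \preceq Z$ where $Z\sim\mathcal{N}(0,1)$. Thus making the null \pvalues conservative can only shift the increments $X_i$ downward relative to the exactly-uniform case.

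Next I would upgrade this marginal domination to a pathwise, uniform-in-$k$ domination by a coupling. Since the $p_i$ are independent under the global null, I can realize on a common probability space independent standard Gaussians $Z_1,Z_2,\ldots$ together with copies of the $X_i$ such that $X_i \le Z_i$ almost surely for every $i$, via the monotone quantile coupling driven by a single uniform per coordinate (because the CDF of $X_i$ dominates $\Phi$, the two quantile functions are ordered) and then tensorized over the independent coordinates. Summing the coupled increments gives $S_k = \sum_{i=1}^k X_i \le \sum_{i=1}^k Z_i =: M_k$ simultaneously for all $k$, where $\{M_k\}$ is a genuine martingale with i.i.d.\ standard Gaussian increments, and the pathwise domination yields the event inclusion $\{\exists k: S_k \ge u_\alpha(k)\} \subseteq \{\exists k: M_k \ge u_\alpha(k)\}$.

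It then remains to apply the boundary guarantees to the clean Gaussian martingale $M_k$. For the linear boundary this is just Ville's inequality: $\exp(\lambda M_k - \lambda^2 k/2)$ is a nonnegative martingale for each fixed $\lambda\ge 0$, so $\mathbb{P}(\exists k: M_k \ge (\lambda/2)k + (-\log\alpha)/\lambda)\le\alpha$, and substituting $\lambda=\sqrt{-2\log\alpha/m}$ reproduces exactly the boundary in~\eqref{test:mst_lin} (equivalently, equation~(2.29) of Howard et al.~\cite{howard2020time}). For the curved boundary I would cite equation~(2) of Howard et al.~\cite{howard2020time1} directly, which gives $\mathbb{P}(\exists k: M_k \ge u_\alpha(k))\le\alpha$ for the curve in~\eqref{test:mst_curve}. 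Chaining with the event inclusion from the previous step yields $\mathbb{P}_0(\exists k: S_k \ge u_\alpha(k))\le\alpha$ for both boundaries.

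The main obstacle is the step that turns a marginal stochastic inequality into a statement about the whole trajectory: one must ensure the coupling delivers $X_i \le Z_i$ for all $i$ at once while keeping the $Z_i$ independent, since both the martingale property and the boundary constants hinge on the exact Gaussian increment structure. A self-contained alternative that sidesteps the explicit coupling is to verify the one-sided sub-Gaussian condition directly: because $x\mapsto e^{\lambda x}$ is increasing for $\lambda\ge 0$, stochastic domination gives $\mathbb{E}[e^{\lambda X_i}\mid\mathcal{F}_{i-1}] = \mathbb{E}[e^{\lambda X_i}] \le \mathbb{E}[e^{\lambda Z}] = e^{\lambda^2/2}$, so $\exp(\lambda S_k - \lambda^2 k/2)$ is a supermartingale for every $\lambda\ge 0$; this is precisely the sub-Gaussian hypothesis under which the boundaries of~\cite{howard2020time,howard2020time1} are established, and it applies to $S_k$ without any centering.
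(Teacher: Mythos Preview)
Your proposal is correct and follows essentially the same approach as the paper's proof: reduce to the exactly-Gaussian case via stochastic domination of the increments $\Phi^{-1}(1-p_i)$ by i.i.d.\ standard Gaussians, then invoke the uniform boundary results for the Gaussian-increment martingale. The paper's argument is simply a one-line compression of your coupling step (it writes $\mathbb{P}_0(\exists k:\sum \Phi^{-1}(1-p_i)\ge u_\alpha(k))\le \mathbb{P}(\exists k:\sum G_i\ge u_\alpha(k))\le\alpha$ without spelling out the coupling or the Ville/sub-Gaussian derivation), so your write-up is strictly more detailed, and your alternative MGF-based route is a valid and slightly cleaner way to land directly in the hypothesis of the cited boundary results.
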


The next natural question is what we can prove about the detection power of the aforementioned tests. While this is treated more formally later in the paper, for now it suffices to say that the power of the \mst relies on a good preordering that places non-nulls up front. If such prior knowledge is not available (and say the preordering is completely random, or even adversarial), then the preordered martingale tests can have poor power. This motivates the development of methods based on data-adaptive orderings, as treated next.

\section{Adaptive and interactive methods} 
\label{sec:adp_interactive}
To develop intuition progressively, we first introduce a martingale test whose ordering depends on the $p$-values in Section~\ref{sec:postmt}, and extend it in Section~\ref{sec:imt} to an interactive test, whose ordering can additionally depend on side information (covariates) and human interaction.

\subsection{The \postmt (AMT)} \label{sec:postmt}
If we naively use the $p$-values to both determine the ordering as well as form the test statistic, the resulting ``double-dipped'' sequence of test statistics does not form a martingale under the global null. In order to allow using the $p$-value for determining the ordering, we use a recent idea called masking, as briefly mentioned in the introduction.
Each \pvalue $p_i$ is decomposed as
\begin{equation*}
h(p_i) = 2\cdot1\{p_i < 0.5\} - 1,
   \quad\quad 
g(p_i) = \min \{p_i, 1 - p_i\},
\end{equation*}
where $h(p_i)$ is called the missing bit, and $g(p_i)$ is called the masked $p$-value. The masked $p$-values are used to create the ordering (by placing smaller ones up front) while the test statistic just sums the missing bits $h(p_i)$ in that order. Since $h(p_i)$ and $g(p_i)$ are independent under the global null, sorting by the $g(p_i)$ values results in a uniformly random ordering, and the sum of $h(p_i)$ is just a random walk of independent coin flips. Formally, define the set $M_k$ as the first $k$ hypotheses ascendingly ordered by $g(p_i)$. Our test rejects $\mathcal{H_G}_0$ if
\begin{equation*}
\label{test:postmt_batch}
    \exists k \in \{1,\ldots,n\}: \sum_{i \in M_k} h(p_i) \geq u_\alpha(k),
\end{equation*}
where the upper bound $u_\alpha(k)$ is the same as for the \mst in equations~\eqref{test:mst_lin} and~\eqref{test:mst_curve}, since the sequence of sums $\sum_{i \in M_k} h(p_i)$ is also a martingale with 1-subGaussian increments under the global null. The adaptively ordered martingale test in the batch setting is summarized below.

\begin{algorithm}[H]
\label{alg:post_mt_batch}
\SetAlgoLined
\textbf{Input:}
\pvalues $(p_i)_{i = 1}^n$,
target type-I error rate $\alpha$\;
\textbf{Procedure:} Initialize $M_0 = \emptyset$\;
\For{$k=1,\ldots,n$ } {
$M_k = M_{k-1} \cup \argmin_{i \notin M_{k-1}} g(p_i) $\; 

\uIf{$\sum_{i \in M_k} h(p_i) > u_\alpha(k)$}{
   reject the global null and stop\;
   }
}
\caption{The \postmt (batch setting)}
\end{algorithm}

The \postmt in the online setting proceeds slightly differently: it screens the hypotheses by $g(p)$ so that only promising non-nulls enter the set $M_k$.  Specifically, given a threshold parameter $c$ (such as $0.05$), the set $M_k$ expands at time $t$ only if $g(p_t) < c$, as summarized below.

\begin{algorithm}[H]
\label{alg:post_mt_online}
\SetAlgoLined
\textbf{Input:}
target type-I error rate $\alpha$, threshold parameter $c$\;
\textbf{Procedure:} Initialize $M_0 = \emptyset$, size $k=0$\;
\For{$t=1,\ldots,$ } {
$p_t$ is revealed by nature\;
\uIf{$g(p_t) < c$}{$k \leftarrow k+1$, $M_k = M_{k-1} \cup \{t\}$\;} 
\uIf{$\sum_{i \in M_k} h(p_i) > u_\alpha(k)$}{
   reject the global null and stop\;
   }
}
\caption{The \postmt (online setting)}
\end{algorithm}

The \postmt controls type-I error if under the global null, all \pvalues are \textit{mirror-conservative} \eqref{cond:mirror_consv}, as formally stated below.

\begin{theorem}
\label{thm:adp_error}
If the $p$-values are independent and mirror-conservative under the global null, then the \postmt controls the type-I error at level $\alpha$.
\end{theorem}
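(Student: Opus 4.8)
The plan is to use the masking decomposition to decouple the data-dependent ordering from the running test statistic, and then to reduce the problem to the uniform $1$-subGaussian boundary-crossing guarantee already exploited in Theorem~\ref{thm:mst_error}. The key structural fact is that in both Algorithm~\ref{alg:post_mt_batch} and Algorithm~\ref{alg:post_mt_online} the ordering (and, in the online case, the inclusion rule $g(p_t)<c$) is a deterministic function of the masked $p$-values alone, whereas the statistic accumulates only the missing bits $h(p_i)$.

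First I would condition on the entire collection $G = \{g(p_i)\}_{i\in\I}$ of masked $p$-values. Since $(g(p_i),h(p_i))$ is a measurable reparametrization of $p_i$ and the $p_i$ are jointly independent under the global null, the pairs are jointly independent, so conditionally on $G$ the missing bits $\{h(p_i)\}$ remain independent $\pm1$ variables. Moreover, given $G$ the sets $M_k$ are fully determined, so the statistic at step $k$ equals $\sum_{j=1}^{k} h(p_{\sigma(j)})$ for a permutation $\sigma$ that is itself a deterministic function of $G$.

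Second I would invoke mirror-conservativeness to bound the conditional drift. Conditioning on $g(p_i)=a\in[0,0.5]$ confines $p_i$ to $\{a,1-a\}$ with $\mathbb{P}(h(p_i)=+1\mid g(p_i)=a) = f_i(a)/(f_i(a)+f_i(1-a)) \leq 1/2$ by~\eqref{cond:mirror_consv}, hence $\mathbb{E}[h(p_i)\mid G]\leq 0$. Therefore, conditional on $G$, the partial sums $\{\sum_{i\in M_k} h(p_i)\}_k$ form a supermartingale with increments in $[-1,1]$, and (by coupling each biased bit to a fair coin) they are stochastically dominated by a simple symmetric random walk. Fair $\pm1$ steps are $1$-subGaussian, exactly the regime for which the boundaries $u_\alpha(k)$ in~\eqref{test:mst_lin} and~\eqref{test:mst_curve} were constructed, so the boundary-crossing inequality gives $\mathbb{P}(\exists k:\sum_{i\in M_k} h(p_i) > u_\alpha(k)\mid G)\leq\alpha$ for every realization of $G$; taking expectation over $G$ then removes the conditioning and yields the unconditional type-I error bound.

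I expect the first step to be the crux: justifying that conditioning on the masked $p$-values simultaneously freezes the adaptive ordering into a fixed permutation while preserving both independence and nonpositive drift of the bits being summed. This is precisely where masking is essential — without the conditional independence of $h(p_i)$ from $g(p_i)$, sorting by $g$ would correlate with the very increments forming the statistic and the (super)martingale property would fail. After this reduction nothing new is required: supermartingality (rather than the exact martingality of the uniform case) can only decrease the crossing probability, and the online screening variant is covered verbatim because the rule $g(p_t)<c$ is $G$-measurable.
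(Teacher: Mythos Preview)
Your proposal is correct but takes a genuinely different route from the paper. The paper does not prove Theorem~\ref{thm:adp_error} directly; instead it establishes the more general Theorem~\ref{thm:imt_error} for the \imt by working with the growing filtration $\mathcal{F}_{k-1}=\sigma\bigl((x_i,g(p_i))_{i=1}^n,(p_i)_{i\in M_{k-1}}\bigr)$ and verifying the supermartingale property increment by increment via $\mathbb{E}[h(p_{i_k^*})\mid\mathcal{F}_{k-1}]\le\max_{i\notin M_{k-1}}\mathbb{E}[h(p_i)\mid g(p_i)]\le0$, then notes that the \postmt is a special case. Your approach instead conditions once on the entire collection $G=\{g(p_i)\}_{i\in\I}$, which for the \postmt freezes the ordering into a deterministic permutation and reduces the statistic to a sum of independent biased coins dominated by a symmetric random walk. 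This is more elementary---no filtration machinery is needed---and arguably cleaner for Theorem~\ref{thm:adp_error} specifically. The tradeoff is that your argument does \emph{not} extend to the \imt of Section~\ref{sec:imt}: there the choice $i_k^\star$ is allowed to depend on previously \emph{unmasked} bits $(h(p_i))_{i\in M_{k-1}}$, so conditioning on $G$ alone no longer determines the ordering, and the step-by-step filtration proof is genuinely required. In short, your reduction buys simplicity at the cost of generality; the paper's filtration argument buys a single proof covering both theorems.
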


In the batch setting, the adaptive ordering (as realized by the nested sequence $\{M_k\}$) is fully determined at the start of the procedure by sorting the masked $p$-values. In the next section, we demonstrate that in the presence of independent covariates $x_i$ for each hypothesis and side information such as structural constraints on potential rejected sets, it is actually beneficial to \emph{interactively} determine the ordering one step at a time with a human-in-the-loop, who may be guided by the masked $p$-values as well as intuition and working models.

\subsection{The \imt (IMT)} \label{sec:imt}

The \imt also applies to both batch and online settings. We first describe the method in the batch setting with side information and structural constraints, where the power of interactivity is more compelling.

To begin, first suppose that in addition to the $p$-values, the scientist also has some side information about each hypothesis available to them in the form of covariates $x_i$. For example, if the hypotheses are arranged in a rectangular grid, then $x_i$ could be the coordinates on the grid for hypothesis $i$ (examples in Section~\ref{sec:sim_cluster}). We then suppose that the scientist also has some prior knowledge or intuition about what structural constraints the non-nulls would have, if the global null is false. For example, perhaps the scientist thinks that the non-nulls (if any) would be clustered on the grid, themselves forming a rectangular shape (of some size, at some location). Our main assumption about the covariates is:
\[
    \text{Under the global null, $x_i \perp p_j$ for all $i, j \in \I$. }
\]
This is a common assumption for tests that incorporate covariate information, such as Independent Hypothesis Weighting \citep{ignatiadis2016data}, AdaPT \citep{lei2018adapt}, and STAR \citep{lei2017star}. In fact, because the aforementioned methods aim at error control of more stringent metrics such as FDR and FWER, their assumptions are stronger in the sense that the independence between $x_i$ and $p_i$ is required for the hypotheses that are truly null even when the global null is not true (i.e., there exist non-nulls). Our interactively ordered martingale test satisfies the following two properties: (a)~if the global null is true, the type-I error is controlled, regardless of what the scientist thinks or acts, (b)~if the global null is false, and the prior knowledge and/or structural constraints are accurate (or somewhat so), then the power of the test is high.
The interactive test proceeds as follows:
\begin{itemize}
    \item At the beginning, all covariates and masked $p$-values $(x_i, g(p_i))_{i \in \I}$ are revealed to the scientist, while only the missing bits $(h(p_i))_{i \in \I}$ remain hidden. We initialize $M_0=\emptyset$.
    \item The scientist repeats the following at each time step $k \geq 1$: they choose a promising hypothesis $i^\star_k$ from $[n] \backslash M_{k-1}$, and update ${M_k = M_{k-1} \cup \{i^\star_k\}}$.
    \item On doing so, they learn $h(p_{i^\star_k})$, and thus keep track of $S_k := \sum_{i \in M_{k}} h(p_i)$. If $S_k > u_\alpha(k)$ for any $k$, they stop and reject the global null. 
\end{itemize}
Type-I error control is essentially guaranteed because regardless of how the scientist acts at each step, if the global null is true, all the $g(p_i)$ values and the revealed $h(p_i)$ values do not provide any information about the still hidden missing bits, and thus $S_k$ is a martingale.

When the global null is false, we expect the power to be high because of the following reasons. First, the scientist may use any working model of their choice (or none at all) to guide their choice at each step. For example, they can attempt to estimate the likelihood of being non-null for each hypothesis $i$ at each step $k$, denoted as $\pi_i^{(k)}$ (posterior probability of being non-null). In fact, as they learn the missing bits at each step, they can change their model or update their prior knowledge based on the observed $p$-values thus far. The information available to the scientist at the end of step $k$ is denoted by the filtration
\[
\mathcal{F}_k := \sigma\left((x_i, g(p_i))_{i = 1}^n, (p_i)_{i \in M_k} \right),
\]
and thus the choice $i^\star_k$ is predictable, meaning it is measurable with respect to~$\mathcal{F}_{k-1}$. The general interactive framework is summarized below as Algorithm~\ref{alg:imt}. 

\begin{algorithm}[H]
\label{alg:imt}
\SetAlgoLined
\textbf{Information available to the scientist:} side covariate information and/or structural constraints, and masked \pvalues $\mathcal{F}_0 := \sigma((x_i, g(p_i))_{i = 1}^n)$,
target error $\alpha$\;
\textbf{Procedure:} Initialize $M_0 = \emptyset$\;
\For{$k=1,\ldots,n$ } {
Using $\mathcal{F}_{k-1}$, pick any $i^\star_k \in [n] \backslash M_{k-1}$. Update
$M_k = M_{k-1} \cup \{i^\star_k\}$\;
Reveal $h(p_{i^\star_k})$ and update $\mathcal{F}_k := \sigma\left((x_i, g(p_i))_{i = 1}^n, (p_i)_{i \in M_k}\right)$\;
\uIf{$\sum_{i \in M_{k}} h(p_i) > u_\alpha(k)$}{
   reject the global null and exit\;
   }
}
\caption{The \imt (batch setting)}
\end{algorithm}

The \imt in the online setting screens the hypotheses based on information in $\mathcal{F}_{t-1}$ such that $p_t$ enters the set $M_k$ only when it is a promising non-null, as described in Algorithm~\ref{alg:imt_online}.

\begin{algorithm}[H]
\label{alg:imt_online}
\SetAlgoLined
\textbf{Procedure:} Input target error $\alpha$. Initialize $M_0 = \emptyset$, size $k = 0$\;
\For{$t=1,\ldots,$ } {
\textbf{Information available to the scientist:} side covariate information and/or structural constraints, and (masked) \pvalues $\mathcal{F}_{t-1} := \sigma((x_i, g(p_i))_{i = 1}^{t}, (p_i)_{i = 1}^{t - 1})$\;
Using $\mathcal{F}_{t-1}$, decide whether hypothesis $t$ should be included in $M_{k-1}$\;
\uIf{include hypothesis $t$}{$k \leftarrow k+1$, $M_k = M_{k-1} \cup \{t\}$\;
} 
\uIf{$\sum_{i \in M_k} h(p_i) > u_\alpha(k)$}{
   reject the global null and stop\;
   }
}
\caption{The \imt (online setting)}
\end{algorithm}

The aforementioned algorithms (or frameworks) comes with the following error guarantee, regardless of the choices made by the scientist.

\begin{theorem}
\label{thm:imt_error}
If under $\mathcal{H_G}_0$, the \pvalues are \textit{mirror-conservative} and are independent of each other and of the covariates $x_i$, then the \imt controls the type-I error at level~$\alpha$.
\end{theorem}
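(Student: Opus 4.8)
The plan is to show that, under $\mathcal{H_G}_0$, the running sum $S_k = \sum_{i \in M_k} h(p_i)$ is a sub-Gaussian process with respect to the filtration $\{\mathcal{F}_k\}$ with unit intrinsic time $V_k = k$, and then to invoke the uniform boundary results of Howard et al.~\cite{howard2020time, howard2020time1}, which define $u_\alpha(k)$ precisely for such processes. The entire argument rests on a single conditional moment-generating-function bound for the masked increments, which is what decouples the adaptive, human-driven choice of ordering from the still-hidden missing bits. I would first record the predictability structure: since $i^\star_k$ is picked using only $\mathcal{F}_{k-1}$, the index $i^\star_k$ is $\mathcal{F}_{k-1}$-measurable, so $M_k$ is adapted, and $S_k$ is $\mathcal{F}_k$-adapted with increment $S_k - S_{k-1} = h(p_{i^\star_k})$.

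The key lemma is conditional sub-Gaussianity: for every $\lambda \geq 0$,
\[
\mathbb{E}\!\left[e^{\lambda h(p_{i^\star_k})} \mid \mathcal{F}_{k-1}\right] \le e^{\lambda^2/2} \quad \text{under } \mathcal{H_G}_0.
\]
To prove it, I would condition on the $\mathcal{F}_{k-1}$-measurable event $\{i^\star_k = j\}$; on this event $j \notin M_{k-1}$, so $p_j$ has not been unmasked and $\mathcal{F}_{k-1}$ reveals $p_j$ only through $g(p_j)$ (alongside $x_j$ and the data for other hypotheses). Because the null \pvalues are jointly independent and each $x_i$ is independent of every $p_j$, the conditional law of $p_j$ given $\mathcal{F}_{k-1}$ coincides with its law given $g(p_j)$ alone. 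Given $g(p_j) = v \in [0, 1/2]$, the missing bit $h(p_j)$ is two-valued with $\mathbb{P}(h(p_j) = +1 \mid g(p_j)=v) = f_j(v) / (f_j(v) + f_j(1-v)) =: q \le 1/2$ by mirror-conservativeness~\eqref{cond:mirror_consv}. A direct computation gives $\mathbb{E}[e^{\lambda h(p_j)} \mid g(p_j)=v] = q e^{\lambda} + (1-q) e^{-\lambda}$, which is nondecreasing in $q$ for $\lambda \ge 0$ and therefore at most $\cosh(\lambda) \le e^{\lambda^2/2}$. Averaging over $g(p_j)$ and summing over the disjoint events $\{i^\star_k = j\}$ yields the claimed bound.

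Given the lemma, for each fixed $\lambda \ge 0$ the process $L_k^\lambda := \exp(\lambda S_k - \lambda^2 k/2)$ is a nonnegative supermartingale with $L_0^\lambda = 1$, since $\mathbb{E}[L_k^\lambda \mid \mathcal{F}_{k-1}] = L_{k-1}^\lambda \, e^{-\lambda^2/2}\, \mathbb{E}[e^{\lambda h(p_{i^\star_k})}\mid \mathcal{F}_{k-1}] \le L_{k-1}^\lambda$. This is exactly the sub-Gaussian (sub-$\psi$ with $\psi(\lambda)=\lambda^2/2$) condition with $V_k = k$ under which the line-crossing boundary~\eqref{test:mst_lin} and the curved boundary~\eqref{test:mst_curve} are constructed, so invoking those results gives $\mathbb{P}_0(\exists k: S_k > u_\alpha(k)) \le \alpha$, which is the desired type-I error guarantee. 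The identical argument—where the screening step only determines which hypotheses enter $M_k$ while the increment law is unchanged—covers the online Algorithm~\ref{alg:imt_online}, and specializing the ordering rule to sorting by $g(p_i)$ recovers Theorem~\ref{thm:adp_error} for the \postmt.

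I expect the main obstacle to be the conditional MGF lemma, specifically the step reducing conditioning on the rich, interactively generated $\sigma$-algebra $\mathcal{F}_{k-1}$ (which encodes all covariates, structural constraints, working models, and previously unmasked \pvalues) down to conditioning on the single masked value $g(p_{i^\star_k})$. This reduction is what makes interaction \emph{free}: it holds precisely because, under the global null, independence of the \pvalues from one another and from the covariates guarantees that no amount of side information or human cleverness can leak information about the sign-bit $h(p_{i^\star_k})$ beyond what $g(p_{i^\star_k})$ already fixes. The remaining pieces—the elementary inequality $\cosh(\lambda) \le e^{\lambda^2/2}$ and the monotonicity in $q$ supplied by mirror-conservativeness—are routine, and the passage from the exponential supermartingale to the uniform boundary is a direct citation.
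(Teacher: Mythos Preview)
Your proof is correct and follows essentially the same approach as the paper: reduce conditioning on $\mathcal{F}_{k-1}$ to conditioning on $g(p_{i^\star_k})$ via independence, use mirror-conservativeness to get $q\le 1/2$, and conclude sub-Gaussianity of the increments so that the uniform boundaries of Howard et al.\ apply. The only cosmetic difference is that you package the argument directly through the conditional MGF bound, whereas the paper first notes the supermartingale property and then invokes sub-Gaussianity via ``the increment is stochastically smaller than a Rademacher''; your formulation is arguably the cleaner route to citing the sub-$\psi$ framework.
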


Note that there is no requirement whatsoever on the null or non-null $p$-values (i.e., $p$-values from the hypotheses that are truly non-null) when the global null is false. As before, note that under the global null, the missing bits are random fair coin flips, and the masked $p$-values are uniform on $[0,0.5]$ and completely uninformative about the missing bit. However, under the alternative, the true signals have very small masked $p$-values (say 0.01, 0.003, etc.) and along with covariate information, one may be able to infer that the missing bit is more likely to be +1 and thus include it in the ordering. Continuing the grid example from the start of this section, by revealing all but one bit per $p$-value at the start of the procedure, the scientist can possibly notice if \emph{small} masked $p$-values are randomly scattered or clustered on the grid. 

\begin{remark}
For any particular setup, like our example of a grid with a cluster of signals, it may be possible to design a better global null test that is perfectly suited for that setting. Hence, we do not claim that our interactive method is the right test to use in all problem setups. Its main advantage is its generality: instead of having to design a new test for each situation (trying to figure out how to optimally combine prior knowledge, structural constraints and covariates from scratch), our general framework provides a simple and flexible alternative.
\end{remark}
 
The correctness of the test (proof in Appendix~\ref{apd:imt_proof}) hinges on one bit from each $p$-value being hidden from the scientist. Once this protocol has been run once, and all $p$-values have been unmasked, the procedure obviously cannot be run a second time from scratch. In other words, our interactive setup does not prevent these and related forms of $p$-hacking. This is similar to the traditional offline setup, where it is not allowed to pick the global null test after observing the $p$-values and guessing which test will have the highest power to reject, and if scientists do this anyway and report only the final finding, we would have no way to know whether such inappropriate double-dipping has occurred.

It is worth remarking on the main disadvantage of such a test, relative to (say) the martingale Stouffer test introduced earlier. The interactive test statistic is a sum of coin flips (missing bits) -- no matter how strong the signal might be, the interactive test statistic can only increase by one at most. On the other hand, the martingale Stouffer test adds up Gaussians, and if there is a strong signal (very small $p$-value), it can stop very early. If a relatively good prior ordering is known to the scientist, the martingale Stouffer test should be preferred. However, if the prior knowledge is not in the form of an ordering, but some intuition about how the covariates and $p$-values may be related or what type of structure the non-nulls may have (if any), then the interactive test can be much more powerful.

The above framework leaves the specific strategy of expanding $M_k$ unspecified, allowing much flexibility. Now, we give one example of how $i^\star_k$ can be chosen based on the available information $\mathcal{F}_k$. One straightforward choice for $i^\star_k$ is the hypothesis not in $M_k$ with the highest posterior probability of being non-null, computed with the aid of a working model, like the Bayesian two groups model, where each \pvalue $p_i$ is drawn from a mixture of a null distribution $F_0$ with probability $1 - \pi_i$ and an alternative distribution $F_1$ with probability $\pi_i$:
\begin{equation}
\label{eq:p_model}
    p_i \sim (1-\pi_i) F_0  +  \pi_i F_1.
\end{equation}
For example, we can choose $F_0$ as a uniform and $F_1$ as a beta distribution. We may further posit a working model that treats $\pi_i$ as a smooth function of $x_i$. The masked \pvalues $g(p_i)$ and the revealed missing bits in $\mathcal{F}_{k-1}$ can be used to infer the other missing bits using the EM algorithm (see Appendix~\ref{apd:em}). The missing bits that are inferred to be more likely +1 should be chosen, potentially in accordance with other structural constraints. Importantly, the type-I error is controlled regardless of the correctness of the working model or any heuristics to expand $M_k$.

\section{Power guarantees of non-interactive procedures} \label{sec:power}

This section is devoted to an analysis of the power of the \mst and the \postmt. It's hard to analyze the power for the \imt due to its flexible framework offered to the user: it can have high power if the user specifies a good interactive algorithm, and vice versa. Nevertheless, to demonstrate the advantages of the \imt, we present numerical results under structured non-nulls in the next section.

Our analysis includes power guarantees in the batch and online settings in a simple Gaussian setup. Specifically, we consider a simple multiple testing problem where each hypothesis is a one sided hypothesis on the mean value of a Gaussian. In this setting, the $i$-th null hypothesis is that a Gaussian has zero mean, and the alternative is that the Gaussian has a positive mean $\mu_i > 0$. 
% More formally, consider the following setup.

\begin{setting}
\label{set:simple}
We observe $Z_1,\ldots, Z_n$ where $Z_i \sim N(\mu_i, 1)$ and wish to distinguish the following hypotheses: 
\begin{align*}
\mathcal{H_G}_0&: \mu_i = 0 \text{ for all } i\in\I, \quad \text{versus } \\
\mathcal{H_G}_1&: \mu_i > 0 \text{ for some } i \in \I.
\end{align*}
\end{setting}

\noindent In the remainder of this section, we let $r_i := I(\mu_i > 0)$ indicate the non-null hypotheses. Although we compare the power of various tests in this relatively simple setting, we emphasize that our tests are more broadly applicable to general settings where the $p$-values are mirror-conservative under the null.

\begin{figure}[!ht]
    \centering
    \begin{subfigure}[t]{0.45\textwidth}
        \centering
        \includegraphics[width=1\linewidth]{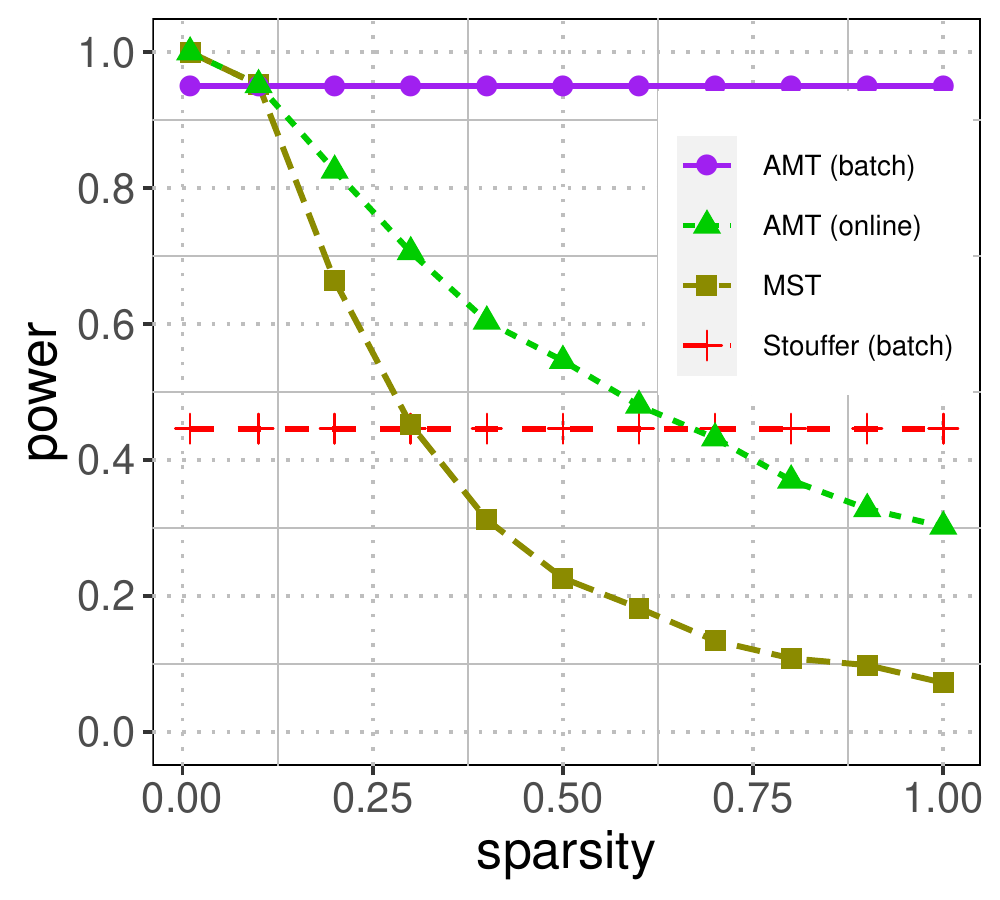}
        \caption{Power comparison in the batch setting when varying the prior ordering.
        Larger sparsity indicates a worse prior ordering. The AMT procedures (batch and online) adaptively alter the ordering and are more robust to bad quality of orderings than MST. Still, when the prior ordering is great, AMT has lower power than MST because the increments of AMT's test statistic are bounded by~$+1$. This phenomenon is mathematically predicted by Theorem~\ref{thm:power_batch} and Theorem~\ref{thm:power_batch_imt}.}
        \label{fig:reorder_batch}
    \end{subfigure}
    \hfill
    \begin{subfigure}[t]{0.45\textwidth}
        \centering
        \includegraphics[width=1\linewidth]{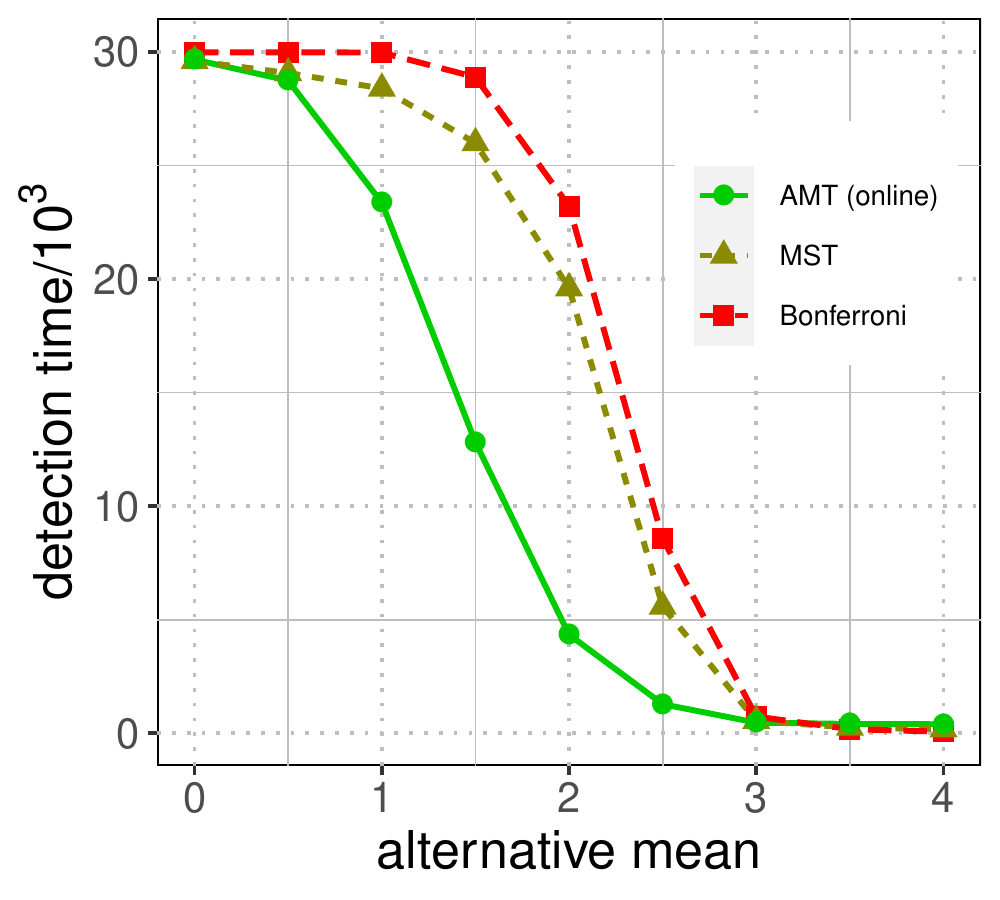}
        \caption{Number of hypotheses needed to reject the global null (detection time) in the online setting when varying the alternative mean $\mu$. Each hypothesis has the same probability of being non-null as $5\%$. The Bonferroni method cannot reject the global null unless $\mu$ is greater or equal than 3. AMT is the first to reject the global null when $\mu$ is small because it filters the hypotheses by masked \pvalues. This phenomenon is mathematically predicted by Theorem~\ref{thm:power_online} and
        Theorem~\ref{thm:power_online_imt}.}
        \label{fig:reorder_online}
    \end{subfigure}
    
    \caption{Illustrative simulations that compare the batch and online \mst (MST) and the \postmt (AMT) under Setting~\ref{set:simple}. All plots in this paper present the averaged power (in the batch setting) and averaged rejection time (in the online setting) over 500 repetitions, and the type-I error is $\alpha = 0.05$.}
    \label{fig:illustrative_power}
\end{figure}

With this setup in place, we now summarize the main results of this section.
\begin{itemize} 
    \item In Section~\ref{sec:power_batch}, we focus on the batch setting. In Theorem~\ref{thm:power_batch}, we compare the power of the \mst with its batch counterpart, showing that when a good a-priori ordering is used the \mst can have much higher power. Our next result, Theorem~\ref{thm:power_batch_imt}, studies the \postmt in the batch setting. The \postmt expands the \tset $M_k$ based on masked \pvalues, and tests the global null using the missing bits $h(p_i)$. We show that in cases when the signal strength is high, re-ordering by the masked \pvalues can significantly improve power of the resulting test by ensuring that promising hypotheses are considered early on with high-probability.
    
    \item In Section~\ref{sec:power_online} we turn our attention to the online setting. In Theorem~\ref{thm:power_online}, we study the power of a simple online Bonferroni test, and compare this in Theorem~\ref{thm:power_online_imt} with the power of the adaptively ordered martingale test. For the adaptively ordered martingale test, we study the role of the threshold parameter $c$ in the power of the test, characterizing some of the tradeoffs involved in the choice of this parameter.
\end{itemize}
Figure~\ref{fig:illustrative_power} visualizes the above power comparisons by two simple simulations in batch and online settings\footnote{\href{https://github.com/duanby/interactive-martingale}{https://github.com/duanby/interactive-martingale} has R code to reproduce all plots.}. Details of the batch experiment appear next. 

We simulate $10^4$ hypotheses with 50 non-nulls ($\mu_i = 3$). The position of the non-nulls is encoded by a \textit{sparsity} parameter: the non-nulls are uniformly distributed in the first $\textit{sparsity}\cdot n$ hypotheses. Thus, larger sparsity indicates a poorer prior ordering (the non-nulls are more scattered), and it is expected to result in lower power for order-dependent methods. Indeed, we observe that: (1) two batch procedures (the \postmt (AMT) in the batch version and Stouffer's test) get the $p$-values as a set, ignoring the prior ordering, and hence their power is a flat line; (2) the online AMT and the MST procedure uses $p$-values in the ordering provided to it, and their power degrades as the quality of the ordering degrades; (3) the online AMT is less sensitive to bad prior ordering than the MST because it discards possible nulls based on the masked $p$-values; but it could still let in many nulls if the discarding threshold is not tight and most nulls are in front, leading to lower power under a worse prior ordering; (4)~overall, the AMT procedures (batch and online) are more robust to bad prior ordering than the MST because they adaptively alter the ordering.

Keep in mind that the simulations above and the power analysis below assume no prior knowledge, but the \imt has higher power when taking advantage of the non-null structure, as shown in Section~\ref{sec:ipm}.

\subsection{Power guarantees in the batch setting} \label{sec:power_batch}

We begin by studying the power of the batch, martingale and interactive martingale tests in the batch setting.

\paragraph{The batch Stouffer test and the \mst:}
The batch Stouffer test simply aggregates the observed $Z_1,\ldots,Z_n$ and compares this with an appropriate threshold. In contrast, the martingale Stouffer test \emph{sequentially} compares partial aggregations with an appropriate threshold.

To state our result compactly, for a specified value $\gamma$, we define:
\begin{align}
\label{eqn:ckgamma}
C_{k}^\gamma &:= 1.7\sqrt{\log\log(2k) + 0.72\log \frac{5.2}{\gamma}},
\end{align}
which corresponds to the curved boundary in~\eqref{test:mst_curve} divided by $\sqrt{k}$. This quantity grows very slowly with $k$ (at the rate of $\sqrt{\log \log (k)}$) and for all practical purposes can be treated as a ``constant''. We have the following result:

\begin{theorem}
\label{thm:power_batch}
\begin{enumerate}
    \item[(a)] {\bf Batch Stouffer Test (necessary+sufficient):} A necessary and sufficient condition for the batch Stouffer test with type-I error~$\alpha$ to have at least~$1 - \beta$ power is that
    \begin{align} \label{eq:batch_stouffer_power}
    \sum_{i=1}^n r_i \mu_i \geq (Z_\alpha + Z_\beta) n^{1/2},
    \end{align}
    where $Z_\alpha = \Phi^{-1}(1 - \alpha)$ is the $(1-\alpha)$-quantile of a standard Gaussian.
    
    \item[(b)] {\bf Martingale Stouffer Test (sufficient):} A sufficient condition for MST to have power at least $1-\beta$ is
    \begin{align} \label{eq:batch_mst_power}
    \exists~k \in \{1,\ldots,n\}, \quad \sum_{i=1}^k r_i \mu_i  \geq \left(C_{k}^\alpha + C_{k}^\beta\right)k^{1/2}.
    \end{align}

    \item[(c)] {\bf Martingale Stouffer Test (necessary):} If ${\alpha<1-\beta}$, the power of MST is less than $1-\beta$ whenever
    \begin{align*}
    \forall~k \in \{1,\ldots,n\}, \quad \sum_{i=1}^k r_i \mu_i \leq (C_{k}^\alpha - C_{k}^{1-\beta}) k^{1/2}.
    \end{align*}
\end{enumerate}
\end{theorem}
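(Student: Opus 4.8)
The plan is to reduce all three claims to properties of the partial sums $T_k := \sum_{i=1}^k \Phi^{-1}(1-p_i) = \sum_{i=1}^k Z_i$, which under Setting~\ref{set:simple} split as $T_k = \sum_{i=1}^k \mu_i + G_k$, where $G_k := \sum_{i=1}^k (Z_i - \mu_i)$ is a sum of $k$ i.i.d.\ standard Gaussians---exactly the null martingale of Theorem~\ref{thm:mst_error}. Since nulls have $\mu_i = 0$ we may freely write $\sum_{i=1}^k \mu_i = \sum_{i=1}^k r_i\mu_i$, matching the stated conditions. This decomposition turns every power statement into either a single Gaussian-tail computation or a boundary-crossing probability for $G_k$.

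For part~(a), the batch statistic is exactly $T_n \sim N(\sum_{i=1}^n r_i\mu_i,\, n)$, so its power equals $1 - \Phi\!\left(Z_\alpha - n^{-1/2}\sum_{i=1}^n r_i\mu_i\right)$. Requiring this to be at least $1-\beta$ and inverting $\Phi$ (using $\Phi^{-1}(\beta) = -Z_\beta$) yields \eqref{eq:batch_stouffer_power}; as every step is an equivalence, the condition is simultaneously necessary and sufficient. This part is routine.

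For part~(b), I would lower-bound the power $\mathbb{P}(\exists k \leq n: T_k \geq C_k^\alpha\sqrt{k})$ by the single-index probability $\mathbb{P}(T_k \geq C_k^\alpha\sqrt{k}) = 1 - \Phi\!\left(C_k^\alpha - k^{-1/2}\sum_{i=1}^k r_i\mu_i\right)$ for any fixed $k$. This is at least $1-\beta$ exactly when $k^{-1/2}\sum_{i=1}^k r_i\mu_i \geq C_k^\alpha + Z_\beta$. To reach the symmetric criterion \eqref{eq:batch_mst_power}, I would establish the auxiliary inequality $C_k^\gamma \geq Z_\gamma$ for all $k \geq 1$ and $\gamma \in (0,1)$, whence $C_k^\alpha + C_k^\beta$ dominates $C_k^\alpha + Z_\beta$ and the hypothesis supplies a qualifying $k$. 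This inequality is pure arithmetic: squaring, it compares $(C_k^\gamma)^2 = 2.89\log\log(2k) + 2.08\log\frac{5.2}{\gamma}$ against the sub-Gaussian tail bound $Z_\gamma^2 \leq 2\log\frac{1}{2\gamma}$, and bounding $\log\log(2k) \geq \log\log 2$ leaves a strictly positive gap once the constants $1.7^2$ and $0.72$ are inserted.

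For part~(c), I would instead upper-bound the power. Rewriting the rejection event as $\{\exists k \leq n: G_k \geq C_k^\alpha\sqrt{k} - \sum_{i=1}^k r_i\mu_i\}$ and invoking the hypothesis $\sum_{i=1}^k r_i\mu_i \leq (C_k^\alpha - C_k^{1-\beta})\sqrt{k}$ raises every crossing threshold for $G_k$ to at least $C_k^{1-\beta}\sqrt{k}$, so the power is at most $\mathbb{P}(\exists k: G_k \geq C_k^{1-\beta}\sqrt{k})$. This is exactly the boundary-crossing probability of the null random walk against the curved boundary \eqref{test:mst_curve} run at level $1-\beta$, which the uniform inequality of \cite{howard2020time1} bounds by $1-\beta$; the standing assumption $\alpha < 1-\beta$ guarantees $C_k^\alpha > C_k^{1-\beta}$ (as $C_k^\gamma$ decreases in $\gamma$), so the hypothesis is non-vacuous. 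The delicate point---and the step I expect to be the main obstacle---is upgrading this ``$\leq 1-\beta$'' to the \emph{strict} ``$< 1-\beta$'' claimed in the statement, since the concentration inequality only furnishes a weak bound over the infinite horizon. I would close this by exploiting the finite horizon $k \leq n$: paths whose first crossing occurs only at some time $k > n$ carry strictly positive probability (the walk can stay below the boundary through time $n$ and cross later), so the finite-horizon power is strictly smaller than the infinite-horizon bound $1-\beta$.
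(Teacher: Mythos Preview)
Your proposal is correct, and parts (a) and (c) match the paper's argument essentially step for step: center the partial sum, reduce to the null Gaussian walk, and invoke the boundary-crossing inequality at level $1-\beta$.

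For part (b) you take a slightly different route from the paper. The paper, after restricting to a single index $k^\star$, does \emph{not} compute the one-dimensional Gaussian tail; instead it uses symmetry of the Gaussian walk to apply the uniform boundary as a \emph{lower} deviation bound,
\[
\mathbb{P}\!\left(S_{k^\star}-\textstyle\sum_{i\leq k^\star} r_i\mu_i \geq -u_\beta(k^\star)\right)
\;\geq\;
\mathbb{P}\!\left(\forall k:\; S_k-\textstyle\sum_{i\leq k} r_i\mu_i \geq -u_\beta(k)\right)
\;\geq\; 1-\beta,
\]
so the sufficient condition $\sum_{i\leq k^\star} r_i\mu_i \geq (C_{k^\star}^\alpha+C_{k^\star}^\beta)\sqrt{k^\star}$ drops out directly with no arithmetic. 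Your approach instead computes $\mathbb{P}(S_{k^\star}\geq u_\alpha(k^\star))=1-\Phi(C_{k^\star}^\alpha-k^{-1/2}\sum r_i\mu_i)$ exactly and then bridges the resulting $Z_\beta$ to $C_{k^\star}^\beta$ via the numerical inequality $C_k^\gamma\geq Z_\gamma$. Both are valid; the paper's version is cleaner because it avoids the constants check, while yours is more elementary in that it never invokes a second uniform inequality. Your verification of $C_k^\gamma\geq Z_\gamma$ is fine (for $\gamma\geq 1/2$ it is immediate since $Z_\gamma\leq 0$, and for $\gamma<1/2$ the squared comparison you sketch goes through).

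On the strictness issue in (c): you noticed something the paper's proof does not address --- its argument only yields $\leq 1-\beta$. Your finite-horizon fix works, though the logic should be stated a touch more carefully: you show the finite-horizon crossing probability is strictly less than the \emph{infinite-horizon crossing probability} (since the first-crossing time can equal $n+1$ with positive Gaussian probability), and the latter is $\leq 1-\beta$; chaining $a<b\leq c$ gives $a<c$. Just be explicit that ``infinite-horizon bound'' means the crossing probability itself, not the number $1-\beta$.
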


\noindent We defer the proof of this result to Appendix~\ref{apd:power_batch_nonadaptive}. Several remarks are in order.
\vspace{0.05cm}

% \noindent {\bf Remarks: } 
\begin{itemize}
    \item It is also possible to study the power of the Bonferroni test in the batch setting. A necessary condition for the power of the Bonferroni method to be at least $1-\beta$ is:
    \begin{align*}
    \exists~k \in \{1,\ldots,n\}, \quad r_k \mu_k \geq Z_{\alpha/n} + Z_\beta.
    \end{align*} 
    Comparing with the batch Stouffer test, we see that the Bonferroni method has high power when there is at least one large effect, but can have lower power in settings where there are many small non-null effects.
 
    \item Comparing condition~\eqref{eq:batch_stouffer_power} for the batch Stouffer test with its martingale counterpart (condition~\eqref{eq:batch_mst_power}), we observe that the batch test rejects when the average of \emph{all} the effects is sufficiently large, while the martingale test rejects as long as \emph{any} cumulative sum is sufficiently large. In cases where a good a-priori ordering is available, the martingale test can have much higher power.
\end{itemize}

\paragraph{The \postmt: } 
To ease our calculations, we assume that all the non-nulls have the same mean value, i.e. $\mu_i = \mu$ if $r_i = 1$. 
We denote the number of non-nulls by $N_1$ and the nulls by $N_0$. Let $Z(\nu)$ be a Gaussian random variable with unit variance and mean $\nu$, then the non-nulls are $\{Z_j(\mu)\}$ for $j = 1, \ldots, N_1$ and we let $Z_{(j)}(\mu)$ be the $j$-th non-null after ordering by its absolute value so that 
\begin{align}
    \label{def:permutation}
    |Z_{(1)}(\mu)| \geq |Z_{(2)}(\mu)| \geq \ldots \geq |Z_{(N_1)}(\mu)|.
\end{align}
Suppose that $X \sim \text{Bin}(n,p)$. We let $t_\alpha(n,p)$ denote the $\alpha$-upper quantile of the Binomial distribution $\text{Bin}(n,p)$, i.e. $\mathbb{P}(X \geq t_\alpha(n,p)) = \alpha.$
Recall the definition of $C_k^\gamma$ in equation~\eqref{eqn:ckgamma}.
We define, for $j \in \{1,\ldots,N_1\}$,
\begin{align*}
q_j := \mathbb{P}(|Z(0)| > |Z_{(j)}(\mu)|),
\end{align*}
to be a measure of signal strength. Roughly, the values $q_j$ will be close to 0, if the signal strength $\mu$ is large.

\begin{theorem}
\label{thm:power_batch_imt}
 The \postmt with level $\alpha$ has at least $1-\beta$ power if
\begin{align}
\label{eqn:uninterpretable}
    \exists~j \in \{1,\ldots,N_1\}:~&\sum_{s=1}^j \left(2\mathbb{P}(Z_{(s)}(\mu) > 0) - 1\right) \nonumber{}\\ & \geq\left(C_{n}^{\alpha} + C_{n}^{\beta/2}\right)(j + t_{\beta/(2N_1)}(N_0, q_j))^{1/2}.
\end{align}
\end{theorem}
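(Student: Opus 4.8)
The plan is to translate the \postmt into the Gaussian sign/magnitude language of Setting~\ref{set:simple}, identify a convenient data-dependent stopping index, and show the statistic crosses the curved boundary there with probability at least $1-\beta$. First I would record the basic dictionary: writing the one-sided $p$-value as $p_i=1-\Phi(Z_i)$, one has $h(p_i)=\operatorname{sign}(Z_i)$ and $g(p_i)=\Phi(-|Z_i|)$, so ordering by the masked $p$-value ascending is exactly ordering by $|Z_i|$ descending. Hence $M_k$ is the set of the $k$ largest $|Z_i|$, and the test statistic is $S_k=\sum_{i\in M_k}\operatorname{sign}(Z_i)$. Two structural facts drive everything: for a null $Z_i\sim N(0,1)$ the sign is a fair $\pm1$ independent of $|Z_i|$ (so nulls contribute mean-zero noise), while for a non-null the sign of the $s$-th largest non-null satisfies $\mathbb{E}\operatorname{sign}(Z_{(s)}(\mu))=2\mathbb{P}(Z_{(s)}(\mu)>0)-1$, which is exactly the $s$-th summand on the left of~\eqref{eqn:uninterpretable}.

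Next I would fix the index $j$ that witnesses~\eqref{eqn:uninterpretable} and threshold at $z^\star:=|Z_{(j)}(\mu)|$, defining the random index $k^\star:=\#\{i:|Z_i|\ge z^\star\}=j+B$, where $B$ is the number of nulls whose magnitude exceeds $z^\star$. At this index the statistic splits cleanly as $S_{k^\star}=A+C$ with $A:=\sum_{s=1}^j\operatorname{sign}(Z_{(s)}(\mu))$ (the $j$ leading non-nulls) and $C$ the sum of the $B$ null signs above the threshold, so $\mathbb{E}[C]=0$ and $\mathbb{E}[A]$ equals the left-hand side of~\eqref{eqn:uninterpretable}. Using that $C_k^\gamma$ from~\eqref{eqn:ckgamma} and $\sqrt{k}$ are both increasing in $k$, on the event $B\le t:=t_{\beta/(2N_1)}(N_0,q_j)$ we have $k^\star\le j+t\le n$ and therefore $C_{k^\star}^\alpha\sqrt{k^\star}\le C_n^\alpha\sqrt{j+t}$. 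Consequently it suffices to establish, with total probability at least $1-\beta$, the two events (I)~$B\le t$ and (II)~$A+C\ge \mathbb{E}[A]-C_n^{\beta/2}\sqrt{j+t}$: combining them with the hypothesis $\mathbb{E}[A]\ge(C_n^\alpha+C_n^{\beta/2})\sqrt{j+t}$ gives $S_{k^\star}\ge C_n^\alpha\sqrt{j+t}\ge C_{k^\star}^\alpha\sqrt{k^\star}$, i.e.\ a rejection at step $k^\star$.

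For event (II) I would condition on all magnitudes $\{|Z_i|\}$, which fixes $M_k$ and renders the signs conditionally independent and bounded; a Hoeffding lower-tail bound (or, equivalently, the same uniform boundary inequality applied to the centered process) then controls the downward fluctuation of $A+C$ around $\mathbb{E}[A]$ using at most $j+t$ terms, and one checks that the curved constant $C_n^{\beta/2}$ is large enough to serve as the resulting deviation bound at budget $\beta/2$ (since $1.7^2\cdot0.72\ge 2$, the exponent dominates the plain sub-Gaussian quantile $\sqrt{2\log(2/\beta)}$); the bookkeeping also has to absorb the gap between the conditional mean $\sum_s\tanh(\mu|Z_{(s)}|)$ and its expectation $\mathbb{E}[A]$, since these agree only in expectation. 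For event (I) I would bound the number of nulls crossing the threshold by a Binomial tail: each null exceeds $z^\star$ with marginal probability $q_j=\mathbb{P}(|Z(0)|>|Z_{(j)}(\mu)|)$, so one wants $\mathbb{P}(B>t_{\beta/(2N_1)}(N_0,q_j))\le\beta/2$.

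The hard part will be exactly this last step, because $z^\star=|Z_{(j)}(\mu)|$ is a random, data-dependent threshold: the exceedance indicators for the $N_0$ nulls share this common random cutoff and are therefore not independent, so $B$ is not literally $\mathrm{Bin}(N_0,q_j)$ and a naive Binomial tail does not apply. I expect this to be handled by decomposing over which non-null realizes the threshold (equivalently, over the rank of $z^\star$), so that within each piece the null exceedances become a genuine Binomial whose parameter can be controlled by the marginal $q_j$; the union bound over the $N_1$ such pieces is precisely what converts the per-piece budget $\beta/(2N_1)$ into the total $\beta/2$ and explains the $1/(2N_1)$ appearing in $t_{\beta/(2N_1)}(N_0,q_j)$. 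A final union bound over events (I) and (II) yields the claimed $1-\beta$ lower bound on the power.
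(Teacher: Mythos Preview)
Your architecture is the paper's: translate to signs and magnitudes, look at the step $k^\star=|M(j)|$ when the $j$-th non-null enters, and split $\beta$ in half between (I) a size bound on $|M(j)|-j$ and (II) a lower-deviation bound on $S_{k^\star}$ obtained by applying the curved boundary to the centered sign process. For (II) the paper does exactly what you suggest (packaged as Lemma~\ref{lm:power_oriOrder}): conditionally on the realized ordering, the centered process is a sub-Gaussian martingale and one invokes the same $C^\gamma$ boundary.

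Where you diverge from the paper, and where the gap is, is your account of the factor $1/(2N_1)$ in event (I). Your proposed decomposition over ``which non-null realizes the threshold $z^\star$'' does not work as stated: conditioning on the \emph{identity} $\ell$ of the non-null that achieves the $j$-th order statistic still leaves $z^\star=|Z_\ell|$ random, so within each of the $N_1$ pieces the null exceedance indicators still share a random cutoff, their sum remains a mixture of binomials, and you cannot bound the binomial parameter inside each piece by the marginal $q_j$. The paper's route is different and more direct: it asserts (Lemma~\ref{lm:size_shrink}) that $|M(j)|-j\sim\mathrm{Bin}(N_0,q_j)$ pointwise---each null exceeds the $j$-th non-null order statistic with marginal probability $q_j$---and then takes a Bonferroni union bound over all $j\in\{1,\dots,N_1\}$ to get a \emph{uniform} bound $\mathbb{P}\bigl(\forall j:\ |M(j)|\le j+t_{\beta/N_1}(N_0,q_j)\bigr)\ge 1-\beta$. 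That union over the index $j$, not a within-$j$ decomposition over non-nulls, is the source of the $1/N_1$. Your instinct that the shared random threshold makes the exact Binomial claim delicate is well-founded, and the conditional-versus-unconditional-mean worry you raise in (II) is likewise a real bookkeeping point; but neither is what generates the $1/(2N_1)$ in the theorem, and your proposed mechanism for it would not close the argument. Follow the paper's Bonferroni-over-$j$ here.
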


\noindent We prove this result in Appendix~\ref{apd:power_imt}. 
Condition~\eqref{eqn:uninterpretable} gives a reasonably tight sufficient condition for the re-ordering based test to have high power (Figure~\ref{fig:heat_mu}). As expected, when the number of nulls increases (right columns) or the number of non-nulls decreases (bottom rows), the sufficient condition for the signal strength~$\mu$ to guarantee high power grows.

\begin{figure}[t]
    \centering
        \includegraphics[width=0.5\linewidth]{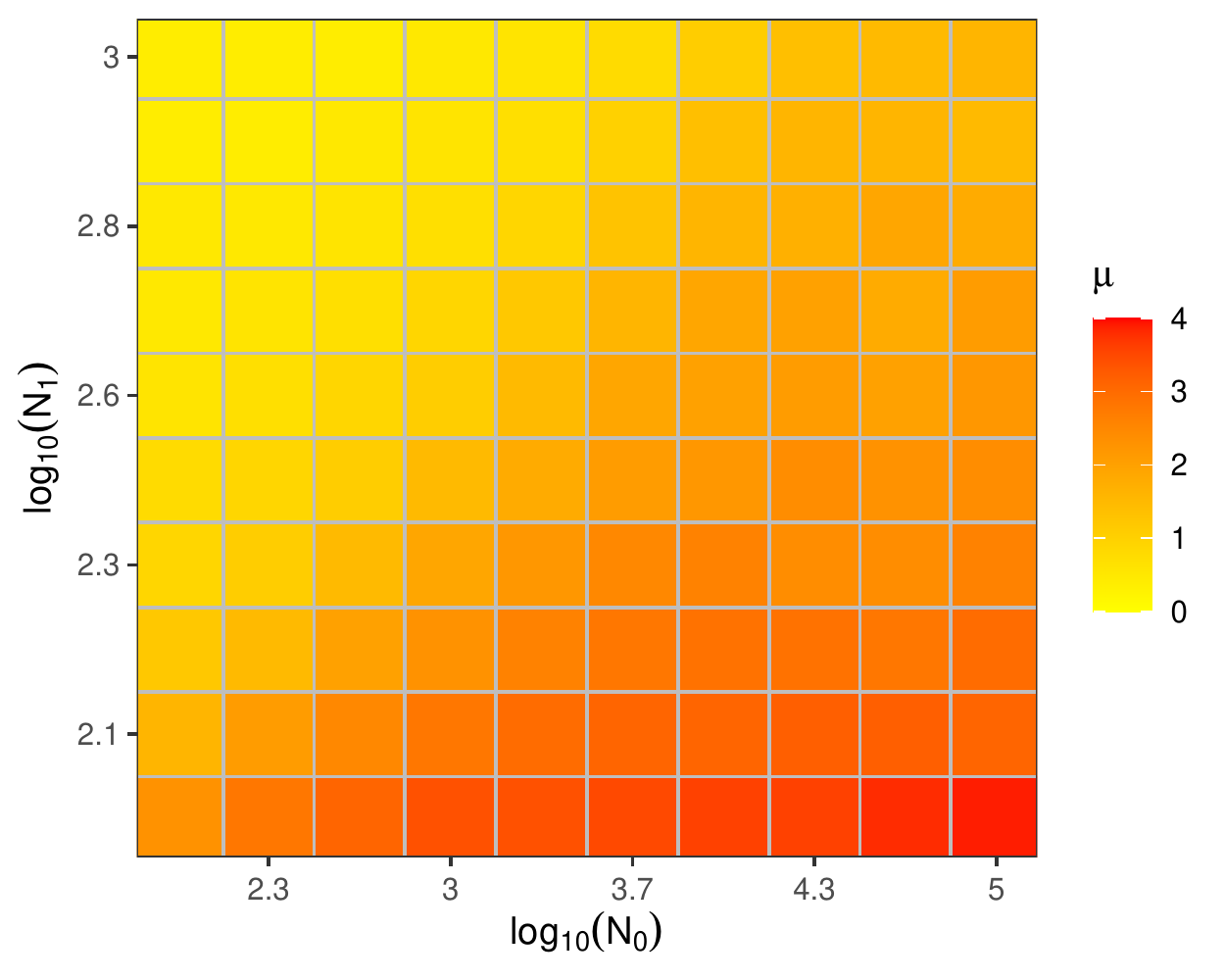}
    \caption{Sufficient signal strength $\mu$ for AMT to guarantee both type-I and type-II error control at 0.05 (derived from~\eqref{eqn:uninterpretable}), when varying the numbers of nulls~${N_0 \in [10^2, 10^{5}]}$ and non-nulls~${N_1\in [10^2, 10^3]}$. The required signal strength grows when the number of nulls increases or the number of non-nulls decreases.}
    \label{fig:heat_mu}
\end{figure}

The condition itself can be difficult to interpret as it depends on the distribution of Gaussian order statistics, as well as on the quantiles of a Binomial distribution. To build some intuition, we consider some simple cases.
\begin{itemize}
    \item In the extreme case, when the signal strength $\mu$ is quite large, the re-ordering will ensure that the non-nulls are placed early on with high-probability. In this case, the left-hand side in condition~\eqref{eqn:uninterpretable} grows linearly with $j$. On the other hand, if the signal strength is large then the probabilities $q_j$ will be small and we can ignore the term $t_{\beta/(2N_1)}(N_0, q_j)$, so that the right-hand side grows at the rate of roughly $\sqrt{j}$ (ignoring $\log \log$ factors), ensuring that the condition will be satisfied even for a moderate number of non-nulls.

    \item We provide other conditions that suffice to ensure high power in Appendix~\ref{apd:cond_imt_mu} by lower and upper bounding the left and right hand sides (respectively). We present one sufficient condition here. Suppose there are sufficient number of non-nulls such that $N_1 \geq 6 \left(C_n^{\alpha} + C_n^{\beta/2}\right)^2$, and that the number of nulls is sufficiently large, i.e. that $N_0 > 0.1 N_1^2$. A sufficient condition for the \postmt to have $1 - \beta$ power is
    \begin{align} \label{eq:cond_imt_mu}
    \mu  \geq \sqrt{2\log\left( \frac{N_0}{N_1^2}\right) + 4\log\left(C_{n}^{\alpha} + C_{n}^{\beta/2}\right) + 3.45}.
    \end{align}
    For comparison, the batch Stouffer test requires
    \begin{align} \label{eq:batch_mu_cond}
    \mu \geq (Z_\alpha + Z_{\beta}) \sqrt{\frac{N_0}{N_1^2} + \frac{1}{N_1}}.
    \end{align}
    Both conditions are stricter if the ratio $\frac{N_0}{N_1^2}$ is large, i.e. in the setting where there are many nulls and few non-nulls. However, the \postmt requires a signal strength that only grows logarithmically with this ratio.
\end{itemize}
\noindent In Appendix~\ref{apd:cond_imt_mu}, we relate condition~\eqref{eq:cond_imt_mu}
to the detection threshold derived in the work of Donoho and Jin~\cite{donoho2015special} for the same setting of detecting sparse Gaussian mixtures.

To summarize our findings in the batch setting: the \mst and the \postmt each require weaker conditions for the same power than the batch Stouffer test. The \mst relies on a good pre-defined ordering, whereas the \postmt relies on sufficiently large signal strength to ensure that re-ordering is helpful. We now turn our attention to the online setting.

\subsection{Power guarantees in the online setting} \label{sec:power_online}
When testing the global null, the natural test to compare to is the online Bonferroni method, which chooses a sequence of significance levels $\{\alpha_k\}_{k=1}^\infty$ such that $\sum_{k=1}^\infty \alpha_k = \alpha$, and rejects the global null if
\begin{align*}
    \exists~k \in \mathbb{N}: p_k \leq \alpha_k.
\end{align*}
The following sections compare the power guarantee of the online Bonferroni method with the \mst and \postmt. Specifically, we derive necessary conditions for the power of the online Bonferroni test, and compare it with sufficient conditions for the power of our proposed methods -- revealing situation where the online Bonferroni has lower power than our proposed methods.

\paragraph{The online Bonferroni method versus the \mst:} 

To better characterize the power of  online Bonferroni, we consider two cases: 
\begin{itemize}
    \item Dense non-nulls: the number of non-nulls is infinite,
    \begin{align} \label{eq:dense_nonnulls}
        \sum_{k  = 1}^\infty r_k = \infty.
    \end{align}
    \item Sparse non-nulls: the number of non-nulls is finite,
    \begin{align} \label{eq:sparse_nonnulls}
        \sum_{k  = 1}^\infty r_k \leq M < \infty \text{ for some large constant } M.
    \end{align}
\end{itemize}
The sparse case yields a stronger necessary condition when the sequence of significance levels satisfies a mild condition that $\{\alpha_k\}_{k=1}^\infty$ is nonincreasing.

\iffalse
Given that $\{\alpha_k\}_{k=1}^\infty$ must have a finite sum, the decreasing rate is faster than $\alpha/k$: 
%In other words, there exists $N_1$ such that for any $k \geq N_1$, the significance level is upper bounded as $\alpha_k \leq  \alpha/k$:
\begin{align} \label{eq:cond_alphak}
    \exists N_1 < \infty: \alpha_k \leq \alpha/k \text{ for any } k \geq N_1.
\end{align}
\fi

%Larger value for $\{\alpha_k\}_{k=1}^\infty$ would lead to higher power, but note that $\alpha_k$ must decrease faster than the rate of $\alpha/k$ in order to have a finite sum $\alpha$. 
%Therefore, we consider the sequence where $\alpha_k \leq \alpha/k$ for all $k = 1, 2, \ldots$. %We focus on the case where the nonzero $\alpha_k$ decreases slower than the rate of $\alpha/k^2$ so that the detection power is expected to be reasonably good when the non-nulls can appear after after any time $k$. 

Unlike previous methods, the online Bonferroni method does not aggregate \pvalues, so its power guarantee requires conditions on the individual means.

\begin{theorem}
\label{thm:power_online}

Suppose $\alpha \leq (1-\beta)/4$. In the case of dense non-nulls~\eqref{eq:dense_nonnulls}, a necessary condition for online Bonferroni to have at least $1-\beta$ power is
\begin{align} \label{eq:online_Bonferroni_power}
    \exists k \in \mathbb{N}: r_k\mu_k \geq 0.25 \left(\sqrt{2\log \left(\frac{k^2}{\alpha}\right)}\right)^{-1}.
    %0.125\left(\sqrt{\log \left(\frac{k}{\alpha^{1/2}}\right)}\right)^{-1}.
\end{align}
A stronger necessary condition can be derived for sparse non-nulls~\eqref{eq:sparse_nonnulls}. If $\{\alpha_k\}_{k=1}^\infty$ is nonincreasing, then online Bonferroni can have at least $1 - \beta$ power only if
\begin{align}
    \exists k \in \mathbb{N}: 
    \begin{cases}
    r_k\mu_k \geq 0.4\sqrt{\alpha_{k^*}}, & \text{ if } k \leq k^*,\\
    r_k\mu_k \geq \sqrt{\log \left(\frac{k}{4\alpha}\right)} - \sqrt{2\log\left(\frac{M}{2(1 - \beta - 3\alpha)}\right)}, & \text{ if } k > k^*,
    \end{cases}
\end{align}
where $k^* = M^2/\alpha$, and $\alpha_{k^*}$ is the $k^*$-th significance level.

In contrast, a sufficient condition for the \mst to have at least $1-\beta$ power is
\begin{align}
\label{eqn:suff}
\exists~k \in \mathbb{N}:
\sum_{i=1}^k \mu_i r_i \geq (C_k^{\alpha} + C_k^{\beta}) k^{1/2}.
\end{align}
\end{theorem}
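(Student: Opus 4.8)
My plan is to treat the three claims separately, since the two necessary conditions for online Bonferroni and the sufficient condition for the \mst rest on quite different ideas. Throughout I use that in Setting~\ref{set:simple} the one-sided \pvalue is $p_k = 1-\Phi(Z_k)$, so the event $\{p_k \le \alpha_k\}$ is exactly $\{Z_k \ge Z_{\alpha_k}\}$ with $Z_{\alpha_k} := \Phi^{-1}(1-\alpha_k)$; hence a null position rejects with probability $\alpha_k$ and a non-null position $k$ rejects with probability $\Phi(\mu_k - Z_{\alpha_k})$. These per-step probabilities are the common ingredient in all three parts.

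I would begin with the \mst sufficient condition, which is the easiest and essentially inherits the batch argument of Theorem~\ref{thm:power_batch}(b). Under the alternative the statistic $S_k = \sum_{i=1}^k \Phi^{-1}(1-p_i) = \sum_{i=1}^k Z_i$ is exactly $N\bigl(\sum_{i=1}^k \mu_i r_i,\, k\bigr)$, because null increments are mean-zero and only the non-null means contribute. I would lower bound the power by the single-step probability at the index $k$ furnished by~\eqref{eqn:suff}, namely $\mathbb{P}(S_k \ge C_k^\alpha \sqrt{k}) = \Phi\bigl(\tfrac{1}{\sqrt k}\sum_{i=1}^k \mu_i r_i - C_k^\alpha\bigr)$. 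Condition~\eqref{eqn:suff} makes the argument at least $C_k^\beta$, and since the curved constant satisfies $C_k^\beta \ge Z_\beta$ for every $k$ (the coefficient $1.7^2\cdot 0.72 > 2$ dominates the Gaussian quantile), the probability is at least $\Phi(Z_\beta) = 1-\beta$. The only change from the batch proof is that $k$ now ranges over all of $\mathbb{N}$, which is harmless.

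For the two necessary conditions I would upper bound the power by the union bound $\mathbb{P}_1(\exists k: p_k \le \alpha_k) \le \sum_k \Phi(\mu_k - Z_{\alpha_k})$, split the sum into null and non-null indices, and use $\sum_{\text{null}} \alpha_k \le \alpha$ to obtain the core inequality $\sum_{k: r_k=1} \Phi(\mu_k - Z_{\alpha_k}) \ge 1-\beta-\alpha$. I then argue by contraposition: if every non-null signal were below the claimed threshold, this sum would be too small. In the \emph{sparse} case I would exploit the two extra hypotheses. Monotonicity of $\{\alpha_k\}$ gives $k\alpha_k \le \sum_{j\le k}\alpha_j \le \alpha$, hence $\alpha_k \le \alpha/k$ and $Z_{\alpha_k} \ge Z_{\alpha/k} \gtrsim \sqrt{2\log(k/\alpha)}$; and because there are at most $M$ non-nulls, a pigeonhole step forces some non-null index to reject with probability at least $\asymp (1-\beta-3\alpha)/(2M)$, so that $\mu_k \ge Z_{\alpha_k} - Z_{(1-\beta-3\alpha)/(2M)}$. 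Lower bounding $Z_{\alpha_k}$ and upper bounding the second quantile by $\sqrt{2\log(M/(2(1-\beta-3\alpha)))}$ produces the late-index condition, while splitting at $k^\star = M^2/\alpha$ (beyond which $\alpha_k$ is guaranteed small) separates it from the weaker early-index condition $r_k\mu_k \ge 0.4\sqrt{\alpha_{k^\star}}$; the hypothesis $\alpha \le (1-\beta)/4$ is what keeps $1-\beta-3\alpha$ positive.

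The \emph{dense} case is where I expect the main difficulty, precisely because there is no monotonicity to convert the budget constraint $\sum_k \alpha_k = \alpha$ into a per-index bound on $\alpha_k$, and infinitely many non-nulls preclude the sparse pigeonhole. I would phrase the task as bounding the best achievable non-null detection $\sup_{\{\alpha_k\}:\,\sum_k \alpha_k = \alpha} \sum_{k: r_k=1} \Phi(\mu_k - Z_{\alpha_k})$ and showing it stays below $1-\beta-\alpha$ whenever every non-null obeys $\mu_k < 0.25/\sqrt{2\log(k^2/\alpha)}$. For indices carrying moderate budget I would use the exponential tilt $\Phi(\mu_k - Z_{\alpha_k}) \lesssim \alpha_k\, e^{\mu_k Z_{\alpha_k}}$, so their contribution is at most $\alpha\cdot\max_k e^{\mu_k Z_{\alpha_k}}$; the reciprocal-square-root threshold is exactly the scale at which the exponent $\mu_k Z_{\alpha_k}$ stays bounded by a small constant, with $k^2/\alpha$ being the effective value of $1/\alpha_k$ that an optimal allocation selects at index $k$. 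For indices with extremely small budget (large $Z_{\alpha_k}$) the tilt is lossy, and I would instead bound the term directly by the Gaussian tail $e^{-(Z_{\alpha_k}-\mu_k)^2/2}$ and show such indices contribute negligibly. The hard part will be carrying out this two-regime split uniformly over \emph{all} admissible allocations and certifying that no allocation can smuggle detection mass past the threshold — equivalently, cleanly bounding the induced budget-allocation optimization — and then verifying the explicit constants (the $0.25$, and the $4$ inside $\log(k/4\alpha)$ of the sparse bound) against sharp nonasymptotic Gaussian tail estimates rather than the asymptotics used in this sketch.
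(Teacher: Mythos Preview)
Your treatment of the \mst sufficient condition is essentially the paper's argument; the paper routes the last step through the uniform lower boundary $\mathbb{P}_1(\forall k:\ S_k-\sum r_i\mu_i\ge -u_\beta(k))\ge 1-\beta$ rather than the pointwise inequality $C_k^\beta\ge Z_\beta$, but both work. Your sparse Bonferroni sketch is also close in spirit; the paper does a three-way budget split (nulls $\le\alpha$, early non-nulls $\le 2\alpha$, late non-nulls $\le 1-\beta-3\alpha$) rather than a single pigeonhole, and the early-index bound $0.4\sqrt{\alpha_{k^*}}$ comes specifically from the mean-value inequality $\Phi^{-1}(1-\alpha_k)-\Phi^{-1}(1-2\alpha_k)\ge \alpha_k/\phi(\Phi^{-1}(1-2\alpha_k))$, which you have not supplied.

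The genuine gap is in the dense case. Your exponential-tilt bound $\Phi(\mu_k-Z_{\alpha_k})\lesssim \alpha_k\,e^{\mu_k Z_{\alpha_k}}$ is exactly the paper's Lemma (the ratio converges to $e^{1/4}$ when $\mu_k Z_{\alpha_k}=1/4$), but this only gives ``individual power $\le C\alpha_k$'' under the allocation-dependent hypothesis $\mu_k\le 1/(4Z_{\alpha_k})$, whereas condition~\eqref{eq:online_Bonferroni_power} must be allocation-free. You acknowledge that optimizing over all admissible $\{\alpha_k\}$ is ``the hard part'' and propose to split by budget size; the paper sidesteps this optimization entirely with a \emph{baseline comparison}: fix the reference sequence $\alpha_k^\ast=(6/\pi^2)\alpha/k^2$, impose the hypothesis $\mu_k\le 1/(4Z_{\alpha_k^\ast})$ relative to the baseline only, and then observe that for \emph{any} actual allocation the individual power is bounded by $C\max(\alpha_k,\alpha_k^\ast)\le C(\alpha_k+\alpha_k^\ast)$ (case-split on whether $\alpha_k\le\alpha_k^\ast$). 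Summing gives $C\cdot 2\alpha$, and choosing $C=(1-\beta)/(2\alpha)$ finishes. This baseline trick is the missing idea: it is what turns your heuristic ``$k^2/\alpha$ is the effective $1/\alpha_k$ an optimal allocation selects'' into a proof that works uniformly without solving any optimization. Without it, your two-regime split has no mechanism to control indices where the adversary places $\alpha_k\gg \alpha_k^\ast$.
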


\noindent {\bf Remarks: }
\begin{itemize}
    \item Condition~\eqref{eqn:suff} is (up to constants) necessary, because if $\alpha < 1-\beta$, the power of the martingale Stouffer test is less than $1-\beta$ whenever
    \begin{align*} 
        \forall~k \in \mathbb{N}: \sum_{i=1}^k r_i \mu_i \leq (C_k^{\alpha} - C_k^{1 - \beta}) k^{1/2}.
    \end{align*}
    
    \item The necessary condition~\eqref{eq:online_Bonferroni_power} under dense non-nulls requires a lower bound on $r_k\mu_k$ that decreases at the rate of~$\left(\log k\right)^{-1/2}$. This lower bound is fairly tight: for an example of sequence $\{\alpha_k\}_{k=1}^\infty$ that decreases at the rate of $1/[k (\log k)^2]$, the power of the online Bonferroni test would be one if all hypotheses are non-null when $k > 1$ and the mean value decreases at a slower rate: $\mu_k = \left(\log k\right)^{-1/c}$ for any $c > 2$ (see Lemma~\ref{lm:Bonf_example} in Appendix~\ref{apd:power_online_Bonf}).

    \item The proof of Theorem~\ref{thm:power_online} is in Appendix~\ref{apd:power_online_Bonf}. 
    If asymptotically, the mean values are nonzero but fade as $k$ grows at a fast rate, the online Bonferroni method has little power, but the martingale Stouffer test can have good power. For example, suppose all the hypotheses are non-nulls and $\mu_k = k^{-1/3}/10$. Controlling the type-I error $\alpha$ at $0.15$, the online Bonferroni method has power less than $0.6$ (by condition~\eqref{eq:online_Bonferroni_power}) whereas the \mst has power that approaches $1$ (by condition~\eqref{eqn:suff}).
     
\end{itemize}

\paragraph{The \postmt:} 
For clarity, we consider the same mean value for the non-nulls, $\mu_i = \mu$ if $r_i = 1$. Let a $Z$ score for each hypothesis~$H_i$ be $Z_i = \Phi^{-1}(1 - p_i)$.
Our guarantee on the power for the \postmt depends critically on the choice of the threshold parameter~$c$ (we consider Algorithm~\ref{alg:post_mt_online} with the filtering $\Phi^{-1}(1 - g(p_t)) > c$, which is equivalent to $g(p_t) < c'$ for $c' = 1 - \Phi(c)$). To concisely state our results, define the following quantities:

\begin{align*} \small
A(\mu; c) &= \frac{5}{3}\frac{\sqrt{\Phi(-c)}}{\Phi(\mu-c) - \Phi(-\mu-c)}, \\
B(\mu; c) &= \frac{10(\Phi(\mu-c) + \Phi(-\mu-c) - 2\Phi(-c))}{9(\Phi(\mu-c) - \Phi(-\mu-c))^2} \vee \frac{25}{ (\Phi(\mu-c) + \Phi(-\mu-c))^2}, \\
T(\beta; c) &= \frac{0.79\log (15.57/\beta)\Phi^2(-c) + 0.4}{\Phi^4(-c)}.
\end{align*}
For a reasonable choice of the threshold parameter, i.e., setting $c = \mu$ for instance, we note that the quantity $B(\mu; \mu)$ is upper bounded by a universal constant (when $\mu > 0$). On the other hand, the quantity $A(\mu; \mu)$ decays exponentially for large signal strength, i.e., when $\mu > 0.25$ we have:
\begin{align} \label{eq:online_amt_para}
A(\mu; \mu) \leq e^{-\mu^2/4}.
\end{align}
With these quantities in place, we now state our main result on the power of the \postmt.

\begin{theorem}
\label{thm:power_online_imt}
A sufficient condition for the \postmt with type-I error $\alpha$ and threshold parameter $c$ to have $1-\beta$ power is that: 
\begin{align*}
    \exists~k \geq T(\beta; c): \sum_{i=1}^k r_i\geq~& A(\mu; c)\left(C_k^{\alpha} + C_k^{\beta/3}\right) k^{1/2}{}\\
    &+ B(\mu; c) \left(C_k^{\alpha} + C_k^{\beta/3}\right)^2k^{-1/2}.
\end{align*}
\end{theorem}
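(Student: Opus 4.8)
First I would translate the procedure into concrete Gaussian quantities. Under Setting~\ref{set:simple} with $Z_i=\Phi^{-1}(1-p_i)\sim N(\mu_i,1)$, the filter $\Phi^{-1}(1-g(p_t))>c$ is exactly $|Z_t|>c$, and the missing bit is $h(p_t)=\mathrm{sign}(Z_t)$. Hence, after the first $k$ \pvalues have arrived, the \tset has the random size $|M|=\sum_{i=1}^k \mathbbm{1}\{|Z_i|>c\}$ and the statistic is $S=\sum_{i=1}^k \mathbbm{1}\{|Z_i|>c\}\,\mathrm{sign}(Z_i)$, and the \postmt rejects the global null by time $k$ as soon as $S> C_{|M|}^\alpha\,|M|^{1/2}$. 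I would record the per-hypothesis moments: a non-null contributes mean $\delta:=\Phi(\mu-c)-\Phi(-\mu-c)$, is supported on $\{-1,0,1\}$, and passes with probability $\rho_1:=\Phi(\mu-c)+\Phi(-\mu-c)$ (so its variance is $\rho_1-\delta^2$); a null contributes mean $0$ and passes with probability $2\Phi(-c)$. Thus $\mathbb E[S]=\big(\sum_{i=1}^k r_i\big)\,\delta$ and $\mathbb E[|M|]\le \big(\sum_{i=1}^k r_i\big)\rho_1+2k\,\Phi(-c)$, where for large $k$ the null mass $2k\,\Phi(-c)$ dominates.

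\textbf{Controlling the random boundary.} The main structural difficulty is that the boundary is indexed by the \emph{random} size $|M|$. Since $|M|\le k$ and $C_m^\gamma$ is nondecreasing in $m$, I would replace $C_{|M|}^\alpha$ by $C_k^\alpha$, so that at the chosen time $k$ it suffices to prove $\mathbb P\big(S> C_k^\alpha\,|M|^{1/2}\big)\ge 1-\beta$, since this event forces rejection at or before time $k$. I would then split the failure budget $\beta$ across three concentration events (the source of the $\beta/3$ inside every $C_k^{\beta/3}$): a lower tail for the accumulated signal $S$, an upper tail for the size $|M|$, and the event --- guaranteed once $k\ge T(\beta;c)$ --- that the \tset is large enough (a multiplicative Chernoff-type lower bound on $|M|$) for the other two deviations to be valid and for the square-root expansion below to lose nothing.

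\textbf{Concentration and assembly.} On the good event I would write $S\ge \mathbb E[S]-\mathrm{dev}_S$ and $|M|\le \mathbb E[|M|]+\mathrm{dev}_M$ using a Bernstein/Bennett inequality for the bounded increments. At leading order the boundary is at most $C_k^\alpha\sqrt{2k\,\Phi(-c)}$ and the signal deviation scales like $C_k^{\beta/3}\sqrt{k\,\Phi(-c)}$, so matching $\mathbb E[S]=(\sum r_i)\delta$ against the sum of these two $\sqrt{k}$ terms yields exactly the leading requirement $\sum_{i=1}^k r_i\ge A(\mu;c)(C_k^\alpha+C_k^{\beta/3})k^{1/2}$ with $A(\mu;c)\propto \sqrt{\Phi(-c)}/\delta$; the constant $5/3$ has just enough slack to dominate both $\sqrt{2}$ (from the boundary) and the Bernstein constant (from the deviation). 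The residual contributions --- the linear-in-$\log(1/\beta)$ Bernstein terms, the fluctuation of $|M|$ propagated through $m\mapsto\sqrt m$, and the variance correction $\rho_1-\delta^2$ --- are all $O\big((C_k)^2 k^{-1/2}\big)$ after dividing by $\delta$, and collecting them gives the lower-order term $B(\mu;c)(C_k^\alpha+C_k^{\beta/3})^2 k^{-1/2}$, with the two branches of the maximum defining $B(\mu;c)$ corresponding to whether the size deviation or the signal variance dominates.

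\textbf{Main obstacle.} I expect the hard part to be bookkeeping rather than any single inequality, because $S$ and $|M|$ are built from the same $Z_i$'s and so cannot be treated as independent, while the boundary must ultimately be evaluated at the realized $|M|$. The delicate steps are (i) propagating the upper bound on $|M|$ through the concave square root without degrading the constants, and (ii) verifying that $k\ge T(\beta;c)$ genuinely makes the lower-order terms subordinate and the multiplicative control of $|M|$ valid, so that the final inequality collapses to the stated two-term form with the precise constants $5/3$, $10/9$, and $25$ appearing in $A$, $B$, and $T$.
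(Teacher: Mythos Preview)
Your overall strategy---fix a time $k$, replace $C_{|M|}^\alpha$ by $C_k^\alpha$, concentrate $S$ from below and $|M|$ from above, and match leading terms to produce $A(\mu;c)$---is sound and close in spirit to the paper. But your decomposition of the three $\beta/3$ events and your explanation of $T(\beta;c)$ do not match what actually drives the constants, and as written the third event is not doing real work.

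The paper does \emph{not} bound the unconditional sum $S=\sum_i \mathbbm{1}\{|Z_i|>c\}\,\mathrm{sign}(Z_i)$ directly. Instead it first conditions on which hypotheses pass the filter and invokes the fixed-ordering power lemma (Lemma~\ref{lm:power_oriOrder}): given the random set $M_k$, the centered statistic is a sub-Gaussian martingale whose drift equals $(2S(\mu,c)-1)$ times the number of non-nulls in $M_k$, where $S(\mu,c)=\mathbb{P}(Z(\mu)>0\mid |Z(\mu)|>c)$. This conditional martingale deviation is the first $\beta/3$. The second $\beta/3$ is an upper bound on $|M_k|$ (Lemma~\ref{lm:online_shrink}), and the third $\beta/3$ is a \emph{lower} bound on the number of non-nulls that pass (Lemma~\ref{lm:shrink_nn}), i.e.\ on $\sum_{i\in M_k} r_i$, not on $|M|$ itself. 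Your ``lower bound on $|M|$'' is neither needed nor what the paper uses; what you are missing is the separation of ``how many non-nulls survive the filter'' from ``how the surviving missing bits fluctuate around their conditional mean.'' Your direct bound on $S$ collapses these two sources of randomness into one, which can be made to work but does not naturally produce three events or the paper's exact constants.

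Separately, $T(\beta;c)$ is not the price of a probabilistic event at all. It enters only in the algebraic reduction (Claim~\ref{clm:trans7-8}) from the two-sided condition of Theorem~\ref{thm:power_online_imtfull} to the compact form of Theorem~\ref{thm:power_online_imt}: one needs $k\ge \big(C_k^{\beta/3}/(2\Phi(-c))\big)^2$ so that the $\tfrac12 C_k^{\beta/3}k^{1/2}$ fluctuation of $|M_k|$ is absorbed into the leading $\Phi(-c)\,k$ term. The two branches of $B(\mu;c)$ likewise come from this quadratic-in-$N_1(k)$ simplification, not from comparing ``size deviation'' versus ``signal variance'' as you suggest. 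If you want your route to land on the stated $A,B,T$, you would need to reorganize around the conditional signal $S(\mu,c)$ and the count of passing non-nulls, then carry out the same quadratic solve.
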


% \newpage
% \noindent {\bf Remark: } 
% \begin{itemize}
    % \item[] 
    It is interesting to compare the above result with the necessary condition for the \mst: the power of MST is less than $1 - \beta$ if
    \begin{align} \label{eq:online_mst_ness}
        \forall~k \in \mathbb{N}: \quad  \sum_{i=1}^k r_i \leq \mu^{-1}\left(C_k^{\alpha} - C_k^{1 - \beta}\right) k^{1/2}.
    \end{align}
    Both right-hand sides grow at the rate of $k^{1/2}$ (ignoring $\log \log$ factors), but the $\mu$-dependent term $\exp(-\mu^2/4)$ for AMT (derived in bound~\eqref{eq:online_amt_para} for $A(\mu; \mu)$) is much smaller than the corresponding $1/\mu$ term in condition~\eqref{eq:online_mst_ness} for MST. As a consequence, the \postmt will have higher power when the non-nulls have sufficiently large mean values but are sparse.
% \end{itemize}

To summarize the basic insights we derive in this section, we find that both in the batch setting and the online setting, the \mst and the \postmt require weaker conditions than their classical counterparts to guarantee the same power when the non-nulls are sparse. The \mst relies on good prior knowledge to order the hypotheses, while the \postmt uses masked \pvalues to generate a good ordering.
The theoretical analyses in this section discuss the case with no prior knowledge, and the simulations in the next section delve deeper into the setting where the non-nulls are structured.

\section{Numerical simulations} \label{sec:ipm}

While the \mst can only use prior knowledge in the form of non-null probabilities for each hypothesis, the \imt combines (a) side covariate information (which could include prior non-null probabilities in working model~\eqref{eq:p_model} as a component) with (b) structural constraints on the unknown non-null set, and (c) masked \pvalues, to infer whether a hypothesis is non-null and thus include it earlier in the ordering. Here, we demonstrate that prior structural constraints can help the \imt attain a higher power than the \mst and some classical methods. 

We first consider the batch setting and use two non-null structures as simple examples: a blocked structure within a grid and a hierarchical structure within a tree; and we discuss similar structures in the online setting. For each of these, we customize a heuristic strategy to expand $M_k$ in the \imt (recalling that type-I error is controlled regardless of the heuristic used, and only power is affected).

\subsection{Clustered non-nulls in a grid of hypotheses} \label{sec:sim_cluster}

Consider the setting where the hypotheses are arranged in a rectangular grid, and if the null is false, then the non-nulls form a single coherent cluster.
This is a common structure which, as a hypothetical example, is a reasonable belief when trying to detect if there is a tumor in a brain image. Here, the covariates~$x_i$ are simply the two-dimensional location of the hypothesis $H_i$ on the grid. The blocked non-null structure is utilized in specifying the posterior probability of being non-null using model~\eqref{eq:p_model} by constraining the prior non-null probabilities~$\pi_i$ to be a smooth function of $x_i$. Details can be found in Appendix~\ref{apd:em}.

The block structure is also imposed in the strategy of interactively expanding~$M_k$ such that $M_k$ forms a single connected component. The \imt expands $M_k$ by only including possible non-nulls that are on the boundary of $M_k$ (see Figure~\ref{fig:include} for example). 

\begin{figure}[h]
    \centering
    \begin{subfigure}[t]{0.24\textwidth}
        \centering
        \includegraphics[width=1\linewidth]{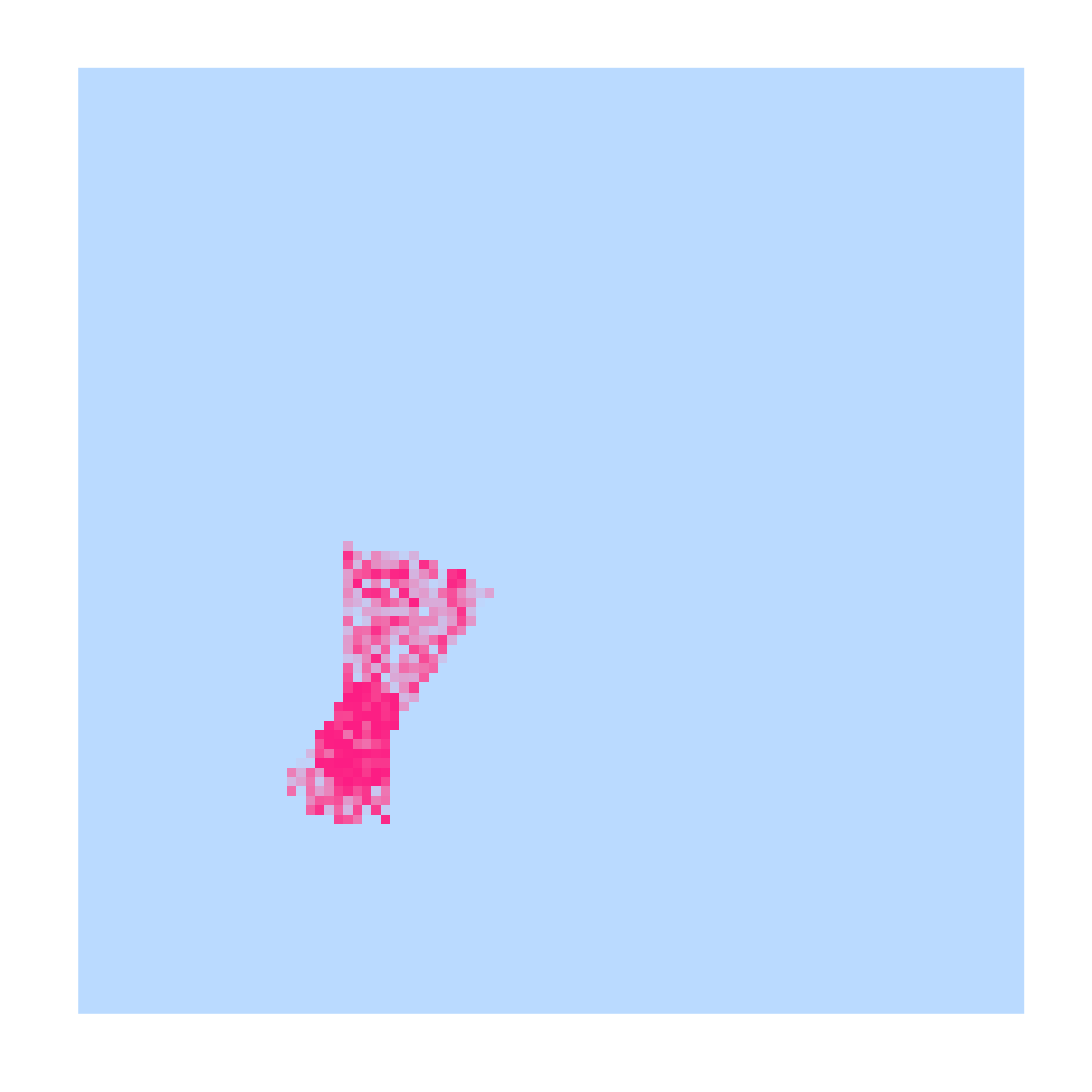}
    \end{subfigure}
    \hfill
    \begin{subfigure}[t]{0.24\textwidth}
        \centering
        \includegraphics[width=1\linewidth]{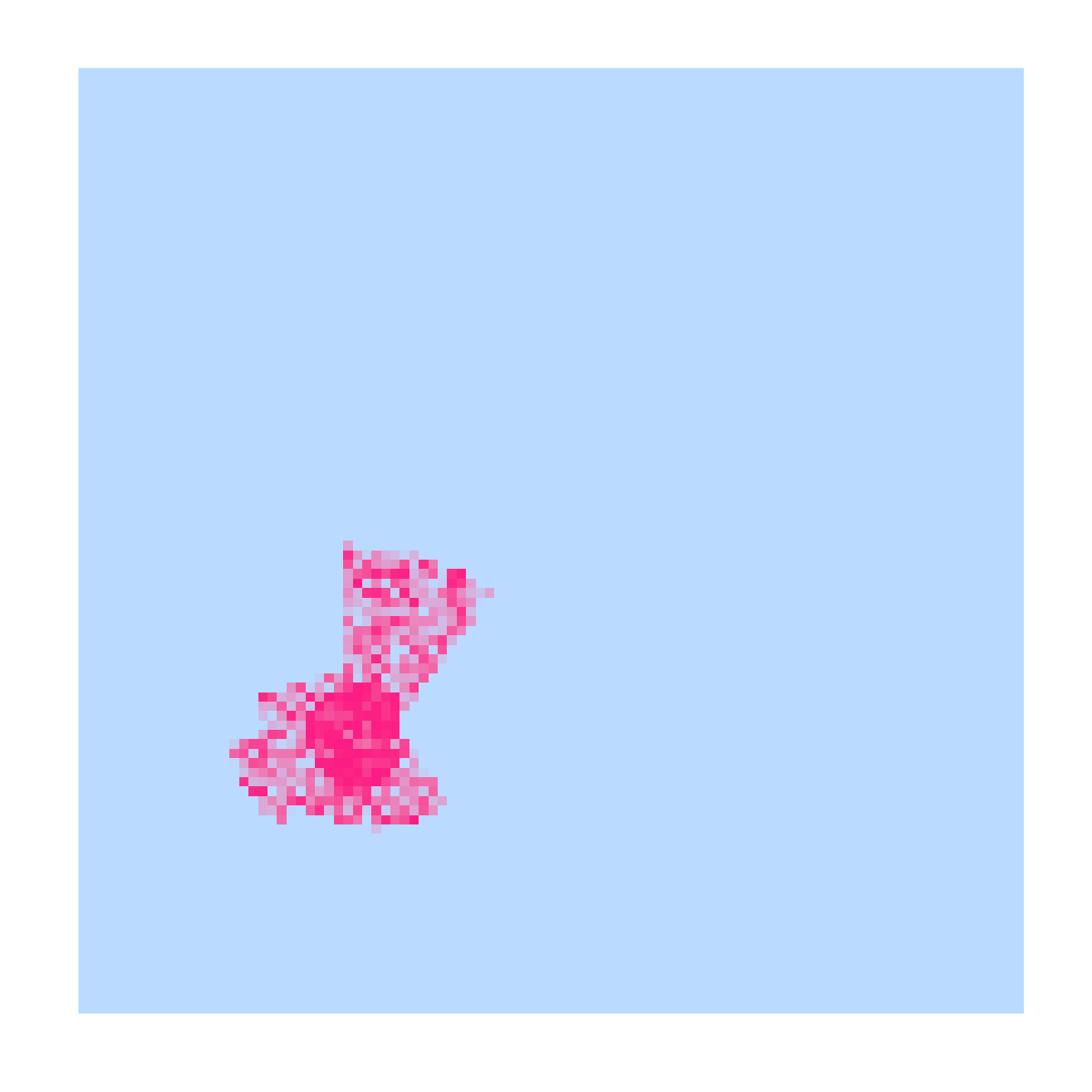}
    \end{subfigure}
    \hfill
    \begin{subfigure}[t]{0.24\textwidth}
        \centering
        \includegraphics[width=1\linewidth]{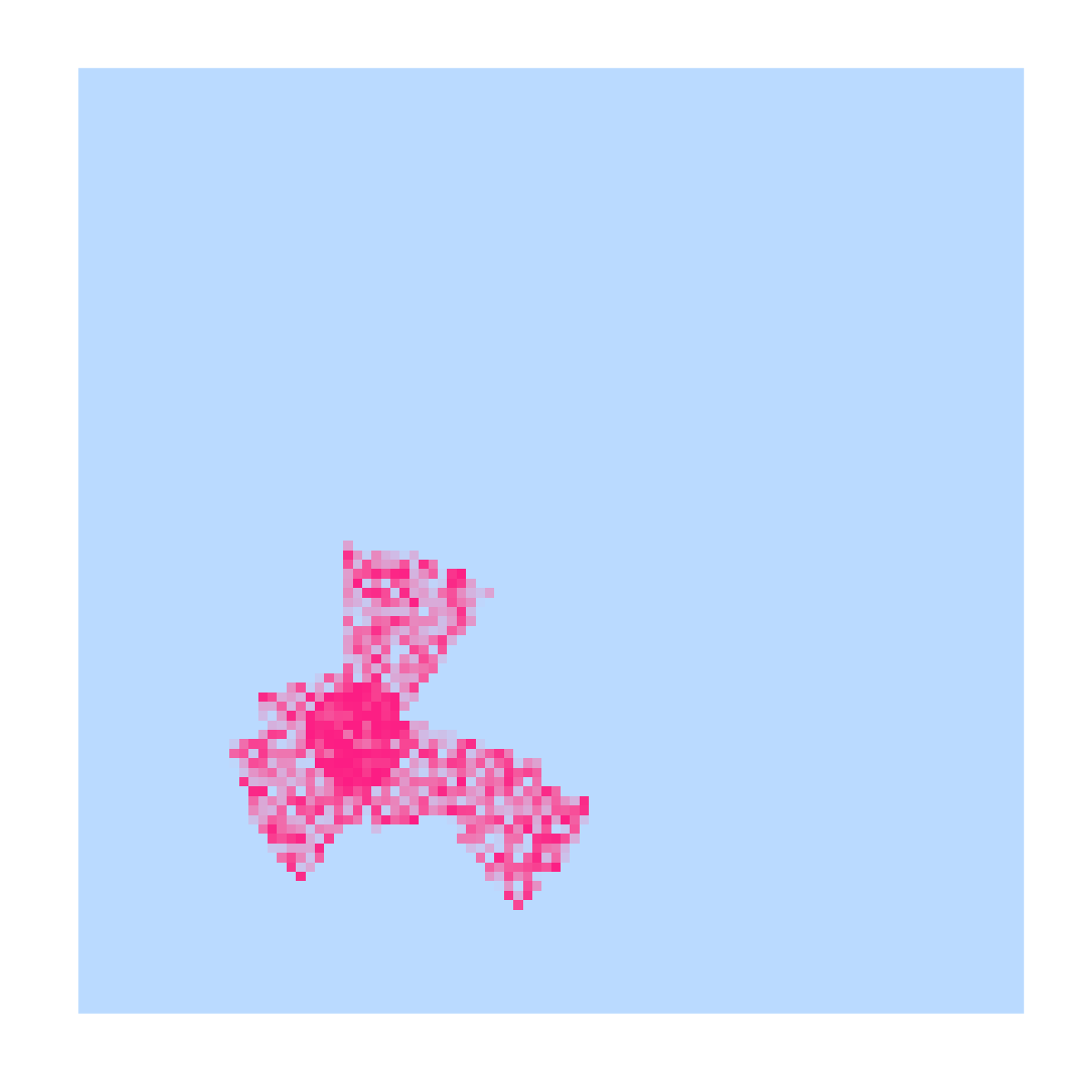}
    \end{subfigure}
    \hfill
    \begin{subfigure}[t]{0.24\textwidth}
        \centering
        \includegraphics[width=1\linewidth]{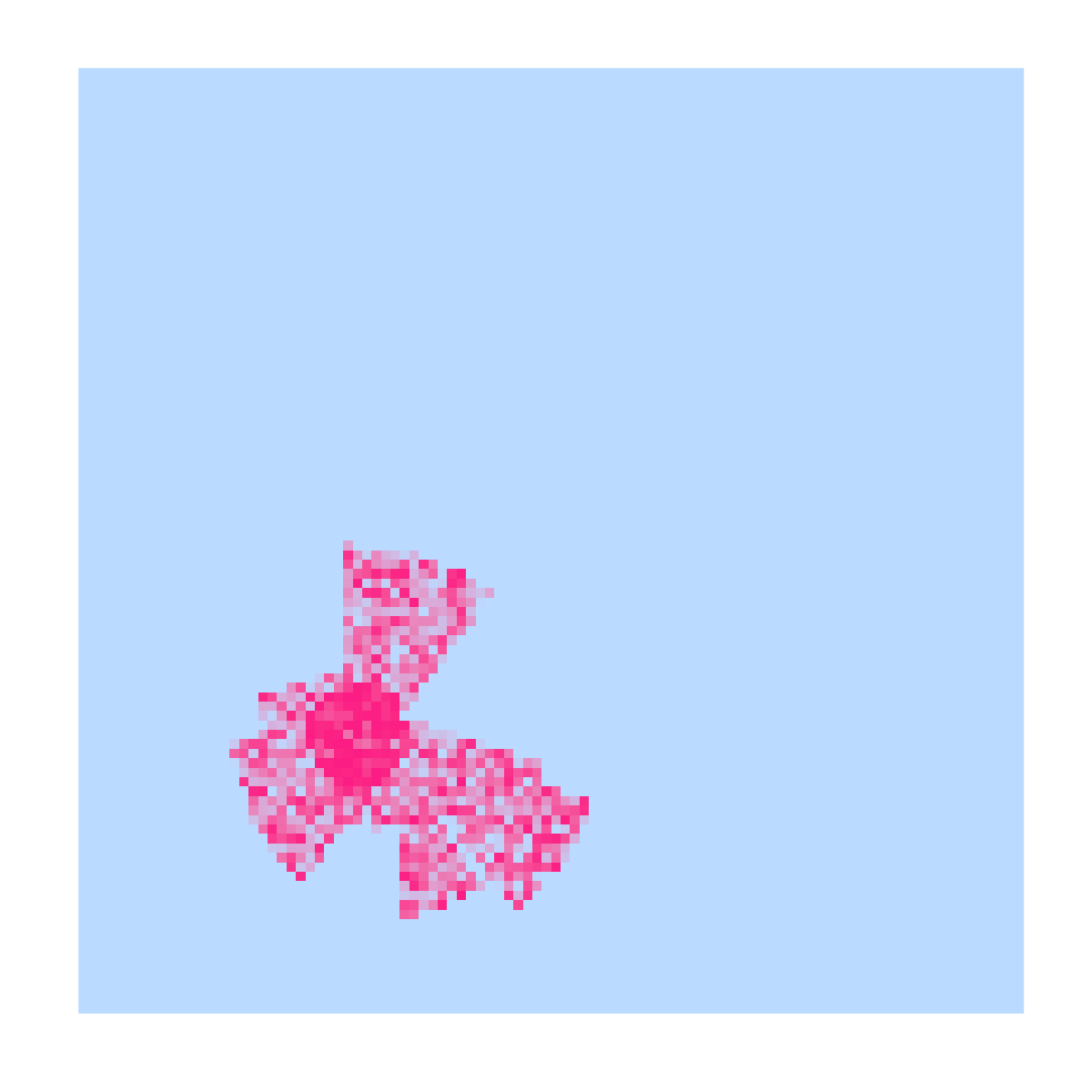}
    \end{subfigure}
    
    \caption{Visualization of the \imt under the block structure: the hypotheses in $M_k$, which interactively expands (darker color indicates a lower $p$-value and possible non-null).}
    \label{fig:include}
\end{figure}

We compare the \imt with the \mst and the batch Stouffer test. We use the martingale Stouffer test (MST) with a preordering that starts at the center of the grid, and the following hypotheses are included into the preordering in randomly chosen (data-independent) directions such that the hypotheses always form a single cluster. Our simulation has $10^4$ hypotheses arranged in a $100 \times 100$ grid with a disc of about 150 non-nulls, placed either at the grid center and or at a corner of the grid. We use Setting~\ref{set:simple} as defined in Section~\ref{sec:power}, where we varied the non-null mean as $(0, 0.3, 0.6, 0.9, 1.2, 1.5, 1.8)$. 
%For this experiment and moving forward, the type-I error is $\alpha = 0.05$.
%, and the power is estimated using 100 repetitions. 

\begin{figure}[t]
    \centering
    \begin{subfigure}[t]{0.48\textwidth}
        \centering
        \includegraphics[width=0.8\linewidth]{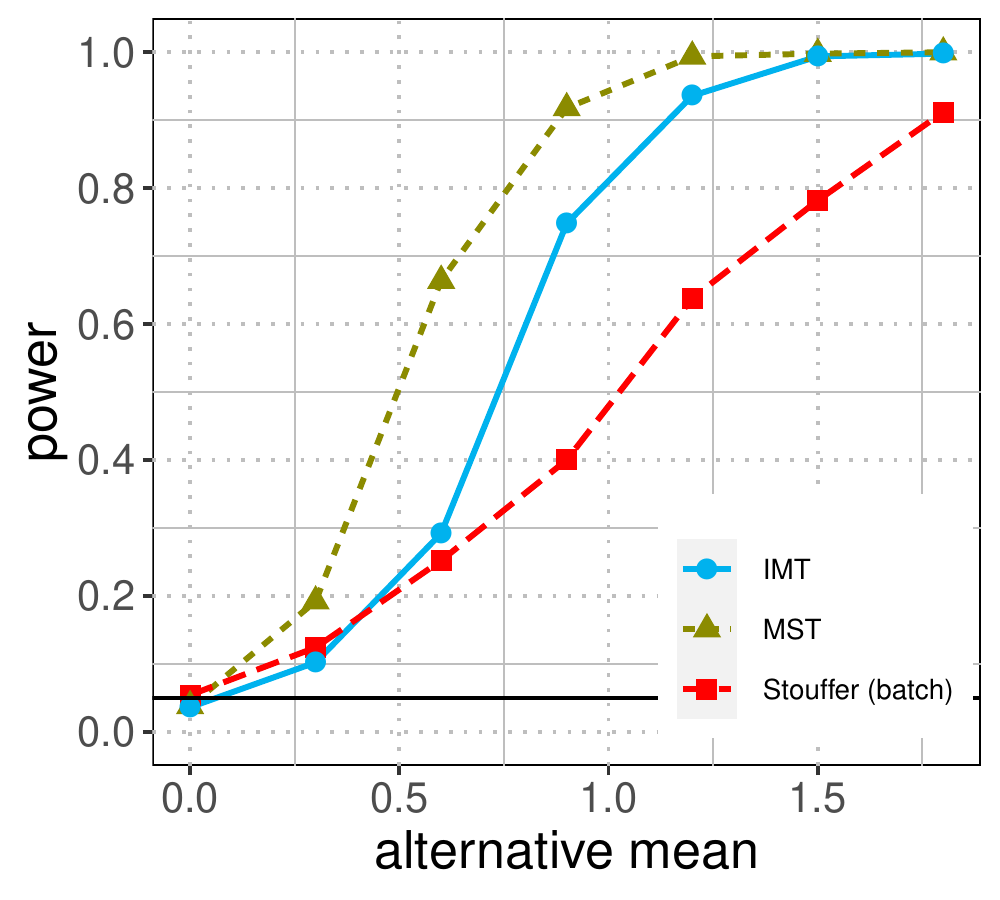}
        \caption{The power against non-null signal. The non-null block is in the grid center. %center of the grid.
        }
        \label{fig:center}
    \end{subfigure}
    \hfill
    \begin{subfigure}[t]{0.48\textwidth}
        \centering
        \includegraphics[width=0.8\linewidth]{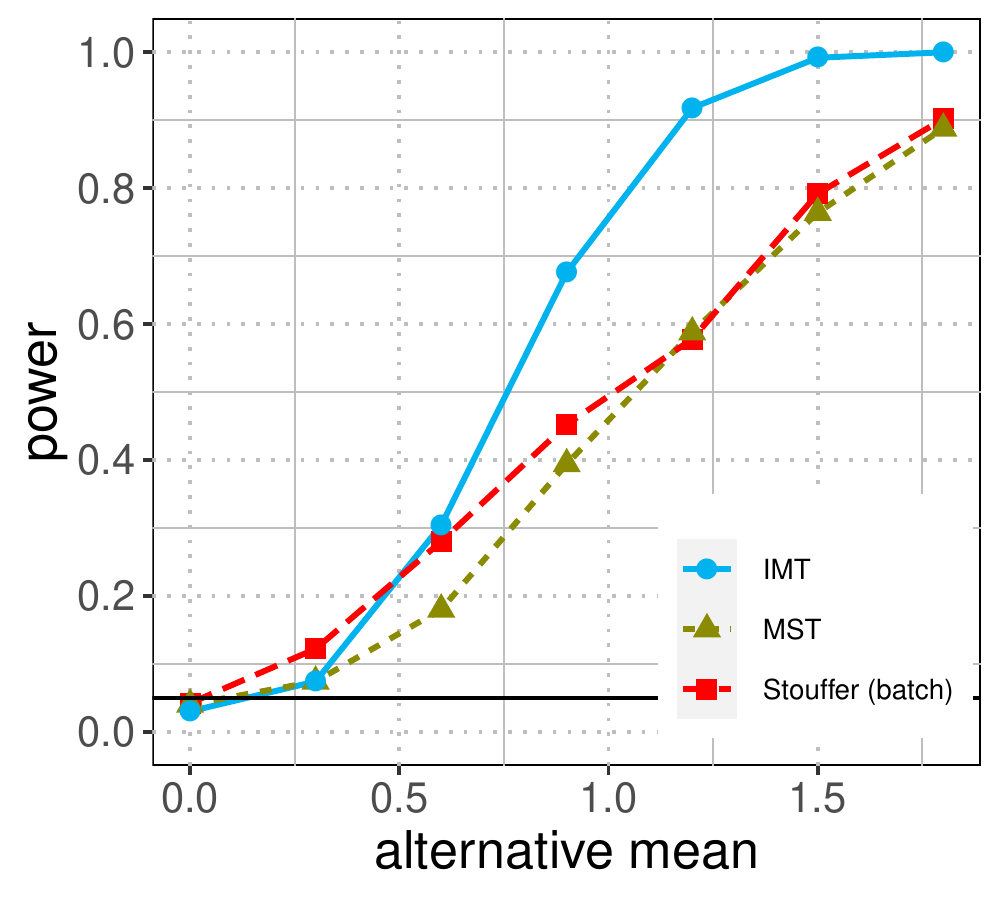}
        \caption{The power against non-null signal. The non-null block is in the grid corner.
        %corner of the grid.
        }
        \label{fig:uncenter}
    \end{subfigure}
    
    \caption{Testing the \imt (IMT), the \mst (MST), and the batch Stouffer test with varying alternative mean under a block non-null structure (batch setting). The MST has lower power when the non-null is not in the center, whereas the IMT has high power in both cases. Type-I error corresponds to the power when the alternative mean value is zero. The horizontal line corresponds to the target type-I error level $\alpha = 0.05$.}
    \label{fig:block}
\end{figure}

The \imt has high power for both positions of the non-null block, whereas the power of \mst drops quickly when the block is not at the center (Figure~\ref{fig:block}), which is because the \mst does not have information of the block position (its preordering starts from the center by default), whereas the \imt uses masked \pvalues to learn the block position. It is worth noting that even with a bad preordering, the martingale Stouffer test does not do worse than the batch version, but has much higher power with a good preordering.

\begin{remark}
As mentioned in the introduction, we do not intend to claim that the \imt is in any sense the ``best'' test for this problem setting. It is possible, or even likely, that several other generic tests (Bonferroni, chi-squared, higher criticism, or many others) or specialized tests (scan statistics) might have higher power. We discuss the comparison with two recent methods: the adaptively weighted Fisher test \citep{li2011adaptively,fang2019properties,huo2020p} and the weighted Higher Criticism \citep{zhang2020incorporating} in Appendix~\ref{apd:alter_methods}. Our goal in this section is to demonstrate the tradeoffs between the batch and martingale versions of the same test (Stouffer in this case), and the interactive versus preordered martingale tests.  Also note that the power of our martingale tests depends crucially on the preordering, or on the model and heuristic used to form the ordering interactively, and perhaps better models/algorithms might further improve the power of our own tests. We chose settings that are easy to visualize for intuition, keeping in mind that our tests apply to any general covariates $x_i$, and prior knowledge or structural constraints, any working models, etc.
\end{remark}

\subsection{A sub-tree of non-nulls in a tree of hypotheses} \label{sec:ipm_tree}

In applications such as wavelet decomposition, the hypotheses can have a hierarchical structure, where the child can be a non-null only if its parent is a non-null. %We consider the hierarchical structure in two settings, the batch setting and the online setting. 
%\paragraph{A fixed tree in the batch setting}  
The hierarchical structure is again encoded in modeling the posterior probability of being non-null~\eqref{eq:p_model} by adding a partial order constraint on $\pi_i$ that
$$
\pi_i \geq \pi_j, \quad \text{if $i$ is the parent of $j$}.
$$
Also, the hierarchical structure is imposed in the strategy of update $M_k$ such that $M_k$ should keep as a sub-tree. Specifically, we compare the posterior probabilities of being non-null for all the leaf nodes of $M_k$ and choose the highest one. 

We compare the \imt with the \mst and Stouffer's test, where the \mst order the hypotheses by level and from left to right within level. We simulate a tree of five levels (the root has twenty children and three children for each parent node after that) with over 800 nodes in total and 7 of them being non-nulls. Each node tests if a Gaussian is zero mean as described in Setting~\ref{set:simple}, where we vary the mean value for the non-nulls as $(0, 0.5, 1, 1.5, 2, 2.5, 3, 3.5, 4)$. The \imt is implemented without modeling the posterior probabilities of being non-null for the sake of computational cost. The \imt has a higher power especially when the signal is strong so that the masked \pvalues provide a better guide on the $M_k$ update (Figure~\ref{fig:batch_tree_power}).

\begin{figure}[h!]
    \centering
        \includegraphics[width=0.45\linewidth]{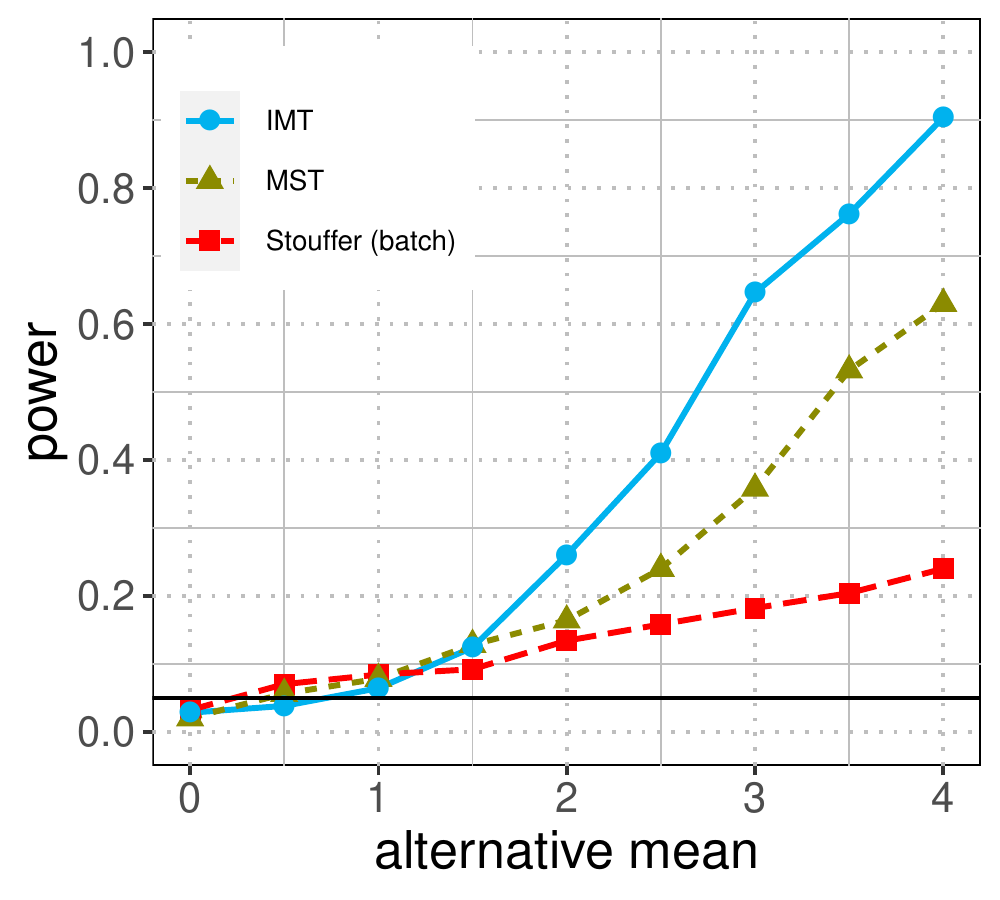}
        \caption{Power of the \imt (IMT), the \mst (MST), and the batch Stouffer test under a hierarchical structure. Hypotheses form a fixed tree (batch setting) with non-nulls only on a sub-tree. When the alternative mean is big, masked \pvalues and the hierarchical non-null structure lead to a good ordering and hence high power for the IMT.}
        \label{fig:batch_tree_power}
\end{figure}

\begin{figure}[h!]
    \centering
     \begin{subfigure}[t]{0.3\textwidth}
        \centering
        \includegraphics[width=1\linewidth]{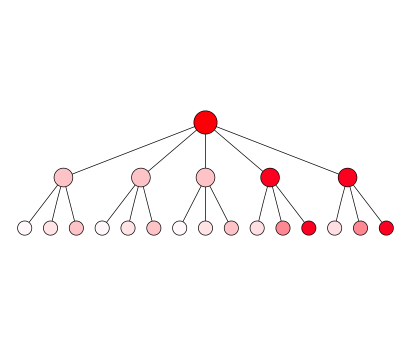}
        \caption{Hypothesis tree with decreasing non-null probability, which is marked by fading red nodes.}
        \label{fig:dec_tree}
    \end{subfigure}
    \hfill
    \begin{subfigure}[t]{0.33\textwidth}
        \centering
        \includegraphics[width=1\linewidth]{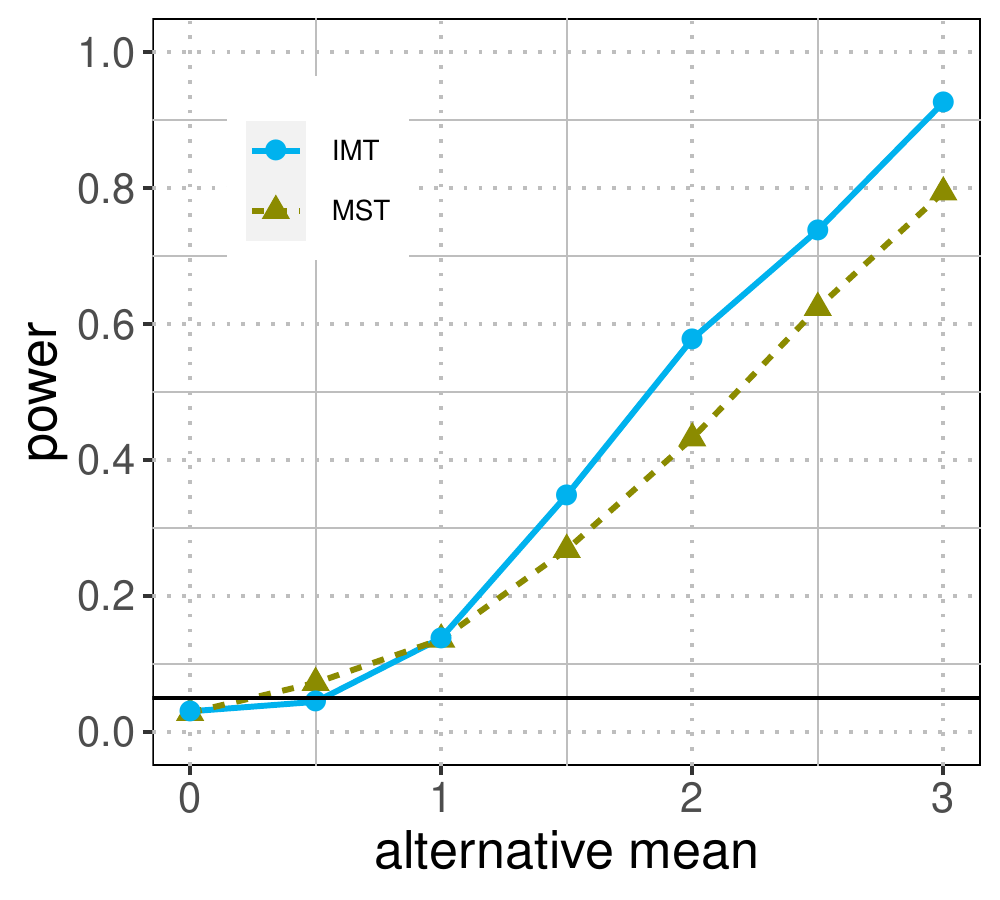}
        \caption{Power against alternative mean in a hypothesis tree with decreasing probability of being non-null.}
        \label{fig:dec_tree_power}
    \end{subfigure}
    \hfill
    \begin{subfigure}[t]{0.33\textwidth}
        \centering
        \includegraphics[width=1\linewidth]{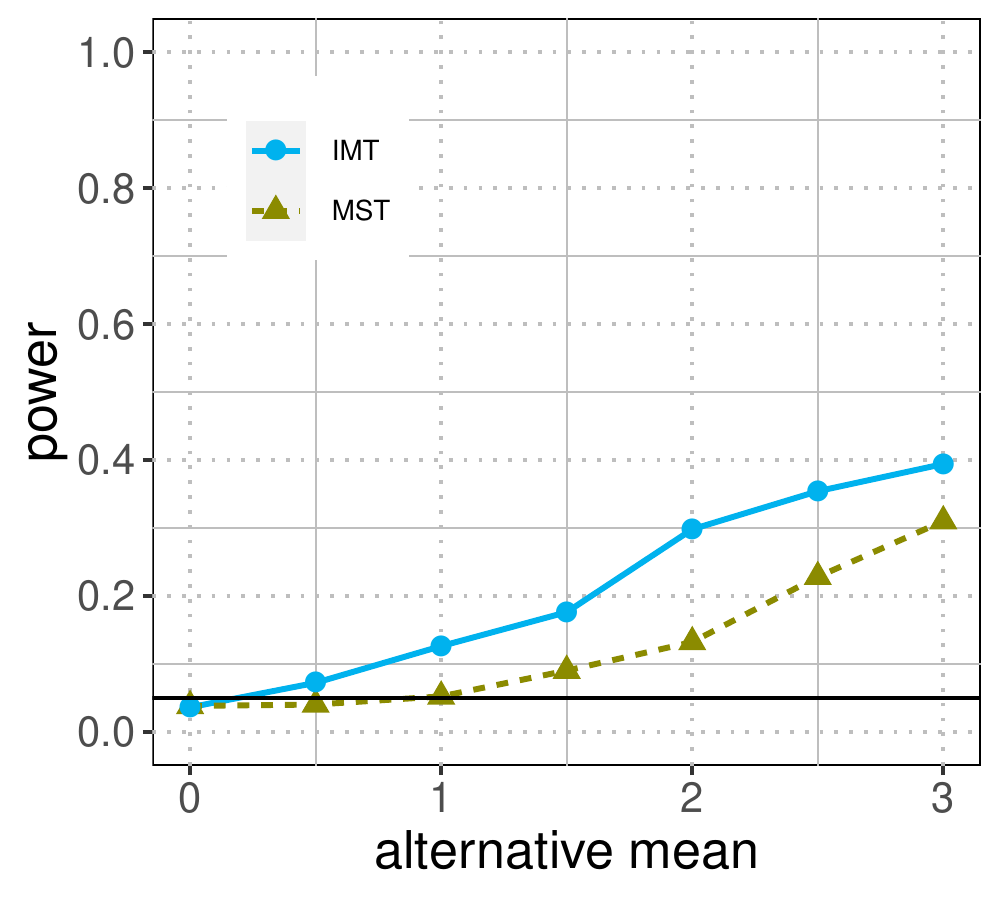}
        \caption{Power against alternative mean in a hypothesis tree with increasing probability of being non-null.}
        \label{fig:inc_tree_power}
    \end{subfigure}
    
    \caption{Hypothesis tree in the batch setting with decreasing/increasing probability of being non-null. Testing the \imt (IMT) with a model for the posterior probability of being non-null, which has higher power than the \mst (MST) in both cases.}
    \label{fig:treeS}
\end{figure}

The \imt with modeling is implemented on a smaller tree with 121 nodes (five levels and three children for each parent node) and 7 of them being non-nulls on a subtree. We consider two types of hierarchical non-null structure: one with the probability of being non-null decreasing down the tree, and one with increasing probability, which means the parent cannot be a non-null unless its children are non-nulls. The result is consistent with the above: the \imt has higher power than the non-interactive \mst (Figure~\ref{fig:treeS}). Compared with decreasing probability of being non-null, both methods have lower power for the tree with an increasing probability of being non-null, because in the latter case, the non-nulls gathered at later generations where there are more nulls and the non-nulls are sparser.

\subsection{Structures in the online setting} 
Recall that in the online setting, a potentially infinite number of hypotheses arrive, and the \postmt and \imt use some discarding rules to only allow promising non-nulls entering $M_k$. This section presents two examples of non-null structures in the online setting, and demonstrates the power of the interactive test as follows.

\paragraph{Blocks of non-nulls in a growing sequence of hypotheses.} \label{sec:interactive_online}
Suppose the non-nulls arrive as blocks. In other words, the next hypothesis is more likely to be a non-null if the last arrived hypothesis is truly non-null; and vise versa. Let the discarding rule in the \imt be $g(p_t) > c_t$, where $c_t = c = 0.05$ by default. The \imt adjusts $c_t$ for $t > 10$ based on previous $p$-values: it alleviates the discarding rule by increasing $c_t$ to $2c$ if the ten $p$-values prior to $t$ ($p_{t-10}, \ldots, p_{t-1}$) are all less than $0.1$; otherwise, it decreases $c_t$ to~$c/4$. For a fair comparison, the discarding threshold in the \postmt is set to $c = 0.05$.

\begin{figure}[ht]
    \centering
    \includegraphics[width=0.4\linewidth]{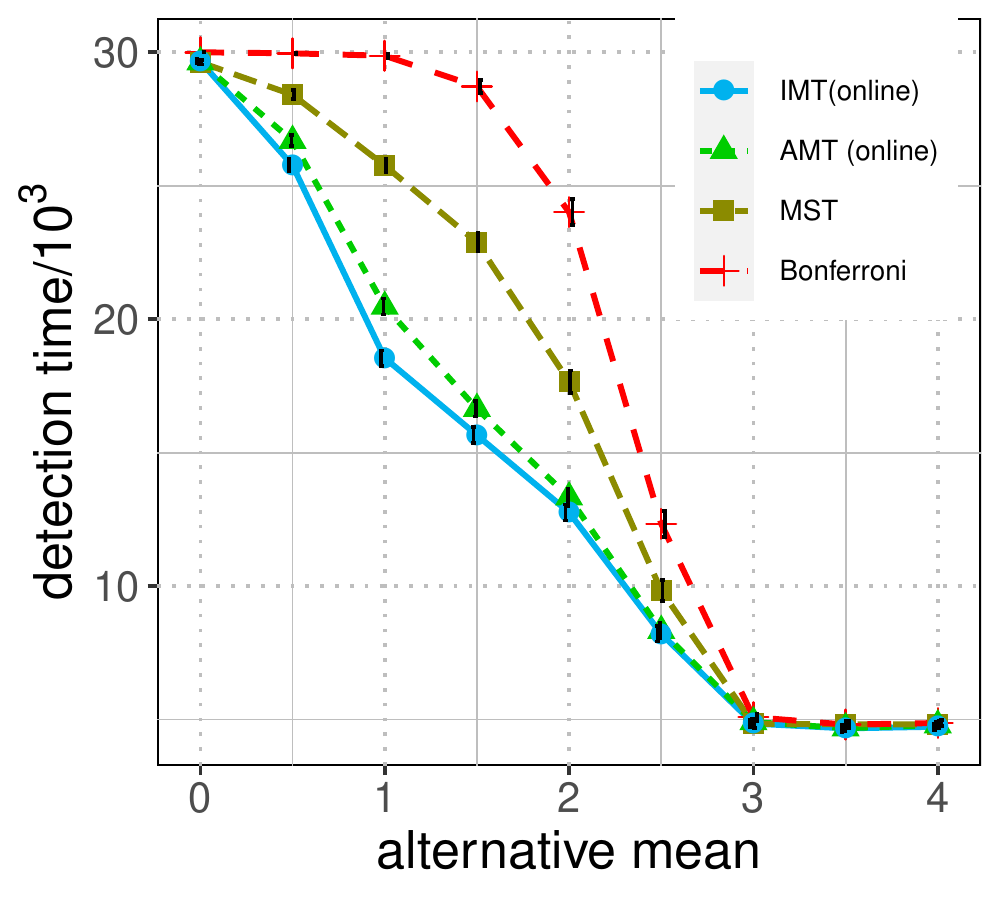}
    \caption{Number of hypotheses needed to reject the global null (detection time) in the online setting of the \imt (IMT), the \postmt (AMT), the \mst (MST), and the Bonferroni test when varying the alternative mean $\mu$. The non-nulls arrive in blocks, and on average, every $10^4$ hypotheses contain a block of $500$ non-nulls. The length of the error bar is two standard error. The \imt is the first to reject the global null because it incorporates the block structure and adjusts the discarding threshold based on past $p$-values.}
    \label{fig:imt_online}
\end{figure}

The \imt is the first to reject the global null since its discarding rule accounts for the block structure (see Figure~\ref{fig:imt_online}). This advantage is more evident when the non-null signal is mild ($\mu < 3$), where the prefixed discarding rule in the \postmt might be too strict or lenient, while the \imt can adjust the rule accordingly. In practice, the adjustment on the discarding threshold can also utilize side information and prior knowledge, if provided.

\paragraph{A sub-tree of non-nulls in a growing tree of hypotheses.}

The online tree grows a new level at every step, with the probabilities of being non-null no bigger than their parents. For an arriving level $k$, the \imt models the posterior probability of being non-null $\pi_j^{(k)}$ for the new hypothesis $H_j$ by equation~\eqref{eq:p_model}, where the prior probability of being non-null is the same as its direct parent $H_i$ from the level $k-1$,
$$
\pi_j^{(0)} = \pi_i^{(k-1)}, \quad \text{if $i$ is the parent of $j$}.
$$
For simplicity, we set the discarding rule in the \imt to be $\pi_i^{(k)} < c$ where $c = 0.6$ as a default. That is, hypothesis with $\pi_i^{(k)} < 0.6$ are omitted. We compare the \imt with the \mst and a classical method, the online Bonferroni method (with the sequence of significance levels $\{\alpha_k\}_{k=1}^\infty$ decreases at the rate of $1/[k(\log k)^2]$). In the online setting, their performances are assessed by the averaged number of hypotheses required to reject the global null (detection time); the smaller the better. 

\begin{figure}[ht]
    \centering
        \centering
        \includegraphics[width=0.45\linewidth]{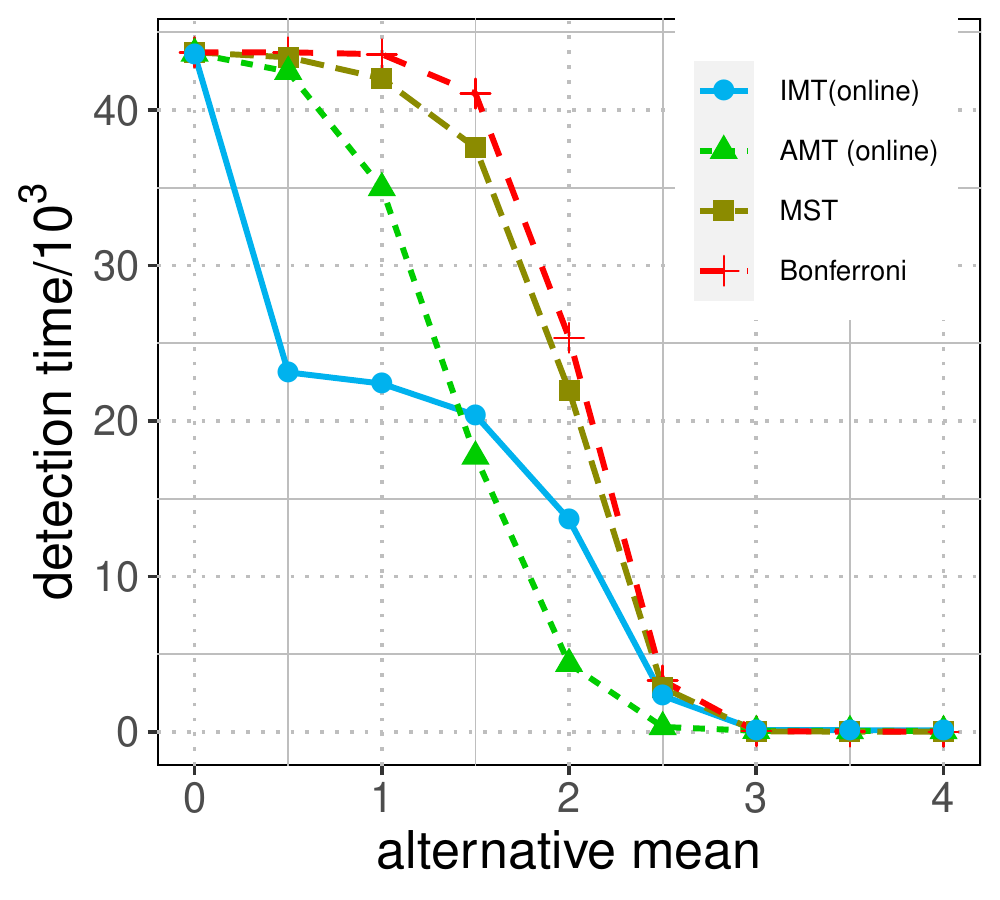}
        \caption{Number of hypotheses needed to reject the global null (detection time) in the online setting of the \imt (IMT), the \postmt (AMT), the \mst (MST), and the Bonferroni test when varying the alternative mean in a growing hypothesis tree (online setting). IMT incorporates the hierarchical structure of non-nulls, so it is the first to reject the global null when the non-null signal is mild ($\mu < 2$).}
    \label{fig:online_tree_power}
\end{figure}

We simulate the online tree with forty children for the root node and three children for each parent node after that. The probability of being non-null for the first generation children is set to $0.1$ for 30 children and $0.9$ for the other 10 children. The ongoing three children of each node reduce the probability of being non-null as by a proportion of $100\%, 20\%, 0\%$. Each node tests if a Gaussian is zero mean as described in Setting~\ref{set:simple}, where we vary the mean value for the non-nulls as $(0, 0.5, 1, 1.5, 2, 2.5, 3, 3.5, 4)$. The \imt needs much shorter time when the non-null signal is not strong ($\mu < 2$) because it incorporates the hierarchical structure and estimates the probability of an arriving hypothesis being non-null with the aid of the data from its ancestors (Figure~\ref{fig:online_tree_power}). When the alternative mean is large, $p$-values themselves provide strong evidence of non-null, while the algorithm using the tree structure would treat all children from a non-null parent as promising non-nulls while at least one of them is null in our simulated example. Thus, the online AMT that uses only the $p$-value information can have better performance when the alternative mean is large.
 
Overall, both in the batch setting and the online setting, the \imt has a higher detection power than the \mst, Stouffer's test, and the online Bonferroni method, provided with structured alternatives. We again remark the advantage of the \imt in practice where prior knowledge often exists in various forms. The \imt is highly flexible in that it allows modifications to the strategy of expanding $M_k$, at any step and with any form as a human analyst (or a program) wants to.
The next section demonstrates one more advantage of the \imt under the \textit{conservative} nulls (see definition in the next section).

\section{Robustness to conservative nulls} \label{sec:robust}

In all the above simulations, the nulls have uniformly distributed $p$-values, but in practice they could be stochastically larger than uniform (condition \eqref{cond:stoch_dom}) or mirror-conservative (condition \eqref{cond:mirror_consv}); both are henceforth referred to as ``conservative nulls''. For simplicity, this section focuses on the conservative null with an increasing density, which satisfies both descriptions in condition~\eqref{cond:stoch_dom} and condition~\eqref{cond:mirror_consv}. Such conservative nulls diminish the detection power of many batch global null tests like Fisher's and Stouffer's methods. For example, each term in Stouffer's test is $\Phi^{-1}(1 - p)$, whose value can be smaller than $-2$ if the $p$-value is bigger than $0.98$; thus as the nulls grow more conservative and their $p$-values closer to one, its power can quickly drop to zero. 

To examine the effect of conservative nulls on the \imt, we first propose an alternative definition of a masked $p$-value as $\widetilde g(p) := \min(p, (p+\tfrac12) \text{mod} 1)$. Recalling that $g(p)=\min(p,1-p)$, we call $g$ and $\widetilde g$ as the tent and railway functions respectively (see Figure~\ref{fig:oldG}, Figure~\ref{fig:newG}). Note that if the $p$-value is exactly uniformly distributed, $\widetilde g(p)$ is still independent of $h(p)$, and $g(p)$ has the same distribution as $\widetilde g(p)$, and so all previous results still hold with the new masking function in place of the old one. (The error control when using the railway masking function can be found in Appendix~\ref{apd:error_control_railway} for uniform and conservative $p$-values.) However, when the $p$-values are conservative, the new masking function has a clear advantage. To see this, consider a $p$-value of 0.99. The original masked $p$-value would be 0.01, thus causing the methods to potentially confuse this with a non-null masked $p$-value, but the new masked $p$-value would be 0.49, which the methods would easily exclude as being a null. 

\begin{figure}[h]
    \centering
    \begin{subfigure}[t]{0.3\textwidth}
        \centering
        \includegraphics[width=1\linewidth]{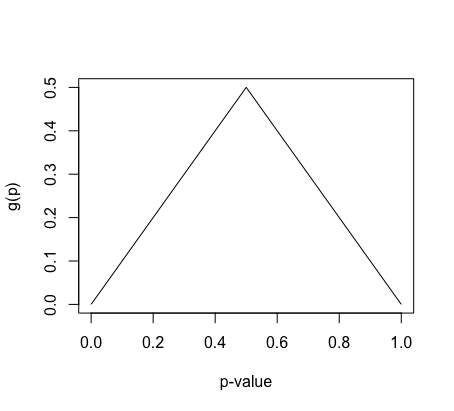}
        \caption{The original masking function (tent). }
        \label{fig:oldG}
    \end{subfigure}
    \hfill
    \begin{subfigure}[t]{0.3\textwidth}
        \centering
        \includegraphics[width=1\linewidth]{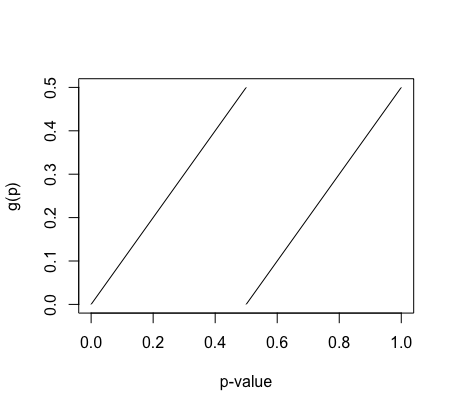}
        \caption{The modified masking function (railway). }
        \label{fig:newG}
    \end{subfigure}
    \hfill
    \begin{subfigure}[t]{0.35\textwidth}
        \centering
        \includegraphics[width=0.8\linewidth]{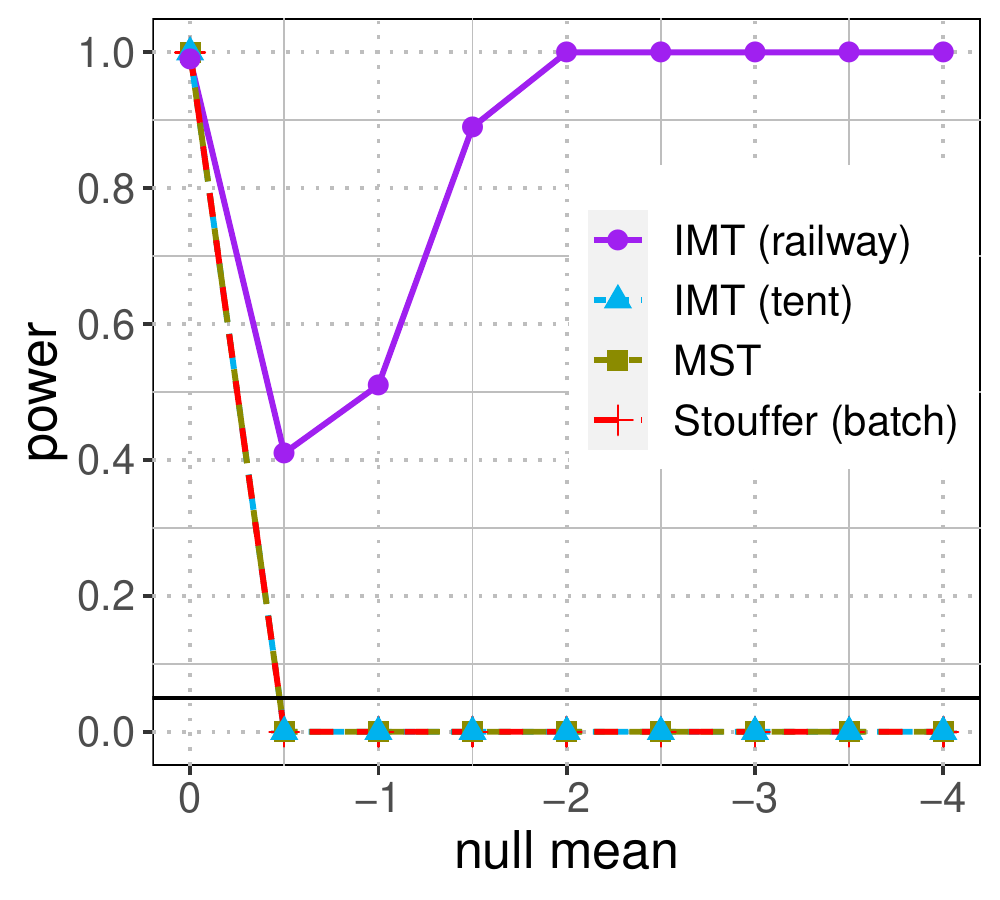}
        \caption{Power when varying the null means (negative values indicate conservative nulls).
        }
        \label{fig:robust}
    \end{subfigure}
    
    \caption{Comparing the \imt (IMT) with tent and railway masking  functions, the \mst (MST), and Stouffer's test for the robustness to conservative nulls. The IMT with railway function is more robust. %to conservative nulls.
    }
    \label{fig:conservative}
\end{figure}

As an example, we consider the simple case with no prior knowledge and simulate 1000 hypotheses with 100 non-nulls. Each hypothesis is a one sided hypothesis on whether a Gaussian is zero mean as described in Setting~\ref{set:simple}. The alternative mean values are set to 1.5. The mean values for nulls are negative so that the resulting null $p$-values are conservative. We tried nine values from $0$ to $-4$ for the mean of nulls, with a smaller value indicating higher conservativeness. Figure~\ref{fig:robust} compares the power of the interactive martingale test with tent and railway functions, the martingale Stouffer test and Stouffer's test. The power of most tests drops sharply to zero, but the power of \imt with the new railway function initially dips and then improves. The reason for the initial dip is that the increasingly conservative nulls influence the interactive martingale test in two opposite directions: (a) more null $h(p)$ values are now equal to $-1$ (instead of being $\pm1$ with equal probability), and this hurts power because including a null $h(p)$ in the martingale almost always lowers its value (instead of increasing and lowering its value with equal probability), (b) as the $p$-value gets more conservative, $g(p)$ will approach $0.5$ for nulls, allowing the tests to easily distinguish between the non-nulls and the nulls to increase the power. When the $p$-values are only slightly conservative,  effect (a) dominates and hurts power, causing the initial dip in power in Figure~\ref{fig:robust}.

\section{Anytime-valid \pvalues and safe $e$-values} \label{sec:anytime}
In this paper, we defined the problem as testing the global null at a predefined level $\alpha$. Instead, we could ask the test to output a sequential or anytime $p$-value for the global null, which is a sequence of \pvalues $\{\mathfrak{p}_t\}_{t=1}^\infty$ that are valid at any stopping time. We use $\mathfrak{p}_t$ to differentiate it from $p_t$ --- the latter is the input to our global null test, the former is the desired output of our global null test. Specifically, $\mathfrak{p}_t$ is a function of $p_1,\dots,p_t$, such that if $p_1,\dots,p_t$ are all null, then $\mathfrak{p}_t$ will be a valid $p$-value (its distribution will be stochastically larger than uniform), and this fact will be true uniformly over $t$.  

Recall that all of the proposed procedures follow the same form; we reject the global null if
\begin{align*}
    \exists k \in \{1,2,\ldots\} \text{ s.t. } S_k > u_\alpha(k),
\end{align*}
where $S_k$ is a martingale under the global null and $u_\alpha(k)$ is a sequence of upper bounds at level $\alpha$. The anytime \pvalue $\mathfrak{p}_t$ at time $t$ is defined by the smallest level at which our test would have rejected the null at or before time $t$.

\begin{definition}
The \pvalue $\mathfrak{p}_t$ can be defined as the smallest level $\alpha$ at which the test would have rejected at or before time $t$:
\begin{align}
    \mathfrak{p}_t = \inf\{\alpha: \exists k \in \{1,\ldots, t\} \text{ s.t. }  S_k > u_\alpha(k)\}.
\end{align}
\end{definition}

\noindent Viewing  $u_\alpha(k)$ as a function of two variables $k, \alpha$, we define an inverse function at a fixed $k$ with respect to the level $\alpha$ as
\begin{align*}
    u^{-1}(S; k) = \alpha \text{ iff } u_\alpha(k) = S,
\end{align*}
which is unique for a given input $S$ since the bound $u_\alpha(k)$ is continuous and strictly decreasing in $\alpha$. 
Then the \pvalue at time $t$ can be computed as
\begin{align*}
    \mathfrak{p}_t = \min_{1\leq k\leq t}\{ u^{-1}(S_k; k)\}.
\end{align*}

\noindent As one example, if $u_\alpha(k)$ is the linear bound as in test~\eqref{test:mst_lin}, its inverse is
\begin{align*}
    u^{-1}(S; k) = \exp\left\{-2m\frac{S^2}{(k + m)^2}\right\}.
\end{align*}

\iffalse
\begin{align*}
    u^{-1}(S; k) = \exp\left\{-\frac{2}{m}\left(S - \sqrt{\frac{-\log\alpha}{2m}}k\right)^2\right\}.
\end{align*}
\fi

\noindent The \pvalue sequence $\{\mathfrak{p}_t\}_{t=1}^\infty$ has the following nice properties,
\begin{enumerate}
    \item the anytime \pvalues decrease with time:
    \begin{align*}
        \mathfrak{p}_{t + j} \leq \mathfrak{p}_t \text{ for all } j,t > 0.
    \end{align*} 
    
    \item $\inf_{t \in \mathcal{I}} \mathfrak{p}_t$ is also a valid \pvalue for the global null:
    \begin{align*}
        \mathbb{P}(\inf_{t \in \mathcal{I}} \mathfrak{p}_t \leq x) \leq x ~\equiv~ \mathbb{P}\{\exists t: \mathfrak{p}_t \leq x\} \leq x , \quad \text{ for all } x \in (0,1).
    \end{align*}
    In fact $\inf_{t \in \mathcal{I}} \mathfrak{p}_t$ is the global \pvalue: the smallest level $\alpha$ at which the test would ever reject:
    \begin{align*}
         \inf_{t \in \mathcal{I}} \mathfrak{p}_t = \inf\{\alpha: \exists k \in \{1,2,\ldots\} \text{ s.t. } S_k > u_\alpha(k)\}.
    \end{align*}
    
    \item for any arbitrary stopping time $\tau \in \mathcal{I}$, $\mathfrak{p}_\tau$ is a valid \pvalue:
    \begin{align*}
        \mathbb{P}(\mathfrak{p}_\tau \leq x) \leq x, \quad \text{ for all } x \in (0,1).
    \end{align*}
\end{enumerate}
The second property implies that the \pvalue at any time $t$ is a valid \pvalue. Recalling that fixed-sample $p$-values are dual to fixed-sample confidence intervals, it is also the case that anytime $p$-values are dual to anytime confidence intervals. These ideas are explored and explained in depth by Howard et al.~\cite{howard2020time1}. 
An alternative to anytime $p$-values, called safe $e$-values, was recently proposed by Gr\"{u}nwald et al.~\cite{grunwald2019safe}, and their relationship to confidence sequences, sequential tests and anytime $p$-values was detailed by Ramdas et al.~\cite{ramdas2020admissible}. Specifically, optionally stopped nonnegative supermartingales, which underlie all our bounds, yield safe $e$-values. The main takeaway message for our current paper is that all aforementioned tests can be reformulated as calculating anytime $p$-values or safe $e$-values. To exactly recover our level $\alpha$ tests, we just stop and reject at the first time that $\mathfrak{p}_t \leq \alpha$ (or equivalently, the $e$-value exceeds $1/\alpha$).

\section{Alternative masking functions} \label{sec:other_decompose}

In most of this paper, we have considered one way of decomposing \pvalue as equation~\eqref{default_decompse}, but interactive tests can be developed for other decompositions. Shafer et al.~\cite{shafer2011test} discuss a class of \textit{calibrators} (functions) for the \pvalues~$f:[0,1] \to [0,\infty)$ such that $f$ is non-increasing and $\int_0^1 f(p) dp \leq 1$. They consider a ``product-martingale'' $\prod_{i=1}^k f(p_i)$ and reject the null if
\[
\exists k \in \mathbb{N}: \prod_{i = 1}^k f(p_i) \geq \alpha^{-1},
\]
which uses Ville's inequality (an infinite-horizon uniform extension of Markov's inequality). For each calibrator~$f$, an interactive test can be developed by viewing $f(p)$ as the missing bit for inference and finding the corresponding masked \pvalue $g(p)$ for interactive ordering. Type-I error is controlled if the pair of $f(p)$ and $g(p)$ are \textit{mean independent} under the null: 
\begin{align}
    \mathbb{E}(f(p) \mid g(p)) = \mathbb{E}(f(p)).
\end{align}
Lei et al.~\cite{lei2017star} provide a recipe to construct mean independent $g(p)$ given any calibrator. The interactive test given a pair of $f(p)$ and $g(p)$ follows the same procedure as Algorithm~\ref{alg:imt}, with the rejection rule at each step $k$ changed to
\begin{align}
    \prod_{i = 1}^{M_k} f(p_i) \geq \alpha^{-1}.
\end{align}
or equivalently
\[
\sum_{i = 1}^{M_k} \log f(p_i) \geq \log(\alpha^{-1}).
\]

We explore a class of calibrators $f_c$ parameterized by a constant $c \in (0,1)$:
\begin{align} \label{eq:alter_masking}
    f_c(p) = c p^{c  - 1}.
\end{align}
In an interactive test, $\log f_c(p_i)$ is viewed as playing the role of the missing bit for inference (even though it is technically not one bit, we use the same terminology for simplicity). To calculate the corresponding masked \pvalue, we define function $H_c(x) = x^c - x$ for $x \in [0,p_*]$, where $p_*$ is the solution of $\log f_c(p) = 0$. The masked $p$-value is defined as
\[
g_c(p_i)=
\begin{cases}
p_i, \text{ if } p_i \leq p_*\\
s(p_i), \text{ otherwise},
\end{cases}
\]
where for any $p_i > p_*$, we define $s(p_i)$ as the unique solution of $H_c(x) = H_c(p_i)$ within the range $[0, p_*]$. Both $p_*$ and $s(p_i)$ can be obtained numerically by a simple binary search since $\log f_c(p)$ and $H_c(x)$ are monotonic. To compare different options of missing bits, Figure~\ref{fig:hp_tradeoff} shows the maps for original $h(p_i)$ (one bit) and the log term $\log(f_c(p_i))$, since they play similar roles in the interactive tests as forming cumulative sum statistics. 

Different choices of missing bit and the corresponding masked \pvalue reflect a tradeoff between the information of \pvalues allocated for inference and interactive ordering. Compared with one bit $h$ defined in equation~\eqref{default_decompse}, $f_c$ maps small \pvalues to large value (Figure~\ref{fig:h_varyc}), so that an evident non-null leads to a big increment in the test statistics and higher likelihood of being detected. In other words, $f_c$ takes more information from \pvalues than $h$ for inference. However, the corresponding masked \pvalue is less informative to suggest a good ordering. It's because a wider range of \pvalues that are bigger than 0.5 (from nulls) would have small masked \pvalue (Figure~\ref{fig:g_varyc}), which mixes with the actual small \pvalues and makes it harder to select possible non-nulls. As $c$ approaches zero, more information is allocated to inference and less for interactive ordering.

We also consider a mixture of $f_c$, denoted as $f_m$:
\begin{align}
    f_m(p) = \int_0^1 c p^{c  - 1} dc \equiv \frac{1 - p + p\log p}{p (\log p)^2}.
\end{align}
The corresponding masked \pvalue $g_m(p)$ can be calculated using the same formula as above except for a new definition of~$H_m(x)$ as $\frac{x - 1}{\log x} - x$. As shown in Figure~\ref{fig:hp_tradeoff}, the amount of information that $f_m$ takes for inference is between $f_{0.2}$ and $f_{0.4}$. 

\begin{figure}[h]
    \centering
    \begin{subfigure}[t]{0.4\textwidth}
        \centering
        \includegraphics[width=1\linewidth]{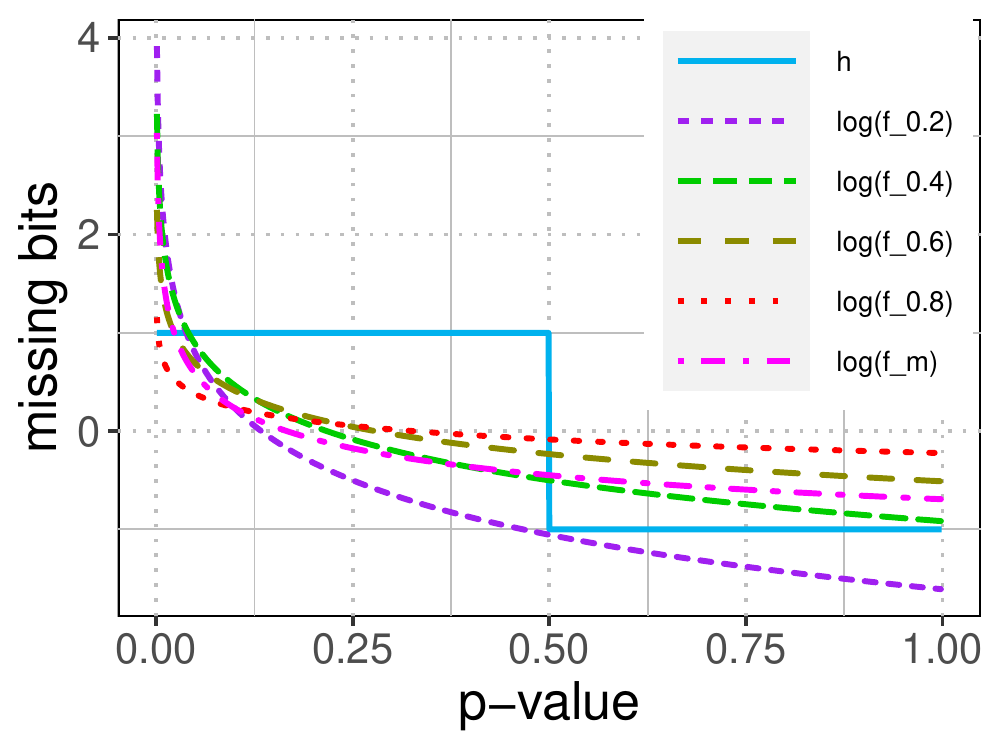}
        \caption{Different maps from \pvalue to the missing bit.}
        \label{fig:h_varyc}
    \end{subfigure}
    \hfill
    \begin{subfigure}[t]{0.4\textwidth}
        \centering
        \includegraphics[width=1\linewidth]{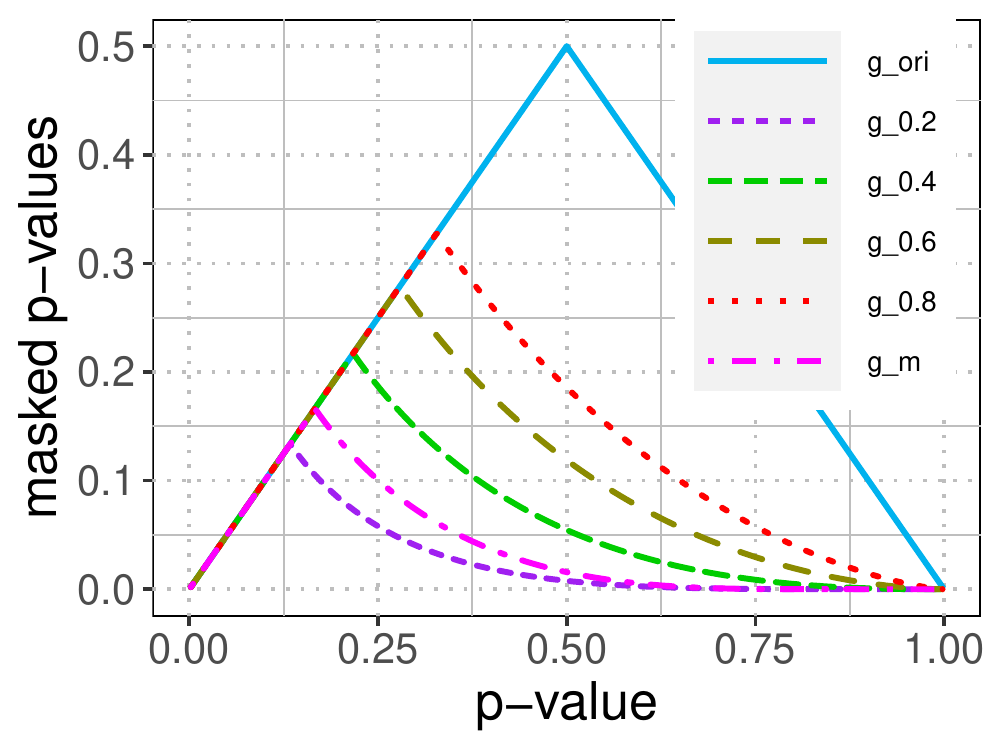}
        \caption{Corresponding maps from \pvalue to the masked \pvalue.}
        \label{fig:g_varyc}
  \end{subfigure}
  
  \caption{Different choices of missing bit and its corresponding masked \pvalue. When small \pvalues (possible non-nulls) are more evident when measured by one choice of the missing bit, they are less distinctive when looking at the corresponding masked \pvalues.
  }
  \label{fig:hp_tradeoff}
\end{figure}

We compare the interactively ordered martingale tests using different missing bits: (a) the original one bit $h(p_i)$ defined in equation~\eqref{default_decompse}; (b) $f_c(p_i)$ where we vary parameter $c$ as $(0.2, 0.4, 0.6, 0.8)$; and (c) the mixed missing bit $f_m(p_i)$. Our simulation uses the structured hypotheses with a cluster of non-nulls (described in Section~\ref{sec:sim_cluster}). The highest power comes from the test with the original definition of the missing bit: $h(p_i) = 2\cdot1\{p_i < 0.5\} - 1$ (Figure~\ref{fig:power_varyc}). 
\vspace{20pt}
\begin{figure}[h!]
        \centering
        \includegraphics[width=0.5\linewidth]{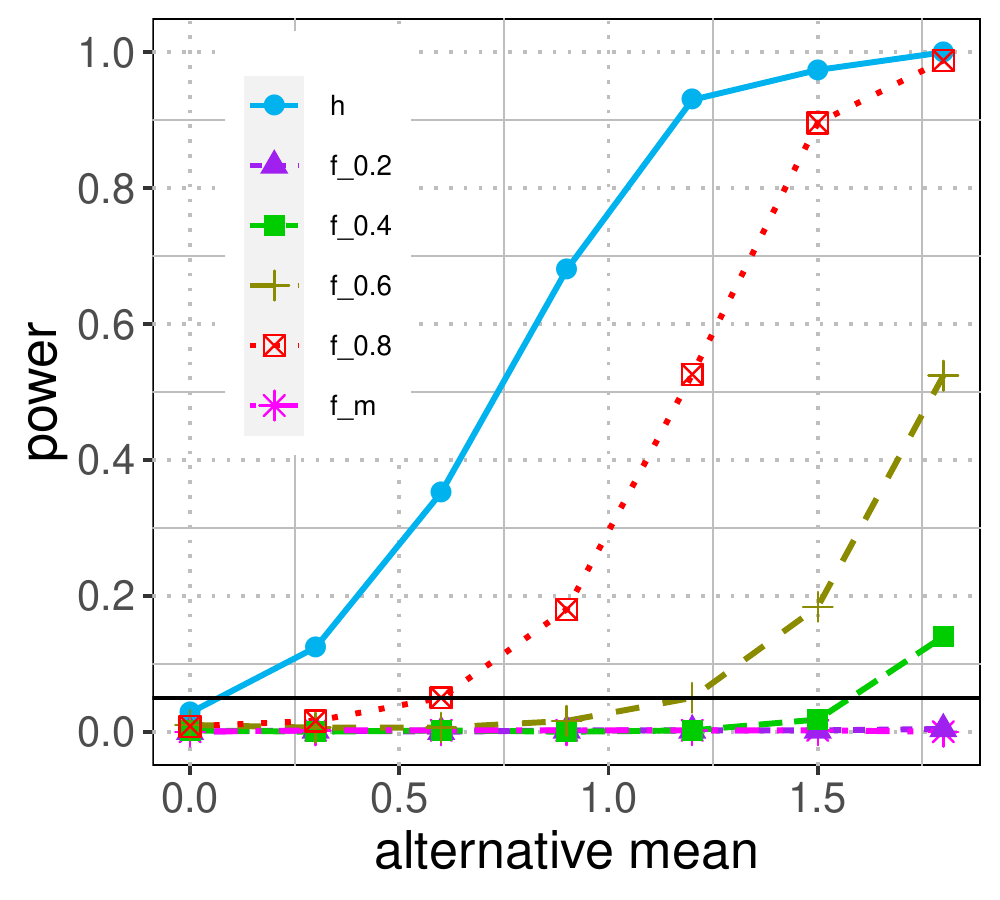}
        \caption{Power of interactive tests using different missing bits. Under the block structure of non-nulls as described in Section~\ref{sec:sim_cluster}, the IMT with the original missing bit defined in equation~(\ref{default_decompse}) has the highest power.}
        \label{fig:power_varyc}
\end{figure}

\vspace{20pt}
However, given that there is a tradeoff between the information contained in the missing bit and the masked \pvalue, and that the masked \pvalue is used together with the prior knowledge for a good ordering, we conjecture that the performance of tests with different missing bits depends on the amount of prior knowledge. When the prior knowledge is informative to order the hypotheses, the test with most of the information in the missing bit has a higher power (an example is the \mst, which has the highest power in Figure~\ref{fig:center}). 
We leave the following as an open question:  under different types of prior knowledge, does there exist and can one determine an ``optimal'' \pvalue decomposition that leads to the highest power?

\section{Summary} \label{sec:dis}

We have introduced martingale analogs of some classical global null tests, and used these to build adaptively ordered martingale tests through the idea of masking. These are further generalized to a protocol for interactively ordered martingale tests that possess the following interesting advantages:
\begin{itemize}
    \item It is a general global null testing framework that can utilize any types of covariates, structural constraints, prior knowledge and repeated user interaction guided by a posited working model, all while provably controlling the type-I error.
    \item It permits the use of Bayesian modeling techniques while retaining frequentist error guarantees.
    \item It applies to both the batch and online settings.
    \item It is robust against conservative nulls.
    \item It has favorable theoretical power guarantees in simple settings, and performs well in simulations.
\end{itemize}

In fact, in most of this paper, we do not need to know the null distribution of the underlying test statistics and be tied to working with $p$-values as inputs. Given test statistics $T_i \in \mathcal{R}_n$ for each hypothesis $H_i$, the framework of the \imt applies as long as there exits two functions $h: \mathcal{R}_n \to \{-1, 1\}$ and $g: \mathcal{R}_n \to \mathcal{R}$ such that 
\begin{align}
    \mathbb{E}\left[h(T_i) \mid g(T_i)\right] \leq 0 \quad \text{ for all } i \in \mathcal{I}.
\end{align}
As an example, if the distribution of the test statistic $T_i$ is symmetric under the null (such as Gaussian with unknown covariance, a t distribution with unknown degrees of freedom, or a centered Cauchy), we can still use $\text{sign}(T_i)$ and $|T_i|$ as $h(T_i)$ and $g(T_i)$ respectively. Indeed, type-I error control (Theorem~\ref{thm:imt_error}) still holds in this setting, since $h(T_i)$ and $g(T_i)$ for the aforementioned decompositions are independent under the null.

We believe that interactive testing protocols are only beginning to be explored in the literature, and constitute both an intellectually fascinating direction for further exploration, as well as a potentially powerful one. Masking (and progressive unmasking) is a promising technique that permits interaction, and it deserves further scrutiny and generalization to other settings.

\section*{Acknowledgements}
We thank the anonymous reviewers for their helpful suggestions. AR acknowledges support from NSF DMS 1916320, and NSF CAREER 1945266. SB acknowledges support from NSF DMS 1713003, and CCF 1763734. LW acknowledges support from NSF DMS 1713003. This work used the Extreme Science and Engineering Discovery Environment
(XSEDE) \citep{towns2014xsede}, which is supported by National Science Foundation grant
number ACI-1548562. Specifically, it used the Bridges system \cite{nystrom2015bridges}, which
is supported by NSF award number ACI-1445606, at the Pittsburgh Supercomputing Center
(PSC).

%%%%%%%%%%%%%%%%%%%%%%%%%%%%Appendix and citation
\bibliographystyle{acm}
\bibliography{rif}

\begin{thebibliography}{10}

\bibitem{arias2017distribution}
{\sc Arias-Castro, E., and Chen, S.}
\newblock Distribution-free multiple testing.
\newblock {\em Electronic Journal of Statistics 11}, 1 (2017), 1983--2001.

\bibitem{barber2015controlling}
{\sc Barber, R.~F., and Cand{\`e}s, E.~J.}
\newblock Controlling the false discovery rate via knockoffs.
\newblock {\em The Annals of Statistics 43}, 5 (2015), 2055--2085.

\bibitem{barlow1972isotonic}
{\sc Barlow, R.~E., and Brunk, H.~D.}
\newblock The isotonic regression problem and its dual.
\newblock {\em Journal of the American Statistical Association 67}, 337 (1972),
  140--147.

\bibitem{donoho2015special}
{\sc Donoho, D., and Jin, J.}
\newblock Special {I}nvited {P}aper: {H}igher {C}riticism for {L}arge-{S}cale
  {I}nference, {E}specially for {R}are and {W}eak {E}ffects.
\newblock {\em Statistical Science\/} (2015), 1--25.

\bibitem{duan2020familywise}
{\sc Duan, B., Ramdas, A., and Wasserman, L.}
\newblock Familywise error rate control by interactive unmasking.
\newblock In {\em International Conference on Machine Learning (accepted)\/}
  (2020).

\bibitem{fang2019properties}
{\sc Fang, Y., Tang, S., Huo, Z., Tseng, G.~C., and Park, Y.}
\newblock Properties of adaptively weighted {F}isher's method.
\newblock {\em arXiv preprint arXiv:1908.00583\/} (2019).

\bibitem{fisher1992statistical}
{\sc Fisher, R.~A.}
\newblock Statistical methods for research workers.
\newblock In {\em Breakthroughs in Statistics}. Springer, 1992, pp.~66--70.

\bibitem{grunwald2019safe}
{\sc Gr{\"u}nwald, P., de~Heide, R., and Koolen, W.}
\newblock Safe testing.
\newblock {\em arXiv preprint arXiv:1906.07801\/} (2019).

\bibitem{howard2020time}
{\sc Howard, S.~R., Ramdas, A., McAuliffe, J., and Sekhon, J.}
\newblock Time-uniform {C}hernoff bounds via nonnegative supermartingales.
\newblock {\em Probability Surveys 17\/} (2020), 257--317.

\bibitem{howard2020time1}
{\sc Howard, S.~R., Ramdas, A., McAuliffe, J., and Sekhon, J.}
\newblock Time-uniform, nonparametric, nonasymptotic confidence sequences.
\newblock {\em The Annals of Statistics (accepted)\/} (2020).

\bibitem{huo2020p}
{\sc Huo, Z., Tang, S., Park, Y., and Tseng, G.}
\newblock P-value evaluation, variability index and biomarker categorization
  for adaptively weighted {F}isher’s meta-analysis method in omics
  applications.
\newblock {\em Bioinformatics 36}, 2 (2020), 524--532.

\bibitem{ignatiadis2016data}
{\sc Ignatiadis, N., Klaus, B., Zaugg, J.~B., and Huber, W.}
\newblock Data-driven hypothesis weighting increases detection power in
  genome-scale multiple testing.
\newblock {\em Nature Methods 13}, 7 (2016), 577.

\bibitem{kost2002combining}
{\sc Kost, J.~T., and McDermott, M.~P.}
\newblock Combining dependent p-values.
\newblock {\em Statistics \& Probability Letters 60}, 2 (2002), 183--190.

\bibitem{lei2018adapt}
{\sc Lei, L., and Fithian, W.}
\newblock Ada{PT}: an interactive procedure for multiple testing with side
  information.
\newblock {\em Journal of the Royal Statistical Society: Series B (Statistical
  Methodology) 80}, 4 (2018), 649--679.

\bibitem{lei2017star}
{\sc Lei, L., Ramdas, A., and Fithian, W.}
\newblock {STAR}: A general interactive framework for {FDR} control under
  structural constraints.
\newblock {\em Biometrika (accepted)\/} (2020).

\bibitem{li2011adaptively}
{\sc Li, J., and Tseng, G.~C.}
\newblock An adaptively weighted statistic for detecting differential gene
  expression when combining multiple transcriptomic studies.
\newblock {\em The Annals of Applied Statistics 5}, 2A (2011), 994--1019.

\bibitem{nystrom2015bridges}
{\sc Nystrom, N.~A., Levine, M.~J., Roskies, R.~Z., and Scott, J.~R.}
\newblock Bridges: a uniquely flexible hpc resource for new communities and
  data analytics.
\newblock In {\em Proceedings of the 2015 XSEDE Conference: Scientific
  Advancements Enabled by Enhanced Cyberinfrastructure\/} (2015), pp.~1--8.

\bibitem{owen2009karl}
{\sc Owen, A.~B.}
\newblock Karl {P}earson’s meta-analysis revisited.
\newblock {\em The {A}nnals of {S}tatistics 37}, 6B (2009), 3867--3892.

\bibitem{ramdas2020admissible}
{\sc Ramdas, A., Ruf, J., Larsson, M., and Koolen, W.}
\newblock Admissible anytime-valid sequential inference must rely on
  nonnegative martingales.
\newblock {\em arXiv preprint arXiv:2009.03167\/} (2020).

\bibitem{robbins1970statistical}
{\sc Robbins, H.}
\newblock Statistical methods related to the law of the iterated logarithm.
\newblock {\em The Annals of Mathematical Statistics 41}, 5 (1970), 1397--1409.

\bibitem{robertson1988order}
{\sc Robertson, T., Wright, F., and Dykstra, R.}
\newblock Order restricted statistical inference.

\bibitem{ruger1978maximale}
{\sc R{\"u}ger, B.}
\newblock Das maximale {S}ignifikanzniveau des {T}ests: “{L}ehne ${H}_0$ ab,
  wenn $k$ unter $n$ gegebenen {T}ests zur {A}blehnung f{\"u}hren”.
\newblock {\em Metrika 25}, 1 (1978), 171--178.

\bibitem{ruschendorf1982random}
{\sc R{\"u}schendorf, L.}
\newblock Random variables with maximum sums.
\newblock {\em Advances in Applied Probability 14}, 3 (1982), 623--632.

\bibitem{shafer2011test}
{\sc Shafer, G., Shen, A., Vereshchagin, N., and Vovk, V.}
\newblock Test martingales, {B}ayes factors and p-values.
\newblock {\em Statistical Science 26}, 1 (2011), 84--101.

\bibitem{siegmund1986boundary}
{\sc Siegmund, D.}
\newblock Boundary crossing probabilities and statistical applications.
\newblock {\em The Annals of Statistics\/} (1986), 361--404.

\bibitem{simes1986improved}
{\sc Simes, R.~J.}
\newblock An improved {B}onferroni procedure for multiple tests of
  significance.
\newblock {\em Biometrika 73}, 3 (1986), 751--754.

\bibitem{stouffer1949american}
{\sc Stouffer, S.~A., Suchman, E.~A., DeVinney, L.~C., Star, S.~A., and
  Williams~Jr, R.~M.}
\newblock The american soldier: {A}djustment during army life.(studies in
  social psychology in world war ii), vol. 1.

\bibitem{towns2014xsede}
{\sc Towns, J., Cockerill, T., Dahan, M., Foster, I., Gaither, K., Grimshaw,
  A., Hazlewood, V., Lathrop, S., Lifka, D., Peterson, G.~D., Roskies, R.,
  Scott, J.~R., and Wilkins-Diehr, N.}
\newblock {XSEDE}: {A}ccelerating {S}cientific {D}iscovery.
\newblock {\em Computing in science \& engineering 16}, 5 (2014), 62--74.

\bibitem{vovk2012combining}
{\sc Vovk, V., and Wang, R.}
\newblock Combining p-values via averaging.
\newblock {\em Biometrika\/} (06 2020).
\newblock asaa027.

\bibitem{wald1945sequential}
{\sc Wald, A.}
\newblock Sequential tests of statistical hypotheses.
\newblock {\em The Annals of Mathematical Statistics 16}, 2 (1945), 117--186.

\bibitem{zhang2020incorporating}
{\sc Zhang, M., Gelfman, S., McCarthy, J., Harms, M.~B., Moreno, C.~A.,
  Goldstein, D.~B., and Allen, A.~S.}
\newblock Incorporating external information to improve sparse signal detection
  in rare-variant gene-set-based analyses.
\newblock {\em Genetic Epidemiology 44}, 4 (2020), 330--338.

\end{thebibliography}

\appendix
\section{Error control}
This section proves the type-I error control for our proposed methods: the \mst and the \imt. 

\subsection{Proof of Theorem~\ref{thm:mst_error}}
\label{sec:proof_mst_error}
\begin{proof}
Under the global null, because $p$-values are independent and stochastically larger than the uniform, the transformed \pvalues $\Phi^{-1}(1 - p_i)$ are independent and stochastically smaller than a standard Gaussian. Thus given the uniform bound for a Gaussian increment martingale $u_\alpha(k)$,
\begin{align*}
    &\mathbb{P}_0\left(\exists k \in \mathbb{N}:  \sum_{i=1}^k \Phi^{-1}(1 - p_i) \geq u_\alpha(k)\right){}\\
    \leq~& \mathbb{P}\left(\exists k \in \mathbb{N}:  \sum_{i=1}^k G_i \geq u_\alpha(k)\right){}\\
    \leq~& \alpha,
\end{align*}
where $G_i$ for $i \in \mathcal{I}$ are i.i.d. standard Gaussians. By definition the above argument proves the type-I error control.
\end{proof}

\subsection{Proof of Theorem~\ref{thm:imt_error}} \label{apd:imt_proof}

This proof also implies Theorem~\ref{thm:adp_error} since the \postmt is a special case of the \imt.

\begin{proof}
\textbf{Batch setting.} We argue that the sum $\{\sum_{i\in M_k}h(p_i)\}_{k\in\mathcal{I}}$ is a supermartingale with respect to the filtration $\{\mathcal{F}_{k-1}\}_{k\in\mathcal{I}}$. First, the sum $\sum_{i\in M_k}h(p_i)$ is measurable with respect to $\mathcal{F}_{k-1}$ because the random set $M_k = M_{k-1}\cup \{i_k^*\}$ has its distribution defined with respect to $\mathcal{F}_{k-1}$. 

Second, we prove that
\begin{align} \label{eq:mtg_cond}
    \mathbb{E}(\sum_{i\in M_k}h(p_i) \mid \mathcal{F}_{k-1}) \leq  \sum_{i\in M_{k - 1}}h(p_i),
\end{align}
Because $\mathbb{E}(\sum_{i\in M_{k}}h(p_i) \mid \mathcal{F}_{k-1}) = \sum_{i\in M_{k-1}}h(p_i) + \mathbb{E}(h(p_{i_k^*}) \mid \mathcal{F}_{k-1})$, condition~\eqref{eq:mtg_cond} boils down to proving
\[
\mathbb{E}(h(p_{i_k^*}) \mid \mathcal{F}_{k-1}) \leq 0.
\]

Since $i_k^*$ and $M_{k-1}$ are $\mathcal{F}_{k-1}$ measurable, and $i_k^* \notin M_{k-1}$, we see that 
\begin{align*}
    \mathbb{E}(h(p_{i_k^*}) \mid \mathcal{F}_{k-1}) \leq \max_{i \notin M_{k-1}} \mathbb{E}(h(p_i) \mid \mathcal{F}_{k-1}) = \max_{i \notin M_{k-1}} \mathbb{E}(h(p_{i}) \mid g(p_{i})),
\end{align*}
where the last equation is because the $p$-values are assumed to be independent of each other and of the covariates~$x_i$ under the global null; and thus, $h(p_i) \mid \mathcal{F}_{k-1}$ has the same distribution as $h(p_i) \mid g(p_i)$.

The proof is completed if
\begin{align} \label{eq:expect_cond}
   \mathbb{E}(h(p_{i}) \mid g(p_{i})) \leq 0,
\end{align}
for any $i \notin M_{k-1}$. In this case, the sum $\{\sum_{i\in M_k}h(p_i)\}_{k\in\mathcal{I}}$ is a martingale. Also, the increment is stochastically smaller than a Rademacher and following the same argument in Section~\ref{sec:proof_mst_error}, so the test using a bound for a Gaussian increment martingale controls the type-I error (because a Rademacher is subGaussian).

We have an intermediate result: the \imt has type-I error control for any $h(p)$ and $g(p)$ such that condition~\eqref{eq:expect_cond} holds. For a mirror-conservative \pvalue, the missing bit $h(p_i)$ conditioned on its corresponding masked \pvalue $g(p_i)$ is stochastically smaller than a fair coin flip:
\begin{align*}
    &\mathbb{P}_0(h(p_i) = -1 \mid g(p_i) = x) = \frac{f_i(1-x)}{f_i(1-x) + f_i(x)} {}\\
    \geq~& \frac{f_i(x)}{f_i(1-x) + f_i(x)}
    =  \mathbb{P}_0(h(p_i) = 1 \mid g(p_i) = x),
\end{align*}
for any $x \in [0,0.5]$ (i.e., the range of $g(p_i)$), which implies condition~\eqref{eq:expect_cond} and thus completes the proof.

\noindent \textbf{Online setting.}
Let the index of the hypothesis that enters the rejection set $M_{k-1}$ be $t_k^*$. Notice that $t_k^*$ is a stopping time with respect to $\mathcal{F}_{t-1}$ (that is, $\{t_k^* = t\}$ is measurable with respect to $\mathcal{F}_{t-1}$ because we decide whether to include $p_t$ based on $\mathcal{F}_{t-1}$). For a clear notation, define a filtration indexed by $k$ as 
\begin{align}
    \mathcal{G}_{k - 1} :=  \mathcal{F}_{t_k^* - 1},
\end{align}
denoting all the information available prior to the $k$-th entered hypothesis. We argue that the sum $\{\sum_{i\in M_k} h(p_i)\}_{k\in\mathcal{I}}$ is a supermartingale with respect to the filtration $\{\mathcal{G}_{k-1}\}_{k\in\mathcal{I}}$. The proof is similar to the above batch setting, where we prove that
\[
\mathbb{E}(h(p_{t_k^*}) \mid \mathcal{G}_{k-1})\leq 0.
\]
Since $t_k^*$ is a stopping time with respect to $\mathcal{F}_{t_k^* - 1}$, we see that 
\begin{align*}
    &\mathbb{E}(h(p_{t_k^*}) \mid \mathcal{G}_{k-1}) = \mathbb{E}(h(p_{t_k^*}) \mid \mathcal{F}_{t_k^* - 1}){}\\
    \leq~& \max_{t} \mathbb{E}(h(p_t) \mid \mathcal{F}_{t-1}) = \max_{t} \mathbb{E}(h(p_{t}) \mid g(p_{t})),
\end{align*}
where the last equation is because the $p$-values are assumed to be independent of each other and of the covariates~$x_i$ under the global null; and thus, $h(p_i) \mid \mathcal{F}_{k-1}$ has the same distribution as $h(p_i) \mid g(p_i)$.

The rest of the proof is the same as the batch setting where we show condition~\eqref{eq:expect_cond} holds: 
\[
\mathbb{E}(h(p_{t}) \mid g(p_{t})) \leq 0,
\]
for mirror-conservative $p$-values. Thus, the sum $\{\sum_{i\in M_k} h(p_i)\}_{k\in\mathcal{I}}$ is a supermartingale with respect to the filtration $\{\mathcal{G}_{k-1}\}_{k\in\mathcal{I}}$. Recall that the increment is stochastically smaller than a Rademacher. Following the same argument in Section~\ref{sec:proof_mst_error}, the \imt in the online setting using bound for a Gaussian increment martingale controls the type-I error. 
\end{proof}

\subsection{Error control of the \imt with railway masking function in Section~\ref{sec:robust}}
\label{apd:error_control_railway}

Let the masked $p$-values defined by the railway function in Section~\ref{sec:robust} be: 
\begin{align*}
    \widetilde g(p) := \min(p, (p+\tfrac12) \text{mod} 1)
\end{align*}
The corresponding \imt has a valid error control when the $p$-values have nondecreasing densities under the global null.
\begin{theorem}
If under $\mathcal{H_G}_0$, the \pvalues have nondecreasing densities and are independent of each other and of the covariates $x_i$, then the \imt using $\widetilde g(p)$ in place of $g(p)$ controls the type-I error at level~$\alpha$.
\end{theorem}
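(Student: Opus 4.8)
The plan is to reduce the claim to the intermediate result already isolated inside the proof of Theorem~\ref{thm:imt_error}: that proof shows the \imt controls the type-I error for \emph{any} pair of functions $h$ and $g$ satisfying $\mathbb{E}(h(p_i) \mid g(p_i)) \leq 0$ under the global null (condition~\eqref{eq:expect_cond}), with the entire supermartingale construction in both the batch and online settings going through verbatim. Consequently, the whole task is to verify this one conditional-expectation inequality for the railway map $\widetilde g$ under the nondecreasing-density assumption; everything else is inherited.

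First I would make the railway map explicit. For $p \in [0, \tfrac12)$ we have $(p + \tfrac12) \bmod 1 = p + \tfrac12 > p$, so $\widetilde g(p) = p$; for $p \in [\tfrac12, 1)$ we have $(p+\tfrac12)\bmod 1 = p - \tfrac12 < p$, so $\widetilde g(p) = p - \tfrac12$. Hence the range of $\widetilde g$ is $[0, \tfrac12)$, and for each value $x \in [0,\tfrac12)$ the fiber $\widetilde g^{-1}(x)$ consists of exactly the two points $p = x$, at which $h(p) = +1$, and $p = x + \tfrac12$, at which $h(p) = -1$. This is the railway analog of the tent function's pairing of $x$ with $1-x$, and identifying it correctly (the paired partner of $x$ is $x + \tfrac12$, not $1-x$) is the only nontrivial bookkeeping.

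Next I would compute the conditional law of the missing bit. Writing $f_i$ for the null density of $p_i$, the two-point fiber gives
\[
\mathbb{P}_0\bigl(h(p_i) = -1 \mid \widetilde g(p_i) = x\bigr) = \frac{f_i(x + \tfrac12)}{f_i(x) + f_i(x + \tfrac12)}, \qquad \mathbb{P}_0\bigl(h(p_i) = +1 \mid \widetilde g(p_i) = x\bigr) = \frac{f_i(x)}{f_i(x) + f_i(x + \tfrac12)}.
\]
Therefore $\mathbb{E}(h(p_i) \mid \widetilde g(p_i) = x) \leq 0$ holds precisely when $f_i(x + \tfrac12) \geq f_i(x)$. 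Since $x \in [0, \tfrac12)$ forces $x + \tfrac12 > x$, this is an immediate consequence of $f_i$ being nondecreasing, and it holds for every $x$ in the range of $\widetilde g$ and every $i$.

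Finally I would invoke the intermediate result to close the argument: with $\mathbb{E}(h(p_i) \mid \widetilde g(p_i)) \leq 0$ in hand, the partial sums $\{\sum_{i \in M_k} h(p_i)\}$ again form a supermartingale with Rademacher-dominated (hence $1$-subGaussian) increments against the appropriate filtration, so the uniform Gaussian boundary $u_\alpha(k)$ controls the type-I error at level $\alpha$ in both settings. I do not expect a genuine obstacle; the only real content is correctly pinning down the two-point fiber of the railway map and matching the nondecreasing-density condition to the correct orientation of the inequality (that the larger argument $x + \tfrac12$ carries the $h=-1$ label), after which the structure is identical in form to the mirror-conservative case treated in Theorem~\ref{thm:imt_error}.
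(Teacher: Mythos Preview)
Your proposal is correct and follows essentially the same approach as the paper: invoke the intermediate result from the proof of Theorem~\ref{thm:imt_error} (condition~\eqref{eq:expect_cond}), identify the fiber $\widetilde g^{-1}(x) = \{x, x+\tfrac12\}$, compute the conditional distribution of $h(p_i)$ on that fiber, and observe that the nondecreasing-density assumption gives $f_i(x+\tfrac12) \geq f_i(x)$. The paper's proof is slightly terser (it does not spell out the piecewise definition of $\widetilde g$), but the logic is identical.
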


\begin{proof}
Recall that in Appendix~\ref{apd:imt_proof}, we have an intermediate result: the \imt has type-I error control for any $h(p)$ and $g(p)$ such that condition~\eqref{eq:expect_cond} holds. For a \pvalue with a nondecreasing density, the missing bit $h(p_i)$ conditioned on its corresponding masked \pvalue $\widetilde g(p_i)$ is stochastically smaller than a fair coin flip:
\begin{align*}
    &\mathbb{P}_0(h(p_i) = -1 \mid \widetilde g(p_i) = x) = \frac{f_i(x + 0.5)}{f_i(x + 0.5) + f_i(x)} {}\\
    \geq~& \frac{f_i(x)}{f_i(x + 0.5) + f_i(x)}
    =  \mathbb{P}_0(h(p_i) = 1 \mid \widetilde g(p_i) = x),
\end{align*}
for any $x \in [0,0.5]$ (i.e. the range of $\widetilde g(p_i)$), which implies condition~\eqref{eq:expect_cond} and thus completes the proof.
\end{proof}

\begin{remark}
The above proof implies that the error control holds as long as under the global null, the $p$-values satisfy:
\begin{align*}
    f_i(a) \leq f_i(a + 0.5) \text{ for all } 0 \leq a \leq 0.5, i \in \mathcal{I},
\end{align*}
where $f_i$ is the probability mass function of $p_i$ for discrete p-values or the density function otherwise. This condition can be viewed as a third definition of conservativeness in addition to condition~\eqref{cond:stoch_dom} and~\eqref{cond:mirror_consv} in the main paper. %It is implies nor is a consequence of condition~\eqref{cond:stoch_dom} or condition~\eqref{cond:mirror_consv}.
It is not a consequence of condition~\eqref{cond:stoch_dom} (take $f(a) = \one(a \leq 0.5) + 4(a - 0.5)\one(a > 0.5)$) or condition~\eqref{cond:mirror_consv} (take $f(a) = 4\min(a,1-a)$), and it does not imply condition~\eqref{cond:stoch_dom} and~\eqref{cond:mirror_consv} (take $f(a) = 4(0.5 - a)\one(a < 0.5) + 4(1 - a)\one(0.5 \leq a < 1) + 4\one(a =1)$).
For simplicity, we focus on the $p$-values with increasing densities in Section~\ref{sec:robust}, which are considered as conservative $p$-values in all three definitions. 
\iffalse
For example, consider a one-dimensional exponential family and the hypotheses to test the value of its parameter~$\theta$:
\[
H_0: \theta \leq \theta_0, \quad \text{ versus } \quad H_1: \theta > \theta_0,
\]
where $\theta_0$ is a prespecified constant. The $p$-value calculated from the uniformly most powerful test is shown to have a nondecreasing density \cite{zhao2019multiple}.
\fi

\end{remark}

\section{Power guarantees in the batch setting} \label{apd:batch_guarantee}

This section presents the proofs of power guarantees in the batch setting for (1)~the batch Stouffer test, (2)~the \mst and (3)~the \imt.

\subsection{Proof of Theorem~\ref{thm:power_batch}} \label{apd:power_batch_nonadaptive}

We divide the proof into two subsections for the batch Stouffer test and the \mst.

\subsubsection{The batch Stouffer test} \label{apd:proof_batch_power}

\begin{proof}
Define the $Z$-score for each hypothesis $H_i$ as $Z_i = \Phi^{-1}(1 - p_i)$. Under setting~\ref{set:simple} in the main paper of testing Gaussian mean, the $Z$-score is a Gaussian $Z_i \sim N(\mu_i,1)$, or written as $N(r_i\mu_i, 1)$ to separate the true nulls from the true non-nulls. Thus, the sum $S_n = \sum_{i=1}^n Z_i$ is also a Gaussian $S_n \sim N\left(\sum_{i=1}^n r_i \mu_i, n\right)$. The power of the batch Stouffer test is
\begin{align*}
    \mathbb{P}_1\left(\frac{S_n}{\sqrt{n}} \geq \Phi^{-1}(1 - \alpha)\right) =~& \mathbb{P}_1\left(\frac{S_n - \sum_{i=1}^n r_i \mu_i}{\sqrt{n}} \geq  \Phi^{-1}(1 - \alpha) - \frac{\sum_{i=1}^n r_i \mu_i}{\sqrt{n}}\right){}\\
    =~& 1 - \Phi\left( \Phi^{-1}(1 - \alpha) - \frac{\sum_{i=1}^n r_i \mu_i}{\sqrt{n}}\right).
\end{align*}
A power of at least $1-\beta$ is is equivalent to
\begin{align*}
    1 - \Phi\left( \Phi^{-1}(1 - \alpha) - \frac{\sum_{i=1}^n r_i \mu_i}{\sqrt{n}}\right) \geq 1 - \beta,
\end{align*}
which can be rewritten as
\begin{align*}
    \sum_{i=1}^n r_i \mu_i \geq (\Phi^{-1}(1 - \alpha) + \Phi^{-1}(1 - \beta))n^{1/2},
\end{align*}
which is the condition in Theorem~\ref{thm:power_batch}.
\end{proof}

\subsubsection{The \mst} \label{sec:mst_proof}

\begin{proof}
Following the same proof for $S_n \sim N(r_i\mu_i, 1)$ in Section~\ref{apd:proof_batch_power}, for any $k = 1,\ldots, n$, ${S_k \sim N\left(\sum_{i=1}^k r_i \mu_i,k\right)}$.
The power of the \mst is
\begin{align*}
    &\mathbb{P}_1\left(\exists k \in \{1,\ldots,n\}: S_k \geq u_\alpha(k)\right) {}\\
    =~& \mathbb{P}_1\left(\exists k \in \{1,\ldots,n\}: S_k - \sum_{i=1}^k r_i \mu_i \geq u_\alpha(k) - \sum_{i=1}^k r_i \mu_i\right),
\end{align*}
The power of \mst is at least $1-\beta$ if
\begin{align*}
    \exists k^* \in \{1,\ldots,n\}: u_\alpha(k^*) - \sum_{i=1}^{k^*} r_i \mu_i \leq -u_\beta(k^*) \quad \text{ (a sufficient condition)},
\end{align*}
since under such condition,
\begin{align*}
    &\mathbb{P}_1\left(\exists k \in \{1,\ldots,n\}: S_k - \sum_{i=1}^k r_i \mu_i \geq u_\alpha(k) - \sum_{i=1}^k r_i \mu_i\right){}\\
    \geq~& \mathbb{P}_1\left(S_{k^*} - \sum_{i=1}^{k^*} r_i \mu_i \geq u_\alpha(k^*) - \sum_{i=1}^{k^*} r_i \mu_i\right){}\\
    \geq~& \mathbb{P}_1\left(S_{k^*} - \sum_{i=1}^{k^*} r_i \mu_i \geq -u_\beta(k^*)\right){}\\
    \geq~& \mathbb{P}_1\left(\forall k \in \{1,\ldots,n\}: S_k - \sum_{i=1}^k r_i \mu_i \geq -u_\beta(k)\right) \geq 1 - \beta.
\end{align*}
The last step holds because Gaussian increment martingale is symmetric so that $-u_\beta(k)$ is a uniform lower bound.

The power of \mst is less than $1-\beta$ if
\begin{align*}
    \forall k \in \{1,\ldots,n\}: u_\alpha(k) - \sum_{i=1}^{k} r_i \mu_i \geq u_{1 - \beta}(k) \quad \text{ (a necessary condition)},
\end{align*}
since
\begin{align*}
    &\mathbb{P}_1\left(\exists k \in \{1,\ldots,n\}: S_k - \sum_{i=1}^k r_i \mu_i \geq u_\alpha(k) - \sum_{i=1}^k r_i \mu_i\right){}\\
    \leq~& \mathbb{P}_1\left(\exists k \in \{1,\ldots,n\}: S_k - \sum_{i=1}^k r_i \mu_i \geq u_{1 - \beta}(k) \right) \leq 1 - \beta.
\end{align*}
Thus, we find a sufficient condition and a necessary condition for the \mst to have $1-\beta$ power. The proof completes by plugging the curved bound in test~(\ref{test:mst_curve}) in the main paper into the conditions. If without further explanation, $u_\alpha(k)$ in rest of the proofs denotes the curved bound.
\end{proof}

\subsection{Proof of Theorem~\ref{thm:power_batch_imt}} \label{apd:power_imt}

The \postmt uses the missing bits $h(p_i)$ for testing, and under no prior knowledge, uses the masked \pvalues $g(p_i)$ to order the hypotheses. We divide the proof into three steps: (1) derive the power guarantee given a fixed order in Lemma~\ref{lm:power_oriOrder}; (2) quantify the effect of ordering by masked \pvalues in Lemma~\ref{lm:size_shrink}, and (3) derive the power guarantee for the \postmt (Theorem~\ref{thm:power_batch_imt}).

% \newpage
\paragraph{The power of \postmt given a fixed order}
\begin{lemma} \label{lm:power_oriOrder}
Given a fixed sequence of $\{M_k\}_{k=1}^n$ with the size $|M_k| = k$, the \postmt with type-I error control $\alpha$ has power at least $1-\beta$ if
\[
\exists k \in \{1,\ldots,n\}: \sum_{i\in M_{k}} \left(r_i(2S_i(1) - 1) + (1-r_i)(2S_i(0) - 1)\right)  \geq \left(C_k^{\alpha} + C_k^{\beta}\right)k^{\frac{1}{2}}.
\]
where $S_i(1) = \mathbb{P}(h(p_i) = 1 \mid r_i = 1, \{M_k\}_{k=1}^n)$ is a measurement of the ``signal strength" from the non-nulls and $S_i(0) = \mathbb{P}(h(p_i) = 1 \mid r_i = 0, \{M_k\}_{k=1}^n)$ is from the nulls. Meanwhile the power is less than $1 - \beta$ if
\begin{align*}
    &\forall k \in \{1,\ldots,n\}: {}\\
    &\sum_{i\in M_{k}} \left(r_i(2S_i(1) - 1) + (1-r_i)(2S_i(0) - 1)\right)  \leq \left(C_k^{\alpha} - C_k^{1-\beta}\right)k^{\frac{1}{2}}.
\end{align*}
\end{lemma}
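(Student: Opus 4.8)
The plan is to mirror the argument for the \mst in Appendix~\ref{sec:mst_proof}, with one genuine modification: the test statistic here is a sum of bounded coin flips rather than Gaussians, so I cannot appeal to exact symmetry and must instead use a two-sided sub-Gaussian boundary-crossing bound.

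First I would fix notation. Write $T_k := \sum_{i \in M_k} h(p_i)$ for the test statistic and note that, since $h(p_i) \in \{-1,+1\}$ gives $\mathbb{E}_1[h(p_i)\mid r_i] = 2\mathbb{P}(h(p_i)=1\mid r_i)-1$, its expectation under the alternative is
\[
\mu_k := \mathbb{E}_1[T_k] = \sum_{i \in M_k}\big(r_i(2S_i(1)-1) + (1-r_i)(2S_i(0)-1)\big),
\]
which is exactly the left-hand side appearing in both displayed conditions. I would also record that the curved boundary satisfies $u_\gamma(k) = C_k^\gamma\sqrt{k}$, so the sufficient condition reads $\mu_{k^*} \ge u_\alpha(k^*)+u_\beta(k^*)$ for some $k^*$, and the necessary condition reads $\mu_k \le u_\alpha(k)-u_{1-\beta}(k)$ for all $k$.

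The key step is a two-sided uniform bound on the centered process $T_k - \mu_k$. Because the ordering $\{M_k\}$ is fixed in advance and the $p_i$ are independent (Setting~\ref{set:simple}), $T_k-\mu_k$ is a sum of independent, mean-zero increments $h(p_i)-\mathbb{E}_1[h(p_i)]$, each supported on an interval of width $2$; by Hoeffding's lemma each increment is $1$-sub-Gaussian, and so is its negation. Hence both $\{T_k-\mu_k\}$ and $\{\mu_k-T_k\}$ are $1$-sub-Gaussian martingales, and the curved boundary of Howard et al.~\cite{howard2020time1} applies to each:
\[
\mathbb{P}_1\big(\exists k: T_k - \mu_k > u_\gamma(k)\big) \le \gamma \quad\text{and}\quad \mathbb{P}_1\big(\exists k: \mu_k - T_k > u_\gamma(k)\big)\le \gamma
\]
at any level $\gamma$. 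This is where the proof departs from the Gaussian case, which got the lower-tail statement from symmetry; here the asymmetry of $h(p_i)-\mathbb{E}_1[h(p_i)]$ under the alternative is harmless precisely because a width-$2$ centered increment is $1$-sub-Gaussian in both directions. I expect this to be the only delicate point.

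With the two-sided bound in hand, the two conclusions follow as in Section~\ref{sec:mst_proof}. For sufficiency, assuming $\mu_{k^*} \ge u_\alpha(k^*)+u_\beta(k^*)$ gives $u_\alpha(k^*)-\mu_{k^*} \le -u_\beta(k^*)$, so
\[
\mathbb{P}_1\big(\exists k: T_k \ge u_\alpha(k)\big) \ge \mathbb{P}_1\big(T_{k^*}-\mu_{k^*} \ge -u_\beta(k^*)\big) \ge 1-\beta
\]
by the lower-tail bound at $\gamma=\beta$. For necessity, assuming $\mu_k \le u_\alpha(k)-u_{1-\beta}(k)$ for every $k$ gives $u_\alpha(k)-\mu_k \ge u_{1-\beta}(k)$, so
\[
\mathbb{P}_1\big(\exists k: T_k \ge u_\alpha(k)\big) \le \mathbb{P}_1\big(\exists k: T_k-\mu_k \ge u_{1-\beta}(k)\big) \le 1-\beta
\]
by the upper-tail bound at $\gamma=1-\beta$. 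Substituting $u_\gamma(k)=C_k^\gamma\sqrt{k}$ recovers the two displayed conditions, completing the proof.
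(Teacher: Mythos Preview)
Your proposal is correct and follows essentially the same approach as the paper: identify the centered process $T_k-\mu_k$ as a martingale with $1$-sub-Gaussian increments and then replay the sufficiency/necessity chain from the \mst proof in Section~\ref{sec:mst_proof}. The paper's version is terser (it simply asserts sub-Gaussianity of the centered Bernoulli increments and points to the MST argument), while you are more explicit that the two-sided boundary comes from Hoeffding's lemma applied to both $T_k-\mu_k$ and $\mu_k-T_k$ rather than from distributional symmetry; this is exactly the right clarification.
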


\begin{proof}
Consider the re-scaled increment $(h(p_{i_k^*}) + 1)/2 \mid \mathcal{F}_k$, which follows a Bernoulli:
\begin{align*}
    \frac{h(p_{i_k^*}) + 1}{2} \sim r_i\mathrm{Ber}(S_{i_k^*}(1)) + (1 - r_i)\mathrm{Ber}(S_{i_k^*}(0)).
\end{align*}
So the cumulative sum $S_k$ is a martingale with sub-Gaussian increments after centering, with expected value $ \sum_{i\in M_k} \left (r_i(2S_i(1) - 1) + (1-r_i)(2S_i(0) - 1)\right)$. So the power of \postmt is
\begin{align*}
    & \mathbb{P}_1\left(\exists k \in \{1,\ldots,n\}: S_k \geq u_\alpha(k)\right) {}\\
    =~& \mathbb{P}_1\left(\exists k \in \{1,\ldots,n\}: S_k -\sum_{i\in M_k} \left[r_i(2S_i(1) - 1) + (1-r_i)(2S_i(0) - 1)\right]\right.\\
    &\left. \geq u_\alpha(k) - \sum_{i\in M_k} \left[r_i(2S_i(1) - 1) + (1-r_i)(2S_i(0) - 1)\right]\right).
\end{align*}
The proof can be completed by following similar steps in the proof for \mst (Section~\ref{sec:mst_proof}).
\end{proof}

\paragraph{The effect of ordering} 
Define the $Z$-score as $Z_i = \Phi^{-1}(1 - p_i)$ for each hypothesis $H_i$. Under setting~\ref{set:simple} in the main paper, $Z_i$ is a Gaussian with unit variance and mean value $\mu_i$. We consider the simple case where for all the non-nulls $\mu_i = \mu$. The \postmt orders the hypotheses increasingly by $g(p_i)$, which is equivalent to ordering decreasingly by $|Z_i|$. Following definition~(\ref{def:permutation}), the $Z$-scores for non-nulls have the same distribution as $Z(\mu)$, and $Z_{(j)}(\mu)$ is the $Z$-score of $j$-th non-null when they are ordered decreasingly by $|Z_i|$. We describe the effect of ordering by the size of the set $M_k$ right after the $j$-th non-null enters, denoted as $M(j)$. 

\begin{lemma}
\label{lm:size_shrink}
The size of $M(j)$ follows a Binomial distribution (up to a constant):
\begin{align*}
    |M(j)| \sim j + \mathrm{Bin}\left(N_0 , \mathbb{P}(|Z(0)| > |Z_{(j)}(\mu)|)\right).
\end{align*}
The size $|M(j)|$ is uniformly upper bounded:
\begin{align*}
    \mathbb{P}_1\left(\forall j \in 1,\ldots,N_1 : |M(j)| \leq j + t_{\beta/N_1}(N_0, q_j)\right) \geq 1 - \beta,
\end{align*}
where $t_{\beta/N_1}(N_0, q_j)$ is $\beta/N_1$-th upper quantile of $\mathrm{Bin}\left(N_0 , \mathbb{P}(|Z(0)| > |Z_{(j)}(\mu)|)\right)$. 
\end{lemma}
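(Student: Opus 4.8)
The plan is to prove the two assertions separately: first identify the (conditional) law of $|M(j)|$, and then derive the uniform-in-$j$ high-probability upper bound by a union bound over the $N_1$ non-nulls.

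First I would record the key equivalence that drives the whole argument: since $g(p_i)=\min(p_i,1-p_i)$ is a strictly decreasing function of $|Z_i|=|\Phi^{-1}(1-p_i)|$ (the masked $p$-value is small exactly when $p_i$ is near $0$ or $1$, i.e.\ when $|Z_i|$ is large), ordering the hypotheses by increasing $g(p_i)$ is identical to ordering them by decreasing $|Z_i|$. Consequently, at the moment the $j$-th non-null enters, the set $M(j)$ consists of every hypothesis whose $|Z|$-value is at least $|Z_{(j)}(\mu)|$: this is exactly the top $j$ non-nulls (by the definition~\eqref{def:permutation} of the order statistics) together with all nulls exceeding that threshold. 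Writing $B_j$ for the number of such nulls, I get the clean decomposition $|M(j)|=j+B_j$, which accounts for the ``$+j$ constant'' in the statement.

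For the distributional claim I would condition on the entire vector of non-null $Z$-scores, which fixes the threshold $|Z_{(j)}(\mu)|$. Under Setting~\ref{set:simple} the $Z_i$ are mutually independent, so conditionally on the non-nulls the $N_0$ null indicators $\mathbf{1}\{|Z_\ell|>|Z_{(j)}(\mu)|\}$ are i.i.d.\ Bernoulli with common success probability $\mathbb{P}(|Z(0)|>|Z_{(j)}(\mu)|)=2\Phi(-|Z_{(j)}(\mu)|)$. Hence $B_j$ is conditionally $\mathrm{Bin}(N_0,\mathbb{P}(|Z(0)|>|Z_{(j)}(\mu)|))$, which is precisely the distributional statement. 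For the uniform upper bound I would then allocate the budget $\beta$ evenly across the $N_1$ non-nulls and take a union bound, so that it suffices to show $\mathbb{P}_1(B_j>t_{\beta/N_1}(N_0,q_j))\le \beta/N_1$ for each fixed $j$; working conditionally on $Z_{(j)}(\mu)$, the exact binomial law of $B_j$ together with the definition of the upper quantile $t_{\beta/N_1}$ controls the conditional tail, and monotonicity of the quantile in the success probability is what I would use to pass to the marginal parameter $q_j=\mathbb{E}[2\Phi(-|Z_{(j)}(\mu)|)]$.

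I anticipate the main obstacle is exactly this last step: the \emph{unconditional} law of $B_j$ is a mixture of binomials (all null indicators share the single random threshold $|Z_{(j)}(\mu)|$, so they are exchangeable but not independent), and the binomial tail is not a convex function of the success probability, so one cannot simply invoke Jensen to dominate the mixture tail by $\mathbb{P}(\mathrm{Bin}(N_0,q_j)>t)$. I would resolve this by keeping the argument conditional wherever possible---using the exact conditional binomial together with the stochastic monotonicity $p\le q\Rightarrow t_\gamma(N_0,p)\le t_\gamma(N_0,q)$---and invoking the marginal $q_j$ only after this monotone comparison; if a fully marginal per-$j$ statement is insisted upon, one additionally controls the upper fluctuations of $2\Phi(-|Z_{(j)}(\mu)|)$ about its mean $q_j$. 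The remaining computations (evaluating quantiles, assembling the union bound) are routine.
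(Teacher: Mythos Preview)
Your proposal is correct and follows the paper's proof essentially verbatim: decompose $|M(j)|=j+B_j$ with $B_j$ the number of nulls whose $|Z|$-value exceeds $|Z_{(j)}(\mu)|$, note that (conditionally on the non-nulls) $B_j$ is $\mathrm{Bin}(N_0,\mathbb{P}(|Z(0)|>|Z_{(j)}(\mu)|))$, and obtain the uniform bound by a Bonferroni/union bound over $j=1,\dots,N_1$. Your extra caution about the conditional-versus-marginal reading of $q_j$ is more than the paper supplies---the paper simply writes the binomial parameter as $\mathbb{P}(|Z(0)|>|Z_{(j)}(\mu)|)$ and applies the quantile bound directly without disentangling the mixture issue---so while your worry is legitimate, it is not needed to reproduce the paper's argument.
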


\begin{remark}
\label{rmk:bin}
Denote $P(\mu) = \mathbb{P}(|Z(0)| \geq |Z(\mu)|)$.
The quantile $t_{\beta/N_1}(N_0, q_j)$ is upper bounded by a ratio of $P(\mu)N_0$ (when $P(\mu)N_0 > 1$):
\begin{align*}
    t_{\beta/N_1}(N_0, q_j) \leq \frac{2 + 2\sqrt{2\log(N_1/\beta)}}{N_1\left[\frac{N_1 + 1 -j}{N_1} - P(\mu)\right]^2} \max\{P(\mu) N_0, 1\},
\end{align*}
for $j = 1, \ldots, \lfloor N_1(1 - P(\mu)) + 1\rfloor$.
\end{remark}

\begin{proof}
In $M(j)$, the number of non-nulls is known as $j$ and the number of nulls is random. The nulls in $M(j)$ should have a higher absolute $Z$-score than $|Z_{(j)}(\mu)|$. Note that the $Z$-scores of the nulls are i.i.d. standard Gaussians, so the probability of a null to be in front of the $j$-th non-null is $\mathbb{P}(|Z(0)| > |Z_{(j)}(\mu)|)$ for any nulls. Thus the number of nulls before the $j$-th non-null follows a binomial distribution:
\begin{align*}
    \sum_{i: r_i = 0} 1(|Z_i(0)| > |Z_{(j)}(\mu)|) \sim \mathrm{Bin}\left(N_0 , \mathbb{P}(|Z(0)| > |Z_{(j)}(\mu)|)\right).
\end{align*}
Thus, the size of $M(j)$ is distributed as
\begin{align*}
    |M(j)| \sim j + \mathrm{Bin}\left(N_0 , \mathbb{P}(|Z(0)| > |Z(\mu_{\pi_j})|)\right).
\end{align*}
By the Bonferroni correction, with high probability $|M(j)|$ is upper bounded by
\begin{align*}
    \mathbb{P}_1\left(\forall j \in 1,\ldots, N_1:  |M(j)| \leq j + t_{\beta/N_1}(N_0, q_j)\right) \geq 1-\beta,
\end{align*}
where $t_{\beta/N_1}(N_0, q_j)$ is $\beta/N_1$-th upper quantile of $\mathrm{Bin}\left(N_0 , \mathbb{P}(|Z(0)| > |Z_{(j)}(\mu)|)\right)$. 

We further characterize the Binomial quantile $t_{\beta/N_1}(N_0, q_j)$ (proof of Remark~\ref{rmk:bin}). The quantile is upper bounded (by Chernoff inequality):
\begin{align*}
    t_{\beta/N_1}(N_0, q_j) &\leq \mathbb{P}(|Z(0)| > |Z_{(j)}(\mu)|) N_0 + \sqrt{2\mathbb{P}(|Z(0)| > |Z_{(j)}(\mu)|) N_0 \log(\tfrac{N_1}{\beta})} \nonumber {} \\
    &\leq (1 + \sqrt{2\log(\tfrac{N_1}{\beta})}) \max\{\mathbb{P}(|Z(0)| > |Z_{(j)}(\mu)|) N_0, 1\}.
\end{align*}
The proof completes by showing that the probability term $\mathbb{P}(|Z(0)| > |Z_{(j)}(\mu)|)$ is upper bounded:
\begin{align} \label{eq:bound_quantile}
    \mathbb{P}(|Z(0)| > |Z_{(j)}(\mu)|) \leq \frac{2P(\mu)}{N_1\left[\frac{N_1 + 1 -j}{N_1} - P(\mu)\right]^2}.
\end{align}

The above bound~\eqref{eq:bound_quantile} holds because the event $|Z(0)| > |Z_{(j)}(\mu)|$ can be viewed as comparing the absolute value of $Z(0)$ with $N_1$ Gaussians $\{Z^i(\mu)\}_{i=1}^{N_1}$ with the same distribution as $Z(\mu)$, and $|Z(0)|$ is bigger than $N_1 - j + 1$ of them. The number of $Z^i(\mu)$ that $|Z(0)| > |Z^i(\mu)|$ follows a binomial distribution, with probability $\mathbb{P}\left(|Z(0)| > |Z(\mu)|\right) := P(\mu)$. Let $X$ be $\mathrm{Bin}(N_1, P(\mu))$ and bound~\eqref{eq:bound_quantile} holds because
\begin{align*}
    &\mathbb{P}(|Z(0)| > |Z_{(j)}(\mu)|) = \mathbb{P}(X > N_1 - j + 1){}\\
    \leq~& \exp\left\{-\frac{[N_1(1-P(\mu)) - j + 1]^2}{2N_1P(\mu)(1-P(\mu))}\right\} \leq \exp\left\{-\frac{N_1\left[\frac{N_1 + 1 -j}{N_1} - P(\mu)\right]^2}{2P(\mu)}\right\}{}\\
    \leq~& \frac{2P(\mu)}{N_1\left[\frac{N_1 + 1 -j}{N_1} - P(\mu)\right]^2},
\end{align*}
for $j = 1, \ldots, \lfloor N_1(1 - P(\mu)) + 1\rfloor$. The proof of Remark~\ref{rmk:bin} is completed by plugging bound~\eqref{eq:bound_quantile} in the upper bound for $t_{\beta/N_1}(N_0, q_j)$. 
\end{proof}

%\newpage
\paragraph{Proof of Theorem~\ref{thm:power_batch_imt}} \begin{proof}
Lemma~\ref{lm:power_oriOrder} provides a condition for \postmt to have at least $1-\beta$ power given any choice of $\{M_k\}_{k=1}^n$, thus when $\{M_k\}_{k=1}^n$ is random, the power is at least $1-\beta$ if
\begin{align} 
    &\exists k \in \{1,\ldots,n\}:\nonumber{}\\
    &\sum_{i\in M_k} \left(r_i(2S_i(1) - 1) + (1-r_i)(2S_i(0) - 1)\right) \geq \left(C_{|M_k|}^{\alpha} + C_{|M_k|}^{\beta}\right)(|M_k|)^{1/2}, \label{eq:rand_cond}
\end{align}
where $S_i(0)$ and $S_{i}(1)$ as the probabilities conditioning on $M_k$ are random. Whether the above condition holds is not determinant, and Theorem~\ref{thm:power_batch_imt} provides a sufficient condition such that the above condition holds with high probability.

First, for all the nulls, 
\begin{align*}
    S_i(0) &= \mathbb{P}(h(p_i) > 0| r_i = 0, \{M_k\}_{k=1}^n){}\\
    &\overset{(a)}{=} \mathbb{P}(Z_i > 0| r_i = 0, \{M_k\}_{k=1}^n){}\\
    &\overset{(b)}{=}  \mathbb{P}(Z_i > 0| r_i = 0) = 0.5,
\end{align*}
where $(a)$ is because by the definition of the $Z$-score, $h(p_i) > 0$ is equivalent to $Z_i > 0$; and $(b)$ is because $\{M_k\}_{k=1}^n$ is determined by $|Z_i|$ which is independent of $\one(Z_i > 0)$ when $r_i = 0$. Thus, $(2S_i(0) - 1)(1-r_i) = 0$ and in the above condition the sum on the left-hand side only increases when a non-null enters~$M_k$. Therefore, the above condition is satisfied if and only if it is satisfied when a non-null enters~$M_k$:
\begin{align*}
    \exists j \in \{1,\ldots,N_1\} : \sum_{i\in M(j)} r_i(2S_i(1) - 1) \geq \left(C_{|M(j)|}^{\alpha} + C_{|M(j)|}^{\beta}\right)(|M(j)|)^{1/2}.
\end{align*} 

Second, the non-nulls in $M(j)$ are the ones with $j$ highest absolute $Z$-scores, whose $Z$-scores are $Z_{(1)}(\mu),\ldots, Z_{(j)}(\mu)$. Thus, $\sum_{i\in M(j)} r_iS_i(1)$ can be  expressed as $\sum_{s=1}^j \mathbb{P}(Z_{(s)}(\mu) > 0)$, and the above condition can be rewritten as 
\begin{align*}
    \exists j \in \{1,\ldots,N_1\}: \sum_{s=1}^j \left(2\mathbb{P}(Z_{(s)}(\mu) > 0) - 1\right) \geq\left(C_{|M(j)|}^{\alpha} + C_{|M(j)|}^{\beta}\right)(|M(j)|)^{1/2}.
\end{align*}
The above condition holds with probability at least $1-\beta$ if
\begin{align} \label{eq:cond_fix}
    \exists j \in \{1,\ldots,N_1\}:  \sum_{s=1}^j \left(2\mathbb{P}(Z_{(s)}(\mu) > 0) - 1\right) \geq\left(C_n^{\alpha} + C_n^{\beta}\right)(j + t_{\beta/N_1}(N_0, q_j))^{\frac{1}{2}},
\end{align}
where $C_n^{\alpha} + C_n^{\beta} \geq C_{|M(j)|}^{\alpha} + C_{|M(j)|}^{\beta}$ and $j + t_{\beta/N_1}(N_0, q_j)$ is the uniform upper bound of $|M(j)|$ by Lemma~\ref{lm:size_shrink}.

Overall when condition~\eqref{eq:cond_fix} as above holds, the probability of failing to reject is less than the sum of (a) the probability that $|M(j)|$ exceeds its upper bound, which is less than $\beta$; and (b) the probability of not rejecting when condition~\eqref{eq:rand_cond} is satisfied, which is also less than $\beta$; thus the power is at least $1 - 2\beta$. The proof of theorem~\ref{thm:power_batch_imt} completes after replacing all $\beta$ in condition~\eqref{eq:cond_fix} with $\beta/2$.
\end{proof}

\subsection{Proof of condition~\eqref{eq:cond_imt_mu} in the main paper} \label{apd:cond_imt_mu}

\begin{proof}
Let $j = N_1/2$ in Theorem~\ref{thm:power_batch_imt}, the power of \postmt is at least $1-\beta$ if
\begin{align}
    \label{eq:singleJ}
    \sum_{s=1}^{N_1/2} \left(2\mathbb{P}(Z_{(s)}(\mu) > 0) - 1\right) \geq\left(C_n^{\alpha} + C_n^{\beta/2}\right)\left(N_1/2 + t_{\beta/(2N_1)}\left(N_0, q_{N_1/2}\right)\right)^{1/2}.
\end{align}
First, the left-hand side can be lower bounded by
\[
\sum_{s=1}^{N_1/2} \left(2\mathbb{P}(Z_{(s)}(\mu) > 0) - 1\right) \geq N_1/2\cdot (2\Phi(\mu) - 1) = N_1\Phi(\mu) - N_1/2,
\]
since the term $\frac{1}{j}\sum_{s=1}^{j} \left(2\mathbb{P}(Z_{(s)}(\mu) > 0) - 1\right)$ decreases in $j$ and is minimum at $j = N_1$, whose value is 
\begin{align*}
    \frac{1}{N_1}\sum_{s=1}^{N_1} \left(2\mathbb{P}(Z_{(s)}(\mu) > 0) - 1\right) &=\frac{1}{N_1}\sum_{s=1}^{N_1} \left(2\mathbb{E}(\one(Z_{(s)}(\mu) > 0)) - 1\right){}\\
    &=  \frac{1}{N_1} \left(2\mathbb{E}\left(\sum_{s=1}^{N_1}\one(Z_{(s)}(\mu) > 0)\right) - N_1\right) {}\\
    &=  \frac{1}{N_1} \left(2N_1\mathbb{E}\left(\one(Z(\mu) > 0)\right) - N_1\right) = 2\Phi(\mu) - 1.
\end{align*}
Second on the right-hand side, $t_{\beta/(2N_1)}\left(N_0, q_{N_1/2}\right)$ can be upper bounded (by Chernoff inequality):
\begin{align*}
    t_{\beta/(2N_1)}\left(N_0, q_{N_1/2}\right) \leq~& \mathbb{P}(|Z(0)| > |Z_{(N_1/2)}(\mu)|) N_0 {}\\
    &+ \sqrt{2\mathbb{P}(|Z(0)| > |Z_{(N_1/2)}(\mu)|) N_0\log(2N_1/\beta)},
\end{align*}
in which the probability term $\mathbb{P}(|Z(0)| > |Z_{(N_1/2)}(\mu)|)$ can be further upper bounded by
\[
\mathbb{P}(|Z(0)| > |Z(\mu_{\pi_{N_1/2}})|) \leq 2 - 2\Phi(\mu),
\]
since 
\begin{align*}
    \mathbb{P}(|Z(0)| > |Z(\mu_{\pi_{N_1/2}})|) 
    &\overset{(a)}{\leq}  \frac{2P(\mu)}{N_1\left(1-P(\mu) - \frac{N_1/2 - 1}{N_1}\right)^2}{}\\
    &\overset{(b)}{\leq} P(\mu) \overset{(c)}{\leq} 2 - 2\Phi(\mu),
\end{align*}
where $(a)$ is in the proof of Remark~\ref{rmk:bin} in Section~\ref{apd:power_imt}; $(b)$ holds because of the condition ${N_1 \geq 6 \left(C_n^{\alpha} + C_n^{\beta/2}\right)^2}$ and $\mu > 2$ (an assumption we visit later); and $(c)$ is because
$P(\mu) = \mathbb{P}(|Z(0)| \geq |Z(\mu)|) = {2\mathbb{P}(Z(0) \geq |Z(\mu)|)}$, which is less than $2\mathbb{P}(Z(0) \geq Z(\mu))$.

Plugging the lower bound of the left-hand side and the upper bound of the right-hand side, condition~\eqref{eq:singleJ} is implied by
\begin{align*}
    (\Phi(\mu) - \tfrac{1}{2})^2 \geq~& \left(C_n^{\alpha} + C_n^{\beta/2}\right)^2\frac{4 \max\{( 1 - \Phi(\mu)) N_0, \sqrt{( 1 - \Phi(\mu)) N_0\log(\tfrac{2N_1}{\beta})}\}}{N_1^2}{}\\
    &+ \left(C_n^{\alpha} + C_n^{\beta/2}\right)^2 \frac{N_1/2}{N_1^2}.
\end{align*}
Given $\mu > 2$ and $N_1 \geq 6 \left(C_n^{\alpha} + C_n^{\beta/2}\right)^2$, the above condition holds if

\begin{align*}
    &\frac{1}{( 1 - \Phi(\mu))}{}\\
\geq~& \left(C_n^{\alpha} + C_n^{\beta/2}\right)^2\left(\frac{28N_0}{N_1^2}\right) \max\left(1 , \left(C_n^{\alpha} + C_n^{\beta/2}\right)^2\left(\frac{28\log(\tfrac{2N_1}{\beta})}{N_1^2}\right)\right).
\end{align*}
Given $\mu > 2$ and ${N_1 \geq 6 \left(C_n^{\alpha} + C_n^{\beta/2}\right)^2}$, indicating $1 - \Phi(\mu) \leq  e^{-\mu^2/2}/2 $ and $\log(2N_1/\beta) < \frac{N_1}{5}$, we have a sufficient condition of the above condition:
\[
2 e^{\mu^2/2}  \geq \frac{28}{\sqrt{2\pi}}\left(C_n^{\alpha} + C_n^{\beta/2}\right)^2\left(\frac{N_0}{N_1^2}\right),
\]
which can be written as a condition on $\mu$:
\[
\mu  \geq \sqrt{2\log\left( \frac{N_0}{N_1^2}\right) + 4\log\left(C_n^{\alpha} + C_n^{\beta/2}\right) + 3.45}.
\]
Finally we complete the proof by noting that the above condition implies the assumption $\mu \geq 2$ when $N_0 > 0.1 N_1^2$.
\end{proof}

\begin{remark}
Condition~\eqref{eq:cond_imt_mu} in the main paper falls within the ``detectable region" derived in the work of Donoho and Jin~\cite{donoho2015special}: for any test for the problem of detecting sparse Guassian mean ($N_1 \leq n^{1/2}$), type-I error $\alpha$ and type-II error~$\beta$ would be big such that $\alpha + \beta \to 1$ when $n\to \infty$ unless
\begin{align} 
    \mu &\geq \sqrt{\log\left(\frac{n}{N_1^2}\right)},  \quad &\text{when } n^{1/4} \leq N_1 \leq n^{1/2}, \label{eq:detect_reg1}{}\\
    \mu &\geq \sqrt{2}(\sqrt{\log n } -\sqrt{\log N_1 }), \quad &\text{when } 1 < N_1 < n^{1/4}. \label{eq:detect_reg2}
\end{align}
\end{remark}

\begin{proof}
First note that condition~\eqref{eq:cond_imt_mu} in the main paper indicates
\[
\mu \geq \sqrt{2\log\left( \frac{n}{N_1^2}\right)},
\]
for any $N_1 \leq n^{1/2}$, since
\begin{align*}
    &\sqrt{2\log\left( \frac{N_0}{N_1^2}\right) + 4\log\left(C_n^{\alpha} + C_n^{\beta/2}\right) + 3.45} {}\\
    \geq~& \sqrt{2\log\left( \frac{N_0}{N_1^2}\right) + 4\log\left(C_1^{1} + C_1^{1}\right) + 3.45} = \sqrt{2\log\left( \frac{n}{N_1^2} - \frac{1}{N_1}\right) + 8.6} {}\\
    \geq~& \sqrt{2\log\left( \frac{n}{2N_1^2}\right) + 8.6} \geq \sqrt{2\log\left( \frac{n}{N_1^2}\right)},
\end{align*}
when $2\leq N_1 \leq n^{1/2}$ and it is obvious when $N_1 = 1$. So when ${n^{1/4} \leq N_1 \leq n^{1/2}}$, condition~\eqref{eq:cond_imt_mu} is a subset in the detectable region~\eqref{eq:detect_reg1}. 

When $1 < N_1 < n^{1/4}$, denote $N_1 = n^a$ where $0<a<1/4$. The detectable region~\eqref{eq:detect_reg2} can be written as
\[
\mu \geq (1 - \sqrt{a}) \sqrt{2\log n}, 
\]
which is implied by condition~\eqref{eq:cond_imt_mu}, since
\[
\sqrt{2\log\left( \frac{n}{N_1^2}\right)} = \sqrt{1-2a}\sqrt{ 2\log n} \geq (1 - \sqrt{a}) \sqrt{2\log n},
\]
when $a < 1/4$. Hence condition~\eqref{eq:cond_imt_mu} is a subset of the detectable region~\eqref{eq:detect_reg1} and~\eqref{eq:detect_reg2}.
\end{proof}

\section{Power guarantees in the online setting} \label{apd:power_online}

This section proves the power guarantees in the online setting for three methods: the \mst, the \postmt, and a benchmark, the online Bonferroni method.

\subsection{Proof of Theorem~\ref{thm:power_online}} 
\label{apd:power_online_Bonf}

The power guarantee for the \mst in the online setting follows the same steps as that in the batch setting (Section~\ref{sec:mst_proof}), except that the range of $k$ is changed from $\{1,\ldots,n\}$ to $\{1,2,\ldots\}$. We present the proof of the power guarantee for the online Bonferroni method as follows.

First, we derive an upper bound on the power of the online Bonferroni test. Recall the Z-score $Z_k = \Phi^{-1}(1 - p_k)$, which follows a Gaussian distribution $Z_k \sim N(r_k\mu_k, 1)$. The power of rejecting the $k$-th hypothesis at $\alpha_k$ is
\begin{align*}
    \mathbb{P}(p_k < \alpha_k) = \mathbb{P}(Z_k > \Phi^{-1}(1 - \alpha_k)) = 1 - \Phi[\Phi^{-1}(1 - \alpha_k) - r_k \mu_k],
\end{align*}
and the overall power of the online Bonferroni is upper bounded by a union of rejecting individual hypotheses:
\begin{align}
    \mathbb{P}(\exists k \in \mathbb{N}: p_k < \alpha_k) \leq \sum_{k=1}^\infty \mathbb{P}(p_k < \alpha_k) = \sum_{k=1}^\infty 1 - \Phi[\Phi^{-1}(1 - \alpha_k) - r_k \mu_k].
\end{align}

To upper bound the overall power, we claim the following upper bound on individual power of any hypothesis $k$, which is in the ratio of the individual significance level $\alpha_k$.
\begin{lemma} \label{lm:individual_power}
Given any constant $C \in (e^{1/4}, 1)$, if the alternative mean is upper bounded:
\begin{align} \label{eq:cond_online_bon_mu}
    r_k\mu_k \leq \frac{1}{4\Phi^{-1}(1 - \alpha_k)},
\end{align}
the power of rejecting individual hypothesis $k$ is upper bounded:
\begin{align*}
    1 - \Phi[\Phi^{-1}(1 - \alpha_k) - r_k \mu_k] \leq C \cdot \alpha_k,
\end{align*}
for large $k$ such that $\alpha_k < a(C)$, where the threshold $a(C)$ increases in $C$. For example, $a(2) > 0.3$.
\end{lemma}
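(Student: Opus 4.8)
The plan is to reduce the claim to a bound on a ratio of Gaussian tail probabilities and then control that ratio by standard Mills-ratio inequalities, handling the general-$C$ asymptotics and the explicit numerical constant $a(2)>0.3$ separately. Write $\bar\Phi := 1-\Phi$, set $z_k := \Phi^{-1}(1-\alpha_k)$ so that $\bar\Phi(z_k)=\alpha_k$, and let $\delta_k := r_k\mu_k \ge 0$. Then the quantity to be bounded is $1-\Phi[\Phi^{-1}(1-\alpha_k)-r_k\mu_k] = \bar\Phi(z_k-\delta_k)$, and the desired conclusion $\bar\Phi(z_k-\delta_k)\le C\alpha_k$ is precisely $R(z_k,\delta_k):=\bar\Phi(z_k-\delta_k)/\bar\Phi(z_k)\le C$. (It is worth flagging that the stated range $C\in(e^{1/4},1)$ appears to be a typo for $C> e^{1/4}$, since $e^{1/4}>1$ and the example takes $C=2$.)

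First I would note that $R(z,\delta)$ is increasing in $\delta$ because $\bar\Phi$ is decreasing, so under the hypothesis $\delta_k\le 1/(4z_k)$ it suffices to control the worst case $g(z):=R\bigl(z,\tfrac{1}{4z}\bigr)=\bar\Phi\bigl(z-\tfrac{1}{4z}\bigr)/\bar\Phi(z)$ and to show $g(z_k)\le C$ for all sufficiently large $z_k$ (equivalently, sufficiently small $\alpha_k$). Note $z-\tfrac{1}{4z}>0$ whenever $z>\tfrac12$, which holds once $\alpha_k$ is small.

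For the asymptotic control I would invoke the two-sided tail bound $\tfrac{z}{z^2+1}\phi(z)<\bar\Phi(z)<\tfrac{1}{z}\phi(z)$ for $z>0$, applying the upper bound to the numerator (at $z-\delta$) and the lower bound to the denominator. With $\delta=\tfrac{1}{4z}$ this yields
\[
g(z) < \frac{z^2+1}{z(z-\delta)}\cdot\frac{\phi(z-\delta)}{\phi(z)} = \frac{z^2+1}{z^2-\tfrac14}\cdot\exp\!\left(z\delta-\tfrac{\delta^2}{2}\right) \le \frac{z^2+1}{z^2-\tfrac14}\, e^{1/4},
\]
using $z\delta=\tfrac14$. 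The right-hand side tends to $e^{1/4}$ as $z\to\infty$, so for any $C>e^{1/4}$ there is a finite threshold $z^\star(C)$ beyond which $g(z)\le C$. Defining $a(C):=\bar\Phi(z^\star(C))$, the hypothesis $\alpha_k<a(C)$ forces $z_k>z^\star(C)$ and hence $R(z_k,\delta_k)\le g(z_k)\le C$, which is the claim. Relaxing $C$ enlarges the admissible half-line and so decreases $z^\star(C)$, whence $a(C)=\bar\Phi(z^\star(C))$ is increasing in $C$.

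Finally, for the explicit bound $a(2)>0.3$ the crude Mills estimate is useless, since $\tfrac{z^2+1}{z^2-1/4}$ blows up as $z\downarrow\tfrac12$; I would instead argue directly. The plan is to show $g$ is nonincreasing on $[\,z_0,\infty)$ with $z_0:=\Phi^{-1}(0.7)\approx 0.524$, and to evaluate $g$ at the left endpoint: there $\delta=1/(4z_0)\approx 0.477$, so $g(z_0)=\bar\Phi(0.047)/0.3\approx 1.6<2$. This gives $g(z)\le 2$ for every $z\ge z_0$, i.e. $\alpha_k\le 0.3$ already suffices, so $a(2)>0.3$. I expect this last step to be the only genuinely nonroutine part: away from the tail the clean Mills bound degenerates, so establishing the explicit constant requires either a derivative computation certifying monotonicity of $g$ on the relevant interval, or a careful split into direct numerical tail estimates near $z_0$ and the Mills bound for large $z$.
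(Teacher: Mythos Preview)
Your proposal is correct and in fact more complete than the paper's own proof. Both arguments reduce to showing that the ratio $\bar\Phi(z-\tfrac{1}{4z})/\bar\Phi(z)$ tends to $e^{1/4}$ as $z\to\infty$ (equivalently $\alpha_k\to 0$), so that any $C>e^{1/4}$ eventually works; the paper obtains this limit by a direct L'H\^opital computation, whereas you sandwich the ratio using the two-sided Mills inequalities $\tfrac{z}{z^2+1}\phi(z)<\bar\Phi(z)<\tfrac{1}{z}\phi(z)$, which gives an explicit non-asymptotic upper bound converging to $e^{1/4}$. Your route has the advantage of yielding a concrete (if crude) expression for $z^\star(C)$ rather than a pure existence statement. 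For the numerical claim $a(2)>0.3$, the paper simply appeals to simulation (``we observe through simulations that the threshold $a(C)\ge 0.3$ when $C\ge 2$''), so your sketch---evaluate $g$ at $z_0=\Phi^{-1}(0.7)$ to get $g(z_0)\approx 1.6<2$ and argue $g$ is nonincreasing thereafter---is already more rigorous than what the paper provides; the monotonicity step you flag as nonroutine is genuinely the only thing left, and the paper does not supply it either. Your observation that the stated range $C\in(e^{1/4},1)$ must be a typo for $C>e^{1/4}$ is correct and worth noting.
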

\begin{proof}
Consider the ratio of individual power over $\alpha_k$: 
\begin{align*} 
    \frac{1 - \Phi\left[\Phi^{-1}(1 - \alpha_k) - \frac{1}{4\Phi^{-1}(1 - \alpha_k)}\right]}{\alpha_k},
\end{align*}
which converges to $e^{1/4}$ as $\alpha_k \to 0$ by L'Hospital's rule:
\begin{align*}
    &\lim_{\alpha_k \to 0} \frac{1 - \Phi\left[\Phi^{-1}(1 - \alpha_k) - \frac{1}{4\Phi^{-1}(1 - \alpha_k)}\right]}{\alpha_k}{}\\
    =~& \lim_{\alpha_k \to 0} \frac{ \phi\left[\Phi^{-1}(1 - \alpha_k) - \frac{1}{4\Phi^{-1}(1 - \alpha_k)}\right]}{\phi\left[\Phi^{-1}(1 - \alpha_k)\right]}\left(1 + \frac{1}{4\left(\Phi^{-1}(1 - \alpha_k)\right)^2}\right) = e^{1/4}.
\end{align*}
We observe through simulations that the threshold $a(C) \geq 0.3$ when $C \geq 2$. 
\end{proof}

In the following, we derive sufficient conditions for the power of the online Bonferroni to be less than $1 - \beta$ (i.e., the complement of necessary conditions to have at least $1 - \beta$ power), separately under the case of dense non-nulls and sparse non-nulls.
\begin{proof}[Proof of Theorem~\ref{thm:power_online}]
\textbf{Dense non-nulls.} First, consider the dense case where the number of non-nulls are infinite, $\sum_{k=1}^\infty r_k= \infty$. The power of the online Bonferroni is less than $1 - \beta$ when 
\[
\sum_{k=1}^\infty 1 - \Phi[\Phi^{-1}(1 - \alpha_k) - r_k \mu_k] \leq 1 - \beta,
\]
which holds if for each individual hypothesis $k$ with a positive error budget (i.e., $\alpha_k > 0$), the power of rejection is bounded
\begin{align} \label{eq:cond_online_bon_ind}
    1 - \Phi\left[\Phi^{-1}(1 - \alpha_k) - r_k \mu_k\right] \leq \frac{1-\beta}{\alpha}\alpha_k,
\end{align}
where the upper bound $\frac{1-\beta}{\alpha}\alpha_k$ is chosen to satisfy two conditions: (a)~the overall power is less than $1-\beta$: $\sum_{k=1}^\infty \frac{1-\beta}{\alpha}\alpha_k \leq 1-\beta$ and (b)~individual power bound is larger than the corresponding error control level, $\frac{1-\beta}{\alpha}\alpha_k > \alpha_k$, so that the above condition is not trivially satisfied in the case of a null: $r_k\mu_k = 0$. By Lemma~\ref{lm:individual_power}, the above bound on individual power holds when $r_k\mu_k$ satisfy condition~\eqref{eq:cond_online_bon_mu} and $\alpha_k < 0.3$ (Notice that here the constant in the lemma is $C = \frac{1 - \beta}{\alpha} \geq 4$, so threshold $a\left(C\right) > 0.3$).

\iffalse
since 
\begin{align*}
    &\left(1 - \Phi\left[\Phi^{-1}(1 - \alpha_k) - \frac{1}{C\Phi^{-1}(1 - \alpha_k)}\right]\right)' {}\\
    =~& \exp\left\{\frac{1}{C} - \frac{1}{2C^2\left[\Phi^{-1}(1 - \alpha_k)\right]^2}\right\}\left(1 + \frac{1}{C\Phi^{-1}(1 - \alpha_k)^2}\right).
\end{align*}

%Therefore, there exists $N_1$ as a function of $\{\alpha_k\}, \alpha, \beta$ such that for any $k \geq N_1$,
Thus, for $\alpha_k$ smaller than a positive threshold $a(\alpha, \beta)$, 
%there exists a threshold $N_1$ such that $\alpha_k$ is close to zero for any $k \geq N_1$, and 
the individual power satisfies condition~\eqref{eq:cond_online_bon_ind}:
\begin{align*}
    1 - \Phi\left[\Phi^{-1}(1 - \alpha_k) - \frac{1}{2\Phi^{-1}(1 - \alpha_k)}\right] \leq \frac{1-\beta}{\alpha}\alpha_k,
\end{align*}
where the threshold $a(\alpha, \beta)$ decreases as $\alpha$ and $\beta$ increase. In fact, we observe through numerical experiments that the threshold $a(\alpha, \beta) > 0.5$ when $1 - \beta > 3.5 \alpha$. Thus, power condition~\eqref{eq:cond_online_bon_ind} for individual hypotheses holds under weak assumptions that $1 - \beta \geq 3.5\alpha$ and $\alpha < 0.5$ if the alternative mean values satisfy condition~\eqref{eq:cond_online_bon_mu}. 
\fi

\iffalse
%if $\frac{1 - \beta}{\alpha} > e^{1/2}$. As an example, when $\beta = 1 - 2\alpha$, the above inequality holds when $\alpha_k < 0.1$.
When $r_k\mu_k \leq \Phi^{-1}(1 - \alpha_k) - \Phi^{-1}(1 - \frac{1 - \beta}{\alpha}\alpha_k)$ for $k \leq N_1$ and otherwise $r_k\mu_k \leq \frac{1}{2\Phi^{-1}(1 - \alpha_k)}$, we have
\begin{align*}
    &\sum_{k=1}^\infty 1 - \Phi[\Phi^{-1}(1 - \alpha_k) - r_k \mu_k]{}\\
    \leq~& \sum_{k=1}^N \frac{1 - \beta}{\alpha}\alpha_k + \sum_{k = N + 1}^\infty \frac{1 - \beta}{\alpha}\alpha_k \leq \frac{1 - \beta}{\alpha}\alpha = 1 - \beta.
\end{align*}
\fi

To further characterize condition~\eqref{eq:cond_online_bon_mu} on $r_k\mu_k$, we consider a baseline sequence where $\alpha_k^* = (6/\pi^2)\alpha/k^2$, which sums to $\alpha$.
%and the resulting upper bound on $r_k\mu_k$ is $\frac{1}{2\Phi^{-1}(1 - \alpha_k^*)}$. 
For an arbitrary sequence $\{\alpha_k\}_{k=1}^\infty$ that sums to~$\alpha$, apply the condition for the baseline sequence, $r_k\mu_k \leq \frac{1}{4\Phi^{-1}(1 - \alpha_k^*)}$, and the power for each hypothesis $k$ is still upper bounded. Particularly, this upper bound differs by whether $\alpha_k \leq \alpha_k^*$ or $\alpha_k > \alpha_k^*$:
\begin{align*}
    &1 - \Phi\left[\Phi^{-1}(1 - \alpha_k) - \frac{1}{2\Phi^{-1}(1 - \alpha_k^*)}\right]{}\\ \leq~& 1 - \Phi\left[\Phi^{-1}(1 - \alpha_k^*) - \frac{1}{2\Phi^{-1}(1 - \alpha_k^*)}\right] \leq C \alpha_k^*, \quad \text{ if } \alpha_k \leq \alpha_k^*;{}\\
    &1 - \Phi\left[\Phi^{-1}(1 - \alpha_k) - \frac{1}{2\Phi^{-1}(1 - \alpha_k^*)}\right]{}\\
    \leq~& 1 - \Phi\left[\Phi^{-1}(1 - \alpha_k) - \frac{1}{2\Phi^{-1}(1 - \alpha_k)}\right] \leq C \alpha_k, \quad \text{ if } \alpha_k > \alpha_k^*,
\end{align*}
for $k$ such that  $\max\{\alpha_k, \alpha_k^*\} \leq a(C)$, and hence,
\begin{align*}
    1 - \Phi\left[\Phi^{-1}(1 - \alpha_k) - \frac{1}{2\Phi^{-1}(1 - \alpha_k^*)}\right] \leq C \max\{\alpha_k^*,\alpha_k\} \leq C (\alpha_k^* + \alpha_k).
\end{align*}
Choose the constant $C = \frac{1 - \beta}{2 \alpha}$ (with $a(C) > 0.3$), and the overall power is upper bounded by $1 - \beta$:
\begin{align*}
    \sum_{k=1}^\infty 1 - \Phi\left[\Phi^{-1}(1 - \alpha_k) - \frac{1}{2\Phi^{-1}(1 - \alpha_k^*)}\right] \leq \frac{1 - \beta}{2 \alpha} (2 \alpha) = 1 - \beta,
\end{align*}
if (a)~the significance levels are small: $\max\{\alpha_k, \alpha_k^*\} \leq 0.3$ for all $k = 1,2, \ldots$, which holds since $\alpha \leq (1 - \beta)/4 \leq 0.25$; and (b)~the alternative mean $r_k \mu_k$ satisfies condition~\eqref{eq:cond_online_bon_mu} for the baseline sequence, which holds when
\[
r_k\mu_k \leq 0.25\left(\sqrt{2\log \left(\frac{k^2}{\alpha}\right)}\right)^{-1},
\]
where the bound decreases at the rate of $\left(\sqrt{\log k }\right)^{-1}$. \\

\noindent \textbf{Sparse non-nulls.} Suppose the sequence $\{\alpha_k\}_{k=1}^\infty$ is nonincreasing. A stronger necessary condition can be derived if the non-nulls are sparse in the sense that there exists an upper bound $M$ such that $\sum_{k = 1}^\infty r_k \leq M < \infty$. We separately discuss the set of nulls $\{k: r_k = 0\}$, and the set of small and large $\alpha_k$. Let $k^* = M^2/\alpha$, and define the sets of large and small $\alpha_k$ as $L(k^*) := \{k \leq k^*: r_k = 1\}$ and $S(k^*) := \{k > k^*: r_k = 1\}$. The power would be less than $1 - \beta$ if 
\begin{align}
    \sum_{r_k = 0} 1 - \Phi[\Phi^{-1}(1 - \alpha_k) - r_k \mu_k]& \leq \alpha, \text{ and } \label{eq:cond_null}{}\\
    \sum_{k \in L(k^*)} 1 - \Phi[\Phi^{-1}(1 - \alpha_k) - r_k \mu_k]& \leq 2\alpha, \text{ and } \label{eq:cond_large} {}\\
    \sum_{k \in S(k^*)} 1 - \Phi[\Phi^{-1}(1 - \alpha_k) - r_k \mu_k]& \leq 1 - \beta - 3\alpha. \label{eq:cond_small}
\end{align}
Power bound~\eqref{eq:cond_null} for the nulls ($r_k = 0$) holds because individual power equals~$\alpha_k$ and $\sum_{r_k = 0} \alpha_k \leq \alpha$. Power bound~\eqref{eq:cond_large} for large $\alpha_k$ holds if we bound the power of each individual hypothesis $k \in L(k^*)$:
\begin{align*}
    1 - \Phi[\Phi^{-1}(1 - \alpha_k) - r_k \mu_k] \leq 2\alpha_k,
\end{align*}
which can be rewritten as
\begin{align*}
    r_k \mu_k \leq \Phi^{-1}(1 - \alpha_k) - \Phi^{-1}(1 - 2\alpha_k).
\end{align*}
Note that the above bound on $r_k \mu_k$ decreases in $\alpha_k$ and that the set of $\alpha_k$ for $k \in L(k^*)$ is lower bounded because $L(k^*)$ has finite number of hypotheses. 
%To see this, observe that individual error control levels $\alpha_k$ sum to $\alpha$ so they decreases at a faster rate than $1/k$, $\alpha_k = o(\alpha/k)$. Hence, there exists a threshold $k^*$ such that for any $k \geq k^*$, we have $\alpha_k \leq \alpha/k$. Also, the number of $\alpha_k$ that is larger than $\alpha/M$ is no greater than $M$. 
Thus, the above condition holds if for $k \in L(k^*)$, all $r_k\mu_k$ are smaller than the bound corresponding to the smallest significance level in $L(k^*)$, which is $\alpha_{k^*}$:
\begin{align*}
    r_k \mu_k \leq  \Phi^{-1}\left(1 - \alpha_{k^*}\right) - \Phi^{-1}\left(1 - 2\alpha_{k^*}\right),
\end{align*}
where $k^* = M^2/\alpha$. Notice that $\Phi^{-1}\left(1 - x\right)$ is a convex function and its derivative is $-\left(\phi(\Phi^{-1}\left(1 - x\right)\right)^{-1}$, so we have
\begin{align*}
    \Phi^{-1}\left(1 - \alpha_{k^*}\right) - \Phi^{-1}\left(1 - 2\alpha_{k^*}\right) \geq \left(\phi(\Phi^{-1}\left(1 - 2\alpha_{k^*}\right)\right)^{-1} \alpha_{k^*} \geq 0.4\sqrt{\alpha_{k^*}},
\end{align*}
and power bound~\eqref{eq:cond_large} for large $\alpha_k$ holds when $r_k\mu_k \leq 0.4\sqrt{\alpha_{k^*}}$.

For small $\alpha_k$, a sufficient condition for the power bound~\eqref{eq:cond_small} is 
\begin{align*}
    1 - \Phi[\Phi^{-1}(1 - \alpha_k) - r_k \mu_k] \leq \frac{1 - \beta - 3\alpha}{M},
\end{align*}
for all $k \in S(k^*)$ using the fact that the number of hypotheses in $S(k^*)$ is smaller than $M$. The above condition can be rewritten as 
\begin{align*}
    r_k \mu_k \leq \Phi^{-1}(1 - \alpha_k) - \Phi^{-1}\left(1 - \frac{1 - \beta - 3\alpha}{M}\right).
\end{align*}
To characterize the rate of the above bound, recall that the sequence $\{\alpha_k\}_{k=1}^\infty$ decreases and sums to~$\alpha$, so $\alpha_k \leq \alpha/k$ for any $k = 1,2, \ldots$. Thus, the above condition on $r_k\mu_k$ holds when 
\begin{align*}
    r_k \mu_k \leq \sqrt{\log \left(\frac{k}{4\alpha}\right)} - \sqrt{2\log\left(\frac{M}{2(1 - \beta - 3\alpha)}\right)},
    %\Phi^{-1}\left(1 - \frac{1 - \beta - 3\alpha}{M}\right),
\end{align*}
where the threshold increases at the rate of $\sqrt{\log k}$. We note that the above threshold is positive for $k \in S(k^*)$, since $k > k^*$ and $\frac{k}{4\alpha} > \frac{M^2}{4\alpha^2} \geq \frac{M^2}{4(1 - \beta - 3\alpha)^2}$, so that the condition on $r_k\mu_k$ is nontrivial. 

%observe that individual error control levels $\alpha_k$ sum to $\alpha$ so they decreases at a faster rate than $1/k$, $\alpha_k = o(\alpha/k)$. Hence, there exists a threshold $k^*$ such that for any $k \geq k^*$, we have $\alpha_k \leq \alpha/k$. In such a case, 
%$\Phi^{-1}(1 - \alpha_k) \geq \Phi^{-1}(1 - \alpha/k) \geq \sqrt{\log\left(\frac{k}{3\alpha}\right)}$, and 

\end{proof}

We also demonstrate that the necessary condition for dense non-nulls is fairly tight when all the hypotheses are non-null.
\begin{lemma} \label{lm:Bonf_example}
 Suppose the sequence $\{\alpha_k\}_{k=1}^\infty$ decreases at a slow rate,
 \begin{align*}
     \alpha_1 = 0 \text{ and } \alpha_k = A/[k(\log k)^2] \text{ for } k > 1,
 \end{align*}
 with constant $A = \alpha/\left(\sum_{k=2}^\infty 1/[k(\log k)^2]\right)$ such that $\sum_{k=1}^\infty \alpha_k = \alpha$. The power of the online Bonferroni test is one if all hypotheses are non-null for $k > 1$ and the mean value decreases: $\mu_k = \left(\log k\right)^{-1/c}$ for any $c > 2$.
 %whose decreasing rate is  slower than the necessary condition~\eqref{eq:online_Bonferroni_power} in the case of dense non-nulls.
\end{lemma}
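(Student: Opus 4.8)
The plan is to show that the test rejects almost surely by proving that infinitely many individual rejections occur with probability one; since the power is the probability of at least one rejection, this forces the power to equal $1$. The crucial structural fact is that, in the Gaussian sequence model of Setting~\ref{set:simple}, the scores $Z_k = \Phi^{-1}(1-p_k) \sim N(\mu_k,1)$ are independent, so the rejection events $\{p_k \le \alpha_k\} = \{Z_k \ge \Phi^{-1}(1-\alpha_k)\}$ are mutually independent. By the second Borel--Cantelli lemma (equivalently, $\prod_{k}\big(1 - \mathbb{P}(p_k \le \alpha_k)\big) = 0$ iff the corresponding series diverges), it therefore suffices to prove
\[
\sum_{k=2}^\infty \mathbb{P}(p_k \le \alpha_k) = \infty,
\]
where the term $k=1$ is dropped since $\alpha_1 = 0$ never rejects.

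First I would write each summand explicitly. Setting $z_k := \Phi^{-1}(1-\alpha_k)$, we have $\mathbb{P}(p_k \le \alpha_k) = 1 - \Phi(z_k - \mu_k)$, while $\alpha_k = 1 - \Phi(z_k)$. To compare the two I would use the Mills-ratio bounds $\tfrac{z}{z^2+1}\phi(z) \le 1-\Phi(z) \le \tfrac1z\phi(z)$ (valid for $z>0$), which give
\[
\frac{\mathbb{P}(p_k \le \alpha_k)}{\alpha_k} = \frac{1-\Phi(z_k - \mu_k)}{1-\Phi(z_k)} \ge \frac{z_k(z_k-\mu_k)}{(z_k-\mu_k)^2 + 1}\,\exp\!\Big(z_k\mu_k - \tfrac{\mu_k^2}{2}\Big).
\]
Since $\mu_k \to 0$ and $z_k \to \infty$, the rational prefactor tends to $1$ (and is $\ge \tfrac12$ for all large $k$) and $e^{-\mu_k^2/2} \to 1$, so the summand is, up to a constant, $\alpha_k\, e^{z_k\mu_k}$.

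The heart of the argument is then to estimate $z_k$ and the exponent $z_k\mu_k$. I would invoke the quantile asymptotic $z_k = \Phi^{-1}(1-\alpha_k) \sim \sqrt{2\log(1/\alpha_k)}$; with $\alpha_k = A/[k(\log k)^2]$ one has $\log(1/\alpha_k) = \log k + 2\log\log k - \log A \sim \log k$, hence $z_k \sim \sqrt{2\log k}$. Plugging in $\mu_k = (\log k)^{-1/c}$ gives $z_k\mu_k \sim \sqrt{2}\,(\log k)^{1/2 - 1/c}$. Because $c > 2$ we have $1/2 - 1/c > 0$, so this exponent diverges, and consequently
\[
\mathbb{P}(p_k \le \alpha_k) \gtrsim \frac{A}{k(\log k)^2}\,\exp\!\big(\sqrt2\,(\log k)^{1/2-1/c}\big).
\]
Since $(\log k)^{1/2-1/c}$ is a positive power of $\log k$, it grows faster than $\log\log k$, so $\exp\big(\sqrt2\,(\log k)^{1/2-1/c}\big)$ eventually dominates $(\log k)^2$; thus $\mathbb{P}(p_k \le \alpha_k) \ge c_0/k$ for all large $k$ and some $c_0 > 0$, the series diverges, and the proof is complete.

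The main obstacle is making the tail estimates fully rigorous rather than merely asymptotic: one must control $z_k$ from both sides (a lower bound to guarantee $z_k\mu_k$ is large, and an upper bound to keep the Mills prefactor near $1$) and verify that the lower-order corrections --- the $2\log\log k$ and $\log A$ inside $z_k$, and the $\mu_k^2/2$ and polynomial Mills factors --- cannot erode the divergence. The decisive quantitative check is exactly the comparison $(\log k)^{1/2-1/c} \gg \log\log k$, which holds precisely because $c > 2$ makes the power positive; this is where the hypothesis $c > 2$ enters and where the tightness claim in the surrounding remark originates.
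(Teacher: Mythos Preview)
Your proposal is correct and follows essentially the same route as the paper: both reduce to showing $\sum_k \mathbb{P}(p_k\le\alpha_k)=\infty$ (you via Borel--Cantelli, the paper via the equivalent product/log-sum formulation) by proving that the ratio $b_k:=\mathbb{P}(p_k\le\alpha_k)/\alpha_k$ grows faster than any power of $\log k$. The only technical difference is that the paper estimates $b_k$ via L'Hospital's rule to get $(\log k)/b_k\to 0$ and then compares to $\sum A/(k\log k)$, whereas you use Mills-ratio bounds to obtain the sharper $b_k\gtrsim e^{\sqrt2(\log k)^{1/2-1/c}}$ and compare to $\sum 1/k$; your route is slightly more elementary and more quantitative, but the underlying idea and the role of the hypothesis $c>2$ are identical.
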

%Suppose all the $p$-values are generated from $N(\mu,1)$ with a common constant mean $\mu$. The power of the Bonferroni test with $\alpha_k = \alpha/n \one\{k \in [n]\}$ goes to one as $n$ goes to infinity for any $\mu > 0$.
\begin{proof}
Let $Z_k = \Phi^{-1}(1 - p_k) \sim N(\mu_k,1)$ and $X_k = Z_k - \mu_k \sim N(0,1)$. The power of the online Bonferroni test is 
\begin{align} \label{eq:exact_power_online_bon}
    \mathbb{P}(\exists k \in \mathbb{N}: Z_k \geq \Phi^{-1}(1 - \alpha_k)) =~& \mathbb{P}(\exists k \in \mathbb{N}: X_k \geq \Phi^{-1}(1 - \alpha_k) - \mu_k) \nonumber{}\\
    =~& 1 - \prod_{k=1}^\infty \Phi\left[\Phi^{-1}\left(1 - \alpha_k\right) - \mu_k\right].
\end{align}
Intuitively, the power would not converge to one when $\Phi\left[\Phi^{-1}\left(1 - \alpha_k\right) - \mu_k\right] \gtrapprox (1 - \alpha_k)$ (the case with $\mu_k = 0$) since $1 - \prod_{k=1}^\infty (1 - \alpha_k) \leq \sum_{k=1}^\infty \alpha_k \leq \alpha$, but could be one when $\Phi\left[\Phi^{-1}\left(1 - \alpha_k\right) - \mu_k\right] \ll 1 - \alpha_k$. To quantify this comparison, we consider the following ratio:
\[
b_k := \frac{1 - \Phi\left[\Phi^{-1}\left(1 - \alpha_k\right) - \mu_k\right]}{\alpha_k},
\]
and the power could be one when $b_k$ is large. Indeed, we claim that $b_k$ increases at a rate faster than $\log k$, or equivalently, $(\log k)/b_k \to 0$. 
\iffalse
More specifically,
\begin{align}
    b_k \to \exp\{\Phi^{-1}\left(1 - \alpha_k\right) \mu_k - \frac{1}{2}\mu_k^2\},
\end{align}
\fi
It can be verified by L'Hospital's rule:
\begin{align*}
    \lim_{k \to \infty} (\log k)/b_k =~& \lim_{k \to \infty} \frac{\alpha_k \log k}{1 - \Phi\left[\Phi^{-1}\left(1 - \alpha_k\right) - \mu_k\right]}{}\\ 
    %=~& \lim_{k \to \infty} \frac{\frac{\partial \alpha_k}{\partial k} \log k + \alpha_k / k}
    %{-\phi\left[\Phi^{-1}\left(1 - \alpha_k\right) - \mu_k\right]
    %\left(-\frac{\partial \alpha_k}{\partial k} \big/ \phi\left[\Phi^{-1}\left(1 - \alpha_k\right)  \right]  - \frac{\partial \mu_k}{\partial k} \right)}{}\\
    =~& \lim_{k \to \infty} \frac{\phi \left[\Phi^{-1}\left(1 - \alpha_k\right)\right]}{\phi\left[\Phi^{-1} \left(1 - \alpha_k\right) - \mu_k\right]}
    \frac{\log k + \frac{\alpha_k}{k} \big/ \frac{\partial \alpha_k}{\partial k}}{1 + \phi \left[\Phi^{-1}\left(1 - \alpha_k\right)\right] \frac{\partial \mu_k}{\partial k} \big/ \frac{\partial \alpha_k}{\partial k} },{}\\
\end{align*}
where for large $k$, we have $\Phi^{-1}\left(1 - \alpha_k\right) \geq \sqrt{\log k}$ and %$\Phi^{-1}\left(1 - \alpha_k\right)\mu_k \geq (\log k)^{-1/4}$,
\begin{align*}
    \frac{\phi \left[\Phi^{-1}\left(1 - \alpha_k\right)\right]}{\phi\left[\Phi^{-1} \left(1 - \alpha_k\right) - \mu_k\right]} 
    %= \exp\{-\Phi^{-1}\left(1 - \alpha_k\right) \mu_k + \frac{1}{2} \mu_k^2\} 
    \leq~& 2\exp\{-(\log k)^{1/2-1/c}\};
    %k^{1/2-1/c};
    {}\\
    \log k + \frac{\alpha_k}{k} \big/ \frac{\partial \alpha_k}{\partial k} \leq~& 2\log k;{}\\
    1 + \phi \left[\Phi^{-1}\left(1 - \alpha_k\right)\right] \frac{\partial \mu_k}{\partial k} \bigg/ \frac{\partial \alpha_k}{\partial k}  \geq~& 1.
\end{align*}
Thus, $\lim_{k \to \infty} (\log k)/b_k \leq \lim_{k \to \infty} \frac{4\log k }{\exp\{(\log k)^{1/2-1/c}\}} = 0$ for any $c > 2$. In other words, we have proved that $b_k/\log k \to \infty$.

The power~\eqref{eq:exact_power_online_bon} is one if $\prod_{k=1}^\infty \Phi\left[\Phi^{-1}\left(1 - \alpha_k\right) - \mu_k\right] = 0$, or equivalently,
\begin{align} \label{eq:cond_log_online_bon}
    \sum_{k=1}^\infty \log \Phi\left[\Phi^{-1}\left(1 - \alpha_k\right) - \mu_k\right] = -\infty,
\end{align}
where for large $k$, we have
\begin{align*}
    &\log \Phi\left[\Phi^{-1}\left(1 - \alpha_k\right) - \mu_k\right]{}\\
    =~& \log(1 - b_k \alpha_k) \leq - b_k\alpha_k{}\\
    \leq~& - A\log k /[k(\log k)^2] = - A/(k \log k).
\end{align*}
Condition~\eqref{eq:cond_log_online_bon} holds because $\sum_{k=1}^\infty - A/(k \log k) = -\infty$; and thus, we prove that the power of the online Bonferroni test is one.

\end{proof}

\subsection{Proof of Theorem~\ref{thm:power_online_imt}}
Theorem~\ref{thm:power_online_imt} is a simplified version of the following Theorem~\ref{thm:power_online_imtfull} (by Claim~\ref{clm:trans7-8}). Before stating Theorem~\ref{thm:power_online_imtfull}, we first define the distinction measure $D(c)$
as
\[
D(c) = \frac{\mathbb{P}( |Z(\mu)| > c)}{\mathbb{P}( |Z(0)| > c)},
\]
where $c$ is the screening parameter in the online \postmt. Bigger $D(c)$ indicates bigger distinction. Further denote $N_1(k) = \sum_{i=1}^k r_i$ as the number of non-nulls after $k$ hypotheses arrive and $N_0(k)= \sum_{i=1}^k 1 - r_i$ as for the nulls.

\begin{theorem}
\label{thm:power_online_imtfull}
The \postmt with type-I error $\alpha$ and threshold $c$ guarantees $1-\beta$ power if
\begin{align*}
   \exists k \in \mathbb{N}: 
   &(2S(\mu, c) - 1) \left( N_1(k) - \frac{C_k^{\beta/3} \sqrt{N_1(k)}}{2\mathbb{P}( |Z(\mu)| > c)}\right){}\\
   & \geq \frac{C_k^{\alpha} + C_k^{\beta/3}}{\mathbb{P}^{1/2}( |Z(\mu)| > c)}\left[ N_1(k) + D^{-1}(c) N_0(k) + \frac{C_k^{\beta/3} k^{1/2}}{2\mathbb{P}( |Z(\mu)| > c)} \right]^{1/2},
\end{align*}
where $S(\mu; c) = \mathbb{P}(Z(\mu) > 0 \mid |Z(\mu)| > c)$.
\end{theorem}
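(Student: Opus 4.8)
The plan is to mimic the batch proof of Theorem~\ref{thm:power_batch_imt}, but to track three sources of randomness separately, each charged an error budget of $\beta/3$. Write $p_1 = \mathbb{P}(|Z(\mu)| > c)$ and $p_0 = \mathbb{P}(|Z(0)| > c)$ for the probabilities that a non-null and a null respectively survive the screening $|Z_i| > c$, so that $D^{-1}(c) = p_0/p_1$. For each arriving hypothesis let $B_i = \mathbb{1}(|Z_i| > c)$ be the passing indicator, and recall $h(p_i) = \mathrm{sign}(Z_i)$. Two elementary computations drive everything: conditional on passing, a null contributes a mean-zero missing bit (by symmetry of $Z(0)$), while a non-null contributes $\mathbb{E}[h(p_i) \mid |Z_i| > c,\, r_i = 1] = 2S(\mu;c) - 1 \ge 0$. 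Hence the only drift in the test statistic comes from the admitted non-nulls.

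First I would set up the test statistic as a martingale indexed by the filter size $j = |M_j|$, revealing the sign of each newly admitted hypothesis last so that the drift is predictable. Centering each increment by its conditional mean (which is $2S(\mu;c)-1$ for an admitted non-null and $0$ for an admitted null), the centered increments lie in an interval of length $2$ and are therefore $1$-subGaussian, so the curved boundary of Howard et al.\ (the quantity $C^\gamma_j\sqrt{j}$ from~\eqref{eqn:ckgamma}) applies. Writing $m_1$ for the number of non-nulls among the first $m$ admitted hypotheses, the uniform lower-deviation bound gives, with probability at least $1-\beta/3$, that $S_m \ge (2S(\mu;c)-1)\,m_1 - C^{\beta/3}_m\sqrt{m}$ simultaneously over all $m$.

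Next I would control the two counting processes. The number $m_1$ of passed non-nulls, as a function of the non-null arrival index, is a sum of independent $\mathrm{Ber}(p_1)$ variables, while the number $m$ of total passers is a sum of independent $\mathrm{Ber}(p_{r_i})$ variables over all arrivals. Both are $\tfrac12$-subGaussian after centering, so the same curved boundary (scaled by $\tfrac12$) yields, each with probability at least $1-\beta/3$, the one-sided bounds $m_1 \ge N_1(k)p_1 - \tfrac12 C^{\beta/3}_k\sqrt{N_1(k)}$ and $m \le N_1(k)p_1 + N_0(k)p_0 + \tfrac12 C^{\beta/3}_k\sqrt{k}$, where I have used $N_1(k)\le k$ together with the monotonicity of $C^\gamma_j$ in $j$ to replace the natural indices by the arrival index $k$.

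Finally I would union-bound the three events (total failure probability $\le \beta$) and combine. On the good event, since the filter size never exceeds the number of arrivals ($m \le k$), monotonicity of $C^\gamma_j$ lets me upper bound both the test boundary $C^\alpha_m\sqrt{m}$ and the statistic-fluctuation term $C^{\beta/3}_m\sqrt{m}$ by $C^\alpha_k\sqrt{\bar m}$ and $C^{\beta/3}_k\sqrt{\bar m}$, with $\bar m := N_1(k)p_1 + N_0(k)p_0 + \tfrac12 C^{\beta/3}_k\sqrt{k}$. Rejection at filter size $m$ is then guaranteed once $(2S(\mu;c)-1)\big(N_1(k)p_1 - \tfrac12 C^{\beta/3}_k\sqrt{N_1(k)}\big) > (C^\alpha_k + C^{\beta/3}_k)\sqrt{\bar m}$, and dividing through by $p_1$ (using $p_1 D^{-1}(c) = p_0$) reproduces the stated condition. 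The main obstacle is the bookkeeping at the interface of the two clocks: the statistic and the test boundary live on the filter-size clock $j = |M_j|$, whereas the condition is stated on the arrival clock $k$, and the drift $(2S(\mu;c)-1)m_1$ is itself random because the number of admitted non-nulls is random; keeping the filtration honest (revealing signs last so the drift is predictable) and exploiting the deterministic inequality $m \le k$ together with the monotonicity of $C^\gamma_j$ are what make the translation rigorous.
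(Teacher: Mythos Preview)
Your proposal is correct and follows essentially the same approach as the paper: a three-way split of the error budget $\beta$ into (i) the martingale deviation of the test statistic around its conditional drift, (ii) a lower bound on the number of admitted non-nulls (the paper's Lemma~\ref{lm:shrink_nn}), and (iii) an upper bound on the total filter size (the paper's Lemma~\ref{lm:online_shrink}), followed by the same algebraic combination and division by $p_1 = \mathbb{P}(|Z(\mu)|>c)$. Your explicit handling of the two clocks via the deterministic inequality $m \le k$ and the monotonicity of $C^\gamma_j$ is exactly what the paper uses implicitly when it replaces $C^\gamma_{|M_k|}$ by $C^\gamma_k$.
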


\begin{proof}
Denote $M_k$ as the set of hypotheses that pass screening ($|Z_i| > c$) after $k$ hypotheses arrive. By extending Lemma~\ref{lm:power_oriOrder} from $k = 1,\ldots,n$ to $k = 1,2,\ldots$, the power of \postmt is at least $1-\beta$ if
\begin{align} \label{eq:rand_cond_online}
    \exists k \in \mathbb{N}:~&  \sum_{i\in M_k} \left(r_i(2S_i(1) - 1) + (1-r_i)(2S_i(0) - 1)\right) \nonumber{}\\
    &\geq \left(C_{|M_k|}^{\alpha} + C_{|M_k|}^{\beta}\right)(|M_k|)^{1/2},
\end{align}
where for the passed non-nulls, $S_i(1) = \mathbb{P}(h(p_i) = 1 \mid r_i = 1,  i \in M_i)$, which can be written in terms of $Z_i$ as $\mathbb{P}(Z_i > 0\mid r_i = 1, |Z_i| > c) = S(\mu, c)$, and for passed the nulls, $S_i(0) = \mathbb{P}(Z_i > 0 \mid r_i = 0, |Z_i| > c) = \mathbb{P}(Z(0) > 0 \mid |Z(0)| > c) = 0.5$. By the lemmas presented below, the right-hand side is upper bounded by
\begin{align*}
    |M_k| \leq& \mathbb{P}( |Z(\mu)| > c) \left(N_1(k) + D^{-1}(c)N_0(k)\right) + \frac{C_k^{\beta} }{2}k^{1/2},
\end{align*}
with probability $1 - \beta$ (Lemma~\ref{lm:online_shrink}). The left-hand side is lower bounded by
\begin{align*}
    \sum_{i \in M_k} (2S_i(1) - 1)r_i 
    =~& (2S(\mu, c) - 1) \sum_{i \in M_k} r_i {}\\
    \geq~& (2S(\mu, c) - 1) \left(\mathbb{P}(|Z(\mu)| > c) N_1(k) - \frac{C_k^{\beta} }{2}\sqrt{N_1(k)}\right),
\end{align*}
with probability $1 - \beta$ (Lemma~\ref{lm:shrink_nn}). The condition in Theorem~\ref{thm:power_online_imtfull} results from plugging the bounds of both sides into condition~\eqref{eq:rand_cond_online}. 

Overall, when the condition in Theorem~\ref{thm:power_online_imtfull} holds, the probability of failing to reject is less than the sum of (a) the probability that the upper bound for the right-hand side is violated, which is less than $\beta/3$; (b) the probability that the lower bound for the left-hand side is violated, which is less than $\beta/3$; and (c) the probability of not rejecting when condition~\eqref{eq:rand_cond_online} is satisfied, which is less than~$\beta/3$; thus the power is at least $1-\beta$.
\end{proof}

\begin{lemma} \label{lm:online_shrink}
The size of $M_k$ in the online setting is uniformly upper bounded:
\begin{align*}
    \mathbb{P}_1\left(\forall k \in \mathbb{N}: |M_k| - \mathbb{E}(|M_k|) \leq \frac{C_k^{\beta}}{2} k^{1/2} \right) \geq 1 - \beta,
\end{align*}
where
\begin{align*}
    \mathbb{E}(|M_k|) = \mathbb{P}( |Z(\mu)| > c) \left(N_1(k) + D^{-1}(c)N_0(k)\right).
\end{align*}
\end{lemma}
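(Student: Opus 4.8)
The plan is to recognize $|M_k|$ as a partial sum of independent indicators, center it into a bounded-increment martingale, rescale it to have unit-subGaussian increments, and then apply the very same curved uniform boundary that underlies the definition of $C_k^\gamma$.

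First I would note that with the fixed-threshold online screening rule, a hypothesis enters the set exactly when $|Z_i| > c$, so $|M_k| = \sum_{i=1}^k \mathbbm{1}(|Z_i| > c)$, where under $\mathbb{P}_1$ the $Z_i \sim N(r_i\mu, 1)$ are independent. Each summand is Bernoulli with success probability $\mathbb{P}(|Z(\mu)| > c)$ when $r_i = 1$ and $\mathbb{P}(|Z(0)| > c)$ when $r_i = 0$, so by linearity $\mathbb{E}(|M_k|) = N_1(k)\,\mathbb{P}(|Z(\mu)| > c) + N_0(k)\,\mathbb{P}(|Z(0)| > c)$. Substituting the identity $\mathbb{P}(|Z(0)| > c) = D^{-1}(c)\,\mathbb{P}(|Z(\mu)| > c)$, which is merely the definition of $D(c)$, immediately recovers the advertised expression $\mathbb{E}(|M_k|) = \mathbb{P}(|Z(\mu)| > c)\,(N_1(k) + D^{-1}(c) N_0(k))$.

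Next I would center the increments by setting $\xi_i := \mathbbm{1}(|Z_i| > c) - \mathbb{E}\,\mathbbm{1}(|Z_i| > c)$, so that $\sum_{i=1}^k \xi_i = |M_k| - \mathbb{E}(|M_k|)$. Since the $Z_i$ are independent, $\{\xi_i\}$ is a martingale difference sequence for its natural filtration. Each $\xi_i$ takes values in an interval of length one, so by Hoeffding's lemma it is $\tfrac14$-subGaussian; consequently $2\xi_i$ is $1$-subGaussian, and $\{\sum_{i=1}^k 2\xi_i\}_k$ is a martingale with $1$-subGaussian increments of exactly the type for which the curved boundary $u_\gamma(k)$ of Howard et al.~\cite{howard2020time1} is valid. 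I would then invoke the uniform boundary-crossing inequality $\mathbb{P}(\exists k: \sum_{i=1}^k 2\xi_i > u_\gamma(k)) \leq \gamma$, divide through by two, and use the relation $C_k^\gamma = u_\gamma(k)/k^{1/2}$ to get $\mathbb{P}(\exists k: |M_k| - \mathbb{E}(|M_k|) > \tfrac12 C_k^\gamma k^{1/2}) \leq \gamma$. Setting $\gamma = \beta$ and passing to the complementary event yields the lemma.

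The step requiring the most care, and the main potential pitfall, is the bookkeeping of the subGaussian constant: one must verify that a centered length-one indicator has variance proxy $\tfrac14$ rather than $1$, so that the correct rescaling factor is precisely $2$ and the boundary therefore applies at the stated level with the factor $\tfrac12$ in front of $C_k^\beta k^{1/2}$. A second subtlety worth flagging explicitly is that the increments are independent but \emph{not} identically distributed, since nulls and non-nulls have different screening probabilities; this is harmless because the Howard et al.\ boundary only requires a conditional subGaussian moment bound on the martingale increments, which holds increment-by-increment, and likewise the argument runs under $\mathbb{P}_1$ because only independence and the centering are used, not any null assumption.
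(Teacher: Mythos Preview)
Your proposal is correct and follows essentially the same approach as the paper: write $|M_k|$ as a sum of independent Bernoulli indicators, center to obtain a martingale with $\tfrac14$-subGaussian increments, and apply the curved boundary $u_\beta(k)$ with the resulting factor of $\tfrac12$. Your write-up is in fact more explicit than the paper's proof about the Hoeffding-lemma justification for the $\tfrac14$ variance proxy and about the non-i.i.d.\ increments being harmless for the Howard et al.\ boundary.
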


\begin{proof}
The probability of a hypothesis $H_i$ passing screening is $\mathbb{P}( |Z(\mu)| > c)$ when $H_i$ is a non-null, and $\mathbb{P}( |Z(0)| > c)$ when $H_i$ is a null. Denote $X_i$ as the indicator of whether $H_i$ passes the screening, then $|M_k| = \sum_{i=1}^k X_i$. Because $X_i$ are independent and each $X_i$ is a mixture of two Bernoullis (of value $\{0,1\}$), the size $|M_k|$ is a martingale with $\frac{1}{4}$-subGaussian increment. Therefore, 
\begin{align*}
   \mathbb{P}_1\left(\forall k \in \mathbb{N}: |M_k| - \mathbb{E}(|M_k|) \leq \frac{u_\beta(k)}{2} \right) \geq 1 - \beta,
\end{align*}
where $u_\beta(k)$ is the upper bound for Gaussian increment martingale as test~\eqref{test:mst_curve} in the main paper. The expected value is 
\begin{align*}
    \mathbb{E}(|M_k|) =~& \sum_{i=1}^k r_i \mathbb{P}( Z(\mu)| > c) + (1-r_i) \mathbb{P}( |Z(0)| > c){}\\
    =~& \mathbb{P}( |Z(\mu)| > c) \left(N_1(k) + D^{-1}(c)N_0(k)\right),
\end{align*}
which completes the proof.
\end{proof}

\begin{lemma}
\label{lm:shrink_nn}
The number of non-nulls in $M_k$ is uniformly lower bounded:
\begin{align*}
    \mathbb{P}_1\left(\forall k \in \mathbb{N}, \quad \sum_{i \in M_k} r_i - \mathbb{E}(\sum_{i \in M_k} r_i) \geq - \frac{C_k^{\beta}}{2}(N_1(k))^{1/2} \right) \geq 1 - \beta,
\end{align*}
where
\begin{align*}
    \mathbb{E}(\sum_{i \in M_k} r_i) = \mathbb{P}( |Z(\mu)| > c) N_1(k).
\end{align*}
\end{lemma}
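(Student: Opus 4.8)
The plan is to recognize $\sum_{i \in M_k} r_i$ as a centering of a sum of \emph{independent}, centered, $\tfrac14$-subGaussian increments, and then apply a curved uniform boundary exactly as in Lemma~\ref{lm:online_shrink}; the only twist is that the relevant ``intrinsic time'' is the number of non-nulls $N_1(k)$ rather than $k$. First I would write $X_i = \one(|Z_i| > c)$ for the indicator that $H_i$ passes screening, so that $\sum_{i \in M_k} r_i = \sum_{i=1}^k r_i X_i$. Since the $Z_i$ are independent and each $X_i$ depends only on $Z_i$, the summands $r_i X_i$ are independent: each null contributes $0$, and each non-null contributes a $\mathrm{Ber}(\mathbb{P}(|Z(\mu)|>c))$ variable. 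Centering gives the martingale $S_k := \sum_{i=1}^k r_i\bigl(X_i - \mathbb{P}(|Z(\mu)|>c)\bigr)$, whose increments vanish at nulls and are centered Bernoulli (hence $\tfrac14$-subGaussian by Hoeffding) at non-nulls, and whose expectation $\mathbb{E}(\sum_{i \in M_k} r_i) = \mathbb{P}(|Z(\mu)|>c)\,N_1(k)$ is exactly the stated value.

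The key step is to pass to the non-null subsequence. Because the labels $r_i$ are fixed (non-random), the increments of $S_k$ vanish except at the \emph{deterministic} indices $\tau_1 < \tau_2 < \cdots$ where $r_{\tau_j}=1$. Setting $\tilde S_j := S_{\tau_j} = \sum_{l=1}^j \bigl(X_{\tau_l} - \mathbb{P}(|Z(\mu)|>c)\bigr)$ yields a genuine sum of $j$ independent centered $\tfrac14$-subGaussian variables, with no stopping-time subtleties precisely because the $\tau_j$ are deterministic. I would then apply the curved uniform boundary of test~\eqref{test:mst_curve} to the process $-\tilde S_j$ (itself $\tfrac14$-subGaussian), using the $\tfrac12$ rescaling appropriate for $\tfrac14$-subGaussian increments exactly as in Lemma~\ref{lm:online_shrink}, to conclude
\[
\mathbb{P}_1\Bigl(\exists\, j: -\tilde S_j \geq \tfrac{1}{2} C_j^{\beta}\, j^{1/2}\Bigr) \leq \beta.
\]

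Finally I would translate the bound from the $j$-index back to the arrival index $k$. Since $S_k$ is constant between consecutive non-null arrivals, $S_k = \tilde S_{N_1(k)}$ for every $k$, so a crossing in $k$ corresponds to a crossing in $j = N_1(k)$. Using that $C_\cdot^{\beta}$ is nondecreasing and $N_1(k) \leq k$, I have $\tfrac12 C_{N_1(k)}^{\beta}(N_1(k))^{1/2} \leq \tfrac12 C_k^{\beta}(N_1(k))^{1/2}$, so the event $\{-S_k \geq \tfrac12 C_k^{\beta}(N_1(k))^{1/2}\}$ is contained in $\{-\tilde S_{N_1(k)} \geq \tfrac12 C_{N_1(k)}^{\beta}(N_1(k))^{1/2}\}$, whose uniform-over-$k$ probability is at most $\beta$ by the previous display. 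Taking complements gives the claimed uniform lower bound on $\sum_{i \in M_k} r_i$.

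The main obstacle I anticipate is the bookkeeping around the intrinsic time: one must verify that the correct time scale for the boundary is $N_1(k)$ (only non-nulls contribute variance), while the \emph{stated} bound conveniently writes $C_k^{\beta}$, which is legitimate only because $C_\cdot^{\beta}$ is monotone and therefore upper-bounds $C_{N_1(k)}^{\beta}$, making the claimed inequality weaker than (and hence implied by) the sharp boundary evaluated at intrinsic time $N_1(k)$. The remaining ingredients---independence of the screening indicators and the $\tfrac14$-subGaussianity of centered Bernoullis---are routine and mirror the proof of Lemma~\ref{lm:online_shrink} almost verbatim.
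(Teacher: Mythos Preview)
Your proposal is correct and follows essentially the same route as the paper, which simply says the proof ``follows the same steps as in Lemma~\ref{lm:online_shrink}, by considering only the non-nulls, or equivalently assuming $r_i = 1$ for all $i$.'' You have made explicit the one piece of bookkeeping the paper leaves implicit, namely that the intrinsic time for the martingale is $N_1(k)$ rather than $k$, and that the monotonicity of $C_\cdot^\beta$ lets you replace $C_{N_1(k)}^\beta$ by the larger $C_k^\beta$ appearing in the stated bound.
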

The proof follows the same steps as in Lemma~\ref{lm:online_shrink}, by considering only the non-nulls, or equivalently assuming $r_i = 1$ for all $i$.  

\begin{claim}
\label{clm:trans7-8}
The condition of \postmt to have $1-\beta$ power in Theorem~\ref{thm:power_online_imt} implies that in Theorem~\ref{thm:power_online_imtfull}.
\end{claim}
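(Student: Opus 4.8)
The plan is to fix the index $k \ge T(\beta;c)$ that witnesses the condition of Theorem~\ref{thm:power_online_imt} and to show that this same $k$ also witnesses the messier condition of Theorem~\ref{thm:power_online_imtfull}; since $T(\beta;c)\ge 1$, any such $k$ is a legal witness for the ``$\exists k\in\mathbb N$'' appearing in the full statement. Throughout I abbreviate $P:=\mathbb{P}(|Z(\mu)|>c)=\Phi(\mu-c)+\Phi(-\mu-c)$, $a:=\Phi(\mu-c)-\Phi(-\mu-c)$, $w:=C_k^\alpha+C_k^{\beta/3}$ and $v:=C_k^{\beta/3}$, and record the dictionary $2S(\mu;c)-1=a/P$ and $D^{-1}(c)=\mathbb{P}(|Z(0)|>c)/P=2\Phi(-c)/P$, which is at most $1$ for $\mu\ge 0$. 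With these in hand I first divide the full inequality through by the positive factor $2S(\mu;c)-1=a/P$, turning it into
\[
N_1(k)-\frac{v\sqrt{N_1(k)}}{2P}\;\ge\;\frac{w\sqrt P}{a}\,\Big(N_1(k)+D^{-1}(c)N_0(k)+\tfrac{vk^{1/2}}{2P}\Big)^{1/2}.
\]

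Next I would upper-bound the bracket. Writing $N_0(k)=k-N_1(k)$ and using $D^{-1}(c)=2\Phi(-c)/P$ gives the clean split
\[
N_1(k)+D^{-1}(c)N_0(k)+\tfrac{vk^{1/2}}{2P}=\tfrac{P-2\Phi(-c)}{P}N_1(k)+\tfrac{2\Phi(-c)}{P}N_0(k)+\tfrac{vk^{1/2}}{2P},
\]
after which I bound $N_0(k)\le k$ and apply $\sqrt{x+y+z}\le\sqrt x+\sqrt y+\sqrt z$. Substituting into the displayed inequality produces a quadratic inequality in $x:=\sqrt{N_1(k)}$ of the shape $x^2-B_1 x\ge Q$, where $B_1=\tfrac v{2P}+\tfrac wa\sqrt{P-2\Phi(-c)}$ collects the subtracted $\tfrac{v\sqrt{N_1}}{2P}$ together with the $\sqrt{N_1}$-proportional part of the bracket, and $Q=\tfrac wa\sqrt{2\Phi(-c)}\,\sqrt k+\tfrac{w\sqrt v}{a\sqrt2}\,k^{1/4}$ collects the $D^{-1}N_0$ and fluctuation pieces.

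I would then solve this quadratic: $x^2-B_1x\ge Q$ is implied by $\sqrt{N_1(k)}\ge B_1+\sqrt Q$, hence by $N_1(k)\ge B_1^2+2B_1\sqrt Q+Q$. The leading piece of $Q$ is $\tfrac{\sqrt2\,\sqrt{\Phi(-c)}}{a}\,w\sqrt k$, whose coefficient is strictly below the coefficient $A(\mu;c)=\tfrac53\tfrac{\sqrt{\Phi(-c)}}{a}$ of the $A$-term in Theorem~\ref{thm:power_online_imt}; the role of the hypothesis $k\ge T(\beta;c)$ is precisely to guarantee that the surplus $(\tfrac53-\sqrt2)\tfrac{\sqrt{\Phi(-c)}}{a}\,w\sqrt k$ is large enough to swallow the subdominant $k^{1/4}$ and $O(w^2)$ corrections coming from $2B_1\sqrt Q$ and $B_1^2$. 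What remains to be matched against the explicit correction $B(\mu;c)\,w^2k^{-1/2}$ is the genuinely $k^{-1/2}$-order residue, and the two arguments of the maximum defining $B(\mu;c)$, namely $\tfrac{10(P-2\Phi(-c))}{9a^2}$ and $\tfrac{25}{P^2}$, should arise as upper bounds for the two competing correction sources (the $N_1$-proportional bracket term versus the fluctuation/subtraction term), whichever dominates for the given $(\mu,c)$.

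The hard part will be the constant bookkeeping in these last two steps: one must verify that the threshold $T(\beta;c)$ is exactly calibrated so that, for every $k\ge T(\beta;c)$, the leading-order surplus dominates all lower-order terms, and that the constants $\tfrac53$, $\tfrac{10}9$ and $25$ are simultaneously large enough to close each required inequality (including keeping $N_1-\tfrac{v\sqrt{N_1}}{2P}$ nonnegative, so that the root-solving step is valid). The algebra is routine but delicate, since $v$ and $w$ grow like $\sqrt{\log\log k}$, and it is the comparison between a $w\sqrt k$ surplus and $w^2$-sized corrections that forces the particular form of $T(\beta;c)$.
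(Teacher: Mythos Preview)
Your overall strategy---fix a witnessing $k$, rearrange into a quadratic, and verify the bound---is right, but it diverges from the paper's argument in two places, one of which is a genuine gap.

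First, the structural difference: the paper does \emph{not} keep the subtracted term $v\sqrt{N_1}/(2P)$ and solve a quadratic in $\sqrt{N_1(k)}$. Instead it first uses the hypothesis $N_1(k)\ge(5v/P)^2$---which is exactly what the $25/P^2$ branch of $B(\mu;c)$ is designed to guarantee---to obtain $N_1-\tfrac{v\sqrt{N_1}}{2P}\ge 0.9\,N_1(k)$, then squares both sides and solves a quadratic directly in $N_1(k)$. This avoids your triple split $\sqrt{x+y+z}\le\sqrt x+\sqrt y+\sqrt z$ and the resulting cross term $2B_1\sqrt Q$ of order $w^{3/2}k^{1/4}$. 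In the paper's route the constant $5/3$ in $A(\mu;c)$ falls out exactly as $3/(2\cdot0.9)$ from the quadratic formula, and the role of $k\ge T(\beta;c)$ is only to ensure $k\ge\bigl(v/(2\Phi(-c))\bigr)^2$, so that inside the bracket one may merge $8\Phi(-c)k+\tfrac v2\sqrt k\le 9\Phi(-c)k$ \emph{before} any square-root step---not to absorb a $(5/3-\sqrt2)$-surplus as you conjecture.

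Second, and more seriously: you take the ``$k^{-1/2}$'' in the statement of Theorem~\ref{thm:power_online_imt} at face value and aim to match your residue against $B(\mu;c)\,w^2k^{-1/2}$. But your own computation gives $B_1^2\sim (P-2\Phi(-c))w^2/a^2$, which is of order $w^2$, not $w^2k^{-1/2}$; you cannot bound an order-$w^2$ term by something that shrinks with $k$. Inspecting the paper's proof shows that the derived sufficient condition is actually $N_1(k)\ge A(\mu;c)\,w\sqrt k+B(\mu;c)\,w^2$ with no negative power of $k$, so the ``$k^{-1/2}$'' in the theorem statement is evidently a typo. Your plan should target $B(\mu;c)\,w^2$, after which the two branches of $B(\mu;c)$ cover, respectively, the $(1-D^{-1})/\widetilde S$ coefficient from the quadratic and the assumption $N_1\ge(5v/P)^2$ used for the $0.9N_1$ step. (Minor slip: in your ``clean split'' display the second term on the right should be $\tfrac{2\Phi(-c)}{P}k$, not $\tfrac{2\Phi(-c)}{P}N_0(k)$, since you substituted $N_0=k-N_1$.)
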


\begin{proof}
First, the condition in Theorem~\ref{thm:power_online_imtfull} can be written as a quadratic inequality on $N_1(k)$,
\begin{align*}
    \exists k \in \mathbb{N}:& 
    (2S(\mu, c) - 1)^2[0.9 N_1(k)]^2 {}\\
    \geq~& \frac{\left(C_k^{\alpha} + C_k^{\beta/3}\right)^2}{\mathbb{P}( |Z(\mu)| > c)}\left((1 - D^{-1}(c)) N_1(k) + D^{-1}(c)k + \frac{C_k^{\beta/3} k^{1/2}}{2 \mathbb{P}( |Z(\mu)| > c)}\right),
\end{align*}
by noting that $ N_1(k) - \frac{C_k^{\beta/3} \sqrt{N_1(k)}}{2\mathbb{P}( |Z(\mu)| > c)} \geq 0.9 N_1(k)$ since the condition in Theorem~\ref{thm:power_online_imt} guarantees ${N_1(k) \geq \left(\frac{C_k^{\beta/3}}{0.2 \mathbb{P}( |Z(\mu)| > c)}\right)^2}$ (a claim we visit later). 

Solve the quadratic inequality for $N_1(k)$ to get a sufficient condition of the above one:
\begin{align*}
    2N_1(k) \geq~& \frac{\left(C_k^{\alpha} + C_k^{\beta/3}\right)^2}{\widetilde{S}(\mu, c)}(1 - D^{-1}(c)) {}\\
    &+ \left\{\frac{\left(C_k^{\alpha} + C_k^{\beta/3}\right)^4}{\widetilde{S}^2(\mu, c)}\left(1 - D^{-1}(c)\right)^2 + 4 \frac{\left(C_k^{\alpha} + C_k^{\beta/3}\right)^2}{\widetilde{S}(\mu, c)} D^{-1}(c) k \right.{}\\
    &+ \left. \frac{\left(C_k^{\alpha} + C_k^{\beta/3}\right)^2}{\widetilde{S}(\mu, c)}\frac{C_k^{\beta/3}}{2 \mathbb{P}( |Z(\mu)| > c)}k^{1/2}\right\}^{1/2},
\end{align*}
where $\widetilde{S}(\mu, c) = [0.9(2S(\mu, c) - 1)]^2\mathbb{P}( |Z(\mu)| > c)$ and $D^{-1}(c) = \frac{2\Phi(-c)}{\Phi(\mu-c) + \Phi(-\mu-c)}$. Note that under the square root, the last two terms involving $k$ is upper bounded by
\begin{align*}
    &4 \frac{\left(C_k^{\alpha} + C_k^{\beta/3}\right)^2}{\widetilde{S}(\mu, c)} D^{-1}(c) k + \frac{\left(C_k^{\alpha} + C_k^{\beta/3}\right)^2}{\widetilde{S}(\mu, c)}\frac{ C_k^{\beta/3}}{2 \mathbb{P}( |Z(\mu)| > c)}k^{1/2} {}\\
    =~& \frac{\left(C_k^{\alpha} + C_k^{\beta/3}\right)^2}{\widetilde{S}(\mu, c)(\Phi(\mu-c) + \Phi(-\mu-c))} \left(8\Phi(-c) k + \frac{C_k^{\beta/3}}{2}k^{1/2}\right) {}\\
    \leq~& \frac{\left(C_k^{\alpha} + C_k^{\beta/3}\right)^2}{\widetilde{S}(\mu, c)(\Phi(\mu-c) + \Phi(-\mu-c))} 9\Phi(-c) k = \frac{9\left(C_k^{\alpha} + C_k^{\beta/3}\right)^2D^{-1}(c)}{2\widetilde{S}(\mu, c)}  k,
\end{align*}
when $k \geq \left(\frac{C_k^{\beta/3}}{2\Phi(-c)}\right)^2$. By the fact that $\sqrt{a + b} \leq \sqrt{a} + \sqrt{b}$ for $a,b > 0$, an upper bound on the right-hand side is
\[
2\frac{1 - D^{-1}(c)}{\widetilde{S}(\mu, c)}\left(C_k^{\alpha} + C_k^{\beta/3}\right)^2 + 3(C_k^{\alpha} + C_k^{\beta/3})\frac{\sqrt{D^{-1}(c)/2}}{\widetilde{S}^{1/2}(\mu, c)}k^{1/2}.
\]
Thus, the above condition on $N_1(k)$ is implied by
\begin{align*}
    \exists k \geq \left(\frac{C_k^{\beta/3}}{2\Phi(-c)}\right)^2: N_1(k) \geq \widetilde B(\mu; c)\left(C_k^{\alpha} + C_k^{\beta/3}\right)^2 + A(\mu;c)(C_k^{\alpha} + C_k^{\beta/3})k^{1/2},
\end{align*}
where $A(\mu;c) = 3/2\frac{\sqrt{D^{-1}(c)/2}}{\widetilde{S}^{1/2}(\mu, c)}$ and $\widetilde B(\mu; c) = \frac{1 - D^{-1}(c)}{\widetilde{S}(\mu, c)}$.

Finally we review the assumptions made throughout the proof: (a) we assume $N_1(k) \geq \left(\frac{C_k^{\beta/3}}{0.2 \mathbb{P}( |Z(\mu)| > c)}\right)^2$, which is implied if $\widetilde B(\mu, c)$ is adjusted to $B(\mu, c)$ as defined in Theorem~\ref{thm:power_online_imt}; and (b) we assume $k \geq \left(\frac{C_k^{\beta/3}}{2\Phi(-c)}\right)^2$, which holds when ${k \geq T(\beta; c)}$; adjusting for these assumptions results in the condition in Theorem~\ref{thm:power_online_imt}.
\end{proof}

\section{Choices for the uniform bounds in the \mst} \label{apd:stou_bounds}

The martingale Stouffer test has the general form:
\begin{align*}
    \exists k \in \mathbb{N}: \sum_{i=1}^k \Phi^{-1}(1-p_i) \geq u_\alpha(k),
\end{align*}
where $u_\alpha(k)$ is the uniform bound for a martingale with standard Gaussian increment. We present four bounds from the work of Howard et al.~\cite{howard2020time, howard2020time1},
\begin{enumerate}
    \item a linear bound
    \begin{align}
    \label{eq:bound_lin}
        u_\alpha(k) = \sqrt{\frac{-\log\alpha}{2m}}k + \sqrt{\frac{-m\log\alpha}{2} },
    \end{align}
    where $m \in \mathbb{R}_+$ is a tuning parameter that determines the time at which the bound is tightest: a larger $m$ results in a lower slope but a larger offset, making the bound loose early on. 
    
    \item a curved bound from polynomial stitching method
    \begin{align}
        u_\alpha(k) = 1.7\sqrt{k\left(\log\log(2 k) + 0.72 \log \frac{5.2}{\alpha}\right)}.
    \end{align}
    
    \item a curved bound from discrete mixture method
    \begin{align}
        u_\alpha(k) = \inf\left\{ s \in \mathcal{R}: \sum_{i = 0}^\infty \omega_i\exp\{\lambda_is - \psi(\lambda_i)k\}\geq 1/\alpha \right\},
    \end{align}
    where $\lambda_i = 1.1^{-(i + 1/2)}\lambda_{\max}$ and $ \omega_i = 1.1^{-(i + 1)} \lambda_{\max} f(1.05\lambda_i)/10$, in which $\lambda_{\max} = \sqrt{2\log \alpha^{-1}}$ and ${f(x) = 0.4\frac{\textbf{1}_{0\leq x \leq \lambda_{\max}}}{x \log^{1.4}(e\lambda_{\max}/x)}}$.
    
    \item a curved bound from inverted stitching method (for finite time)
    \begin{align}
    \label{eq:bound_finite}
        u_\alpha(k) = 2.42\sqrt{k \log\log (ek) + 4.7}, \quad k =  1, 2, \ldots, 10^4,
    \end{align}
   where the time limit $10^4$ is chosen as the number of hypotheses in the following simulation.
\end{enumerate}
We use simulations to explore two choices in the \mst: (1)~the choice of parameter~$m$ in the linear bound~\eqref{eq:bound_lin}; and (2)~the choice among the above four types of bound.

\paragraph{Choice of the parameter $m$ in the linear bound}
A good choice of parameter $m$ should make the bound tight at where most non-nulls appear; thus, it depends on how the non-nulls distribute. A smaller $m$ results in a faster slope but a tighter bound at front, so it is desired when the non-nulls are gathered at front; and vice versa. 

We seek for a robust value of $m$ such that the resulting test has relatively high power under different non-null sparsity. The following constructed simulation is used for exploring bounds in both the \mst and the martingale Fisher test (introduced in Appendix~\ref{apd:fisher}). 

\begin{setting}
\label{set:sim_bound}
Consider the hypothesis of testing if a Gaussian has zero mean as in Setting~\ref{set:simple} in the main paper. In total $n = 10^4$ samples are simulated, with $100$ from the non-null distribution $N(1.5, 1)$ and the rest from the null $N(0,1)$. The non-null sparsity varies by restricting the range where the non-nulls randomly distribute. The non-null range is set as $H_1$ to $H_l$ and we test values $l = 100, 10^3, 2\times10^3, \ldots, 10^4$. We define the non-null sparisty as $\frac{l}{n}$ and a bigger value indicates a more sparse non-null distribution. 
\end{setting}

\begin{figure}[h] 
    \centering
    \begin{subfigure}[t]{0.45\textwidth}
        \centering
        \includegraphics[width=0.8\linewidth]{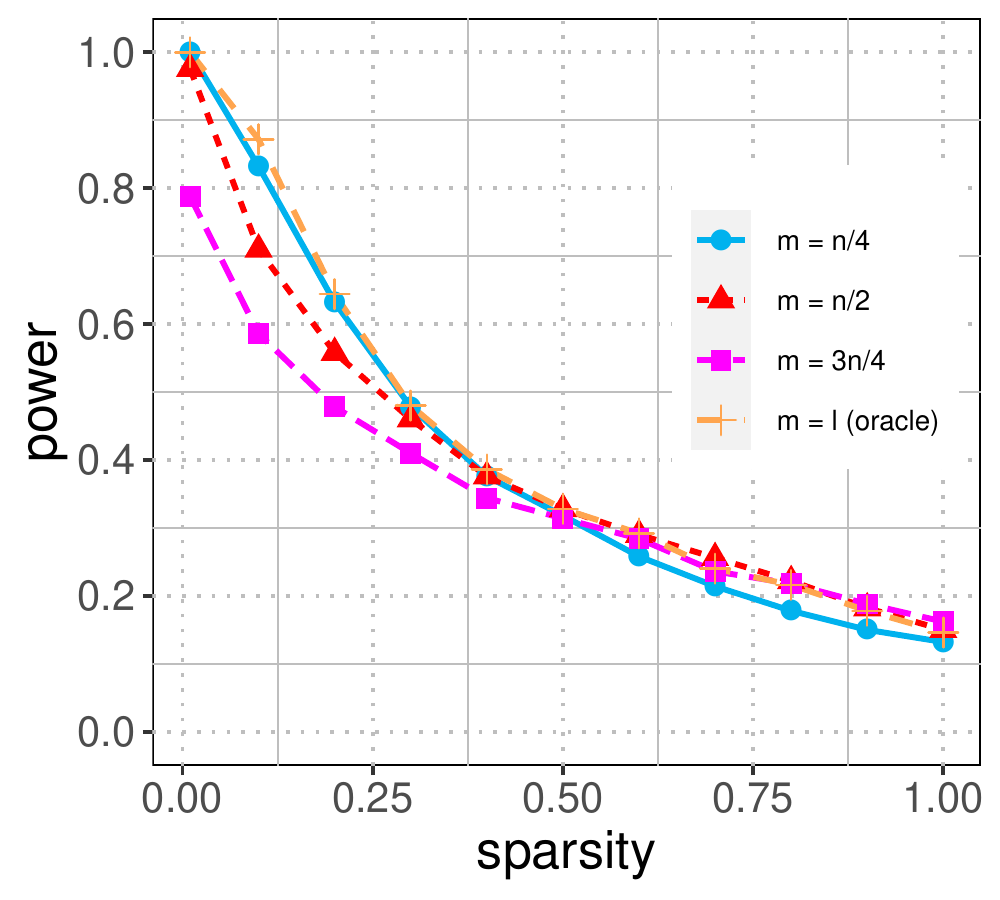}
        \caption{Power of the \mst using the linear bound with different choices of parameter~$m$.}
        \label{fig:power_st_fix}
    \end{subfigure}
    \hfill
    \begin{subfigure}[t]{0.45\textwidth}
        \centering
        \includegraphics[width=0.8\linewidth]{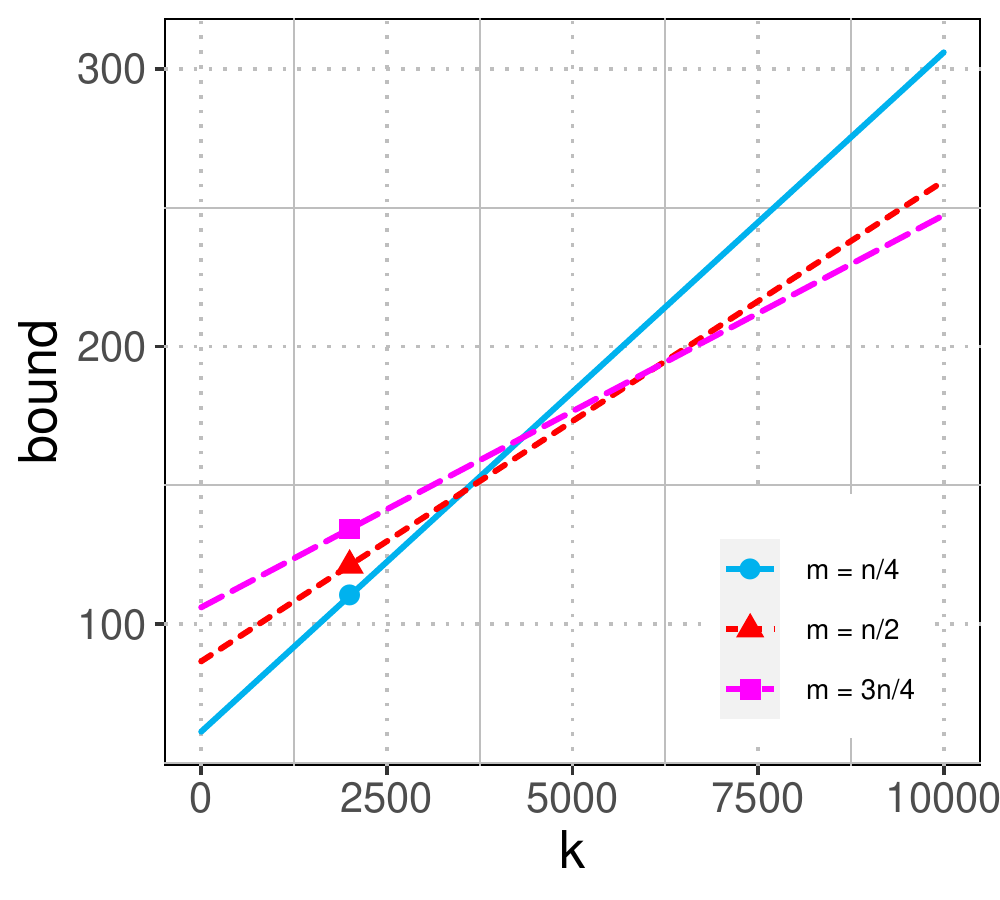}
        \caption{Plot of the linear bound with different choices of parameter~$m$.}
        \label{fig:curve_st_fix}
    \end{subfigure}
    \caption{Testing \mst using linear bound~\eqref{eq:bound_lin} with different choices of parameter $m$ across varying non-null sparsity. The choice $m=n/4$ leads to the highest power.}
    \label{fig:mst_linear}
\end{figure}

We compare three choices of $m = n/4, n/2, 3n/4$, with an oracle benchmark of $m = l$ (whose corresponding bound is the tightest right after all the non-nulls appear). The choice of $m=n/4$ leads to the highest power, which is also close to the oracle benchmark (see Figure~\ref{fig:power_st_fix}).

%In both the \mst (Figure~\ref{fig:power_st_fix}) and the martingale Fisher test (Figure~\ref{fig:power_fish_fix}), the choice of $m=n/4$ leads to the highest power, which is also close to the oracle benchmark.

\paragraph{Choice of the uniform bound}
The four bounds presented above can be generally classified as two types: linear and curved. Curved bounds have a slower increasing rate $O(\sqrt{k\log\log(k)})$ than the linear bound, indicating a tighter bound for large enough $k$, but they are usually looser for small $k$ (Figure~\ref{fig:bd_gua}). 

\begin{figure}[h] 
    \centering
    \begin{subfigure}[t]{0.45\textwidth}
        \centering
        \includegraphics[width=0.8\linewidth]{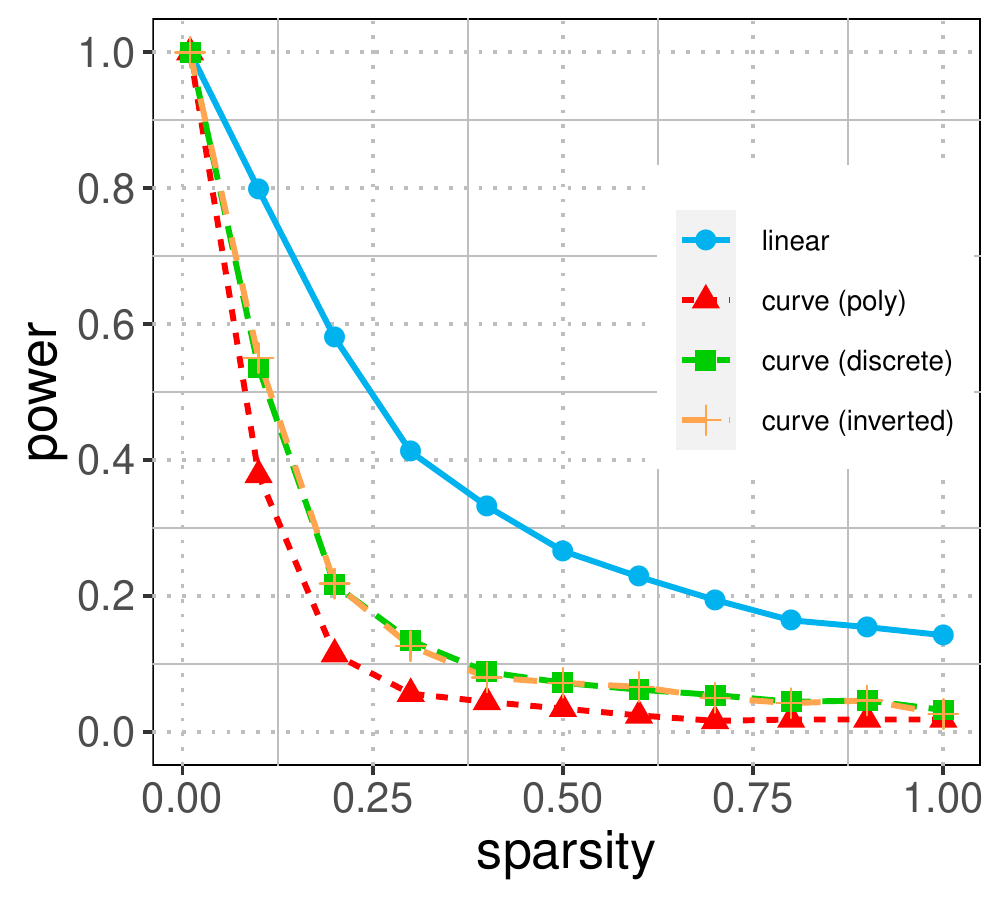}
        \caption{Power of the \mst with varying non-null sparsity.}
        \label{fig:power_gua_s}
    \end{subfigure}
    \hfill
    \begin{subfigure}[t]{0.45\textwidth}
        \centering
        \includegraphics[width=0.8\linewidth]{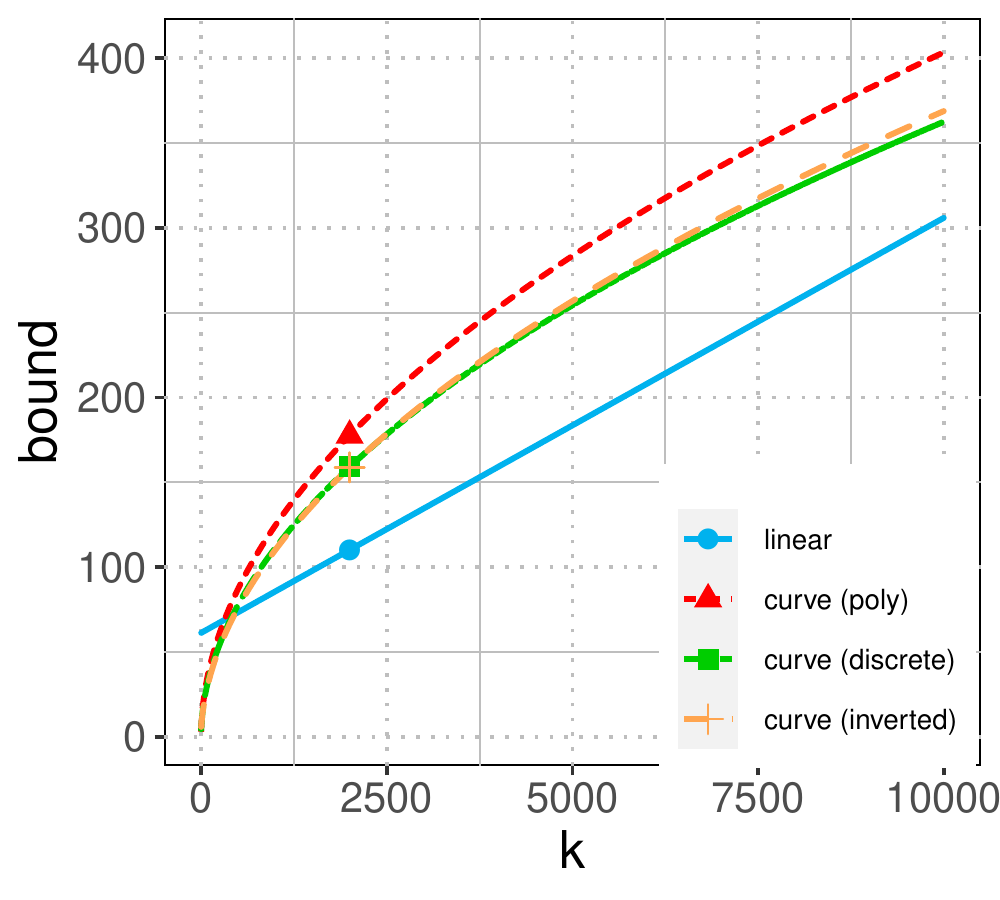}
        \caption{Plot of four bounds. The linear bound is much tighter than the curved bounds for most $k \leq 10^4$.}
        \label{fig:bd_gua}
    \end{subfigure}
    
    \caption{Comparison of the aforementioned four bounds~\eqref{eq:bound_lin}-\eqref{eq:bound_finite} for the \mst.}
\end{figure}

Under the batch setting where the number of hypotheses $n$ is finite, we use the simulation setting~\ref{set:sim_bound}, and the linear bound~\eqref{eq:bound_lin} (with $m = n/4$) results in the highest power (Figure~\ref{fig:power_gua_s}). Similar to tuning the parameter $m$ in the linear bound, we explored to tune the implicit parameters in the curved bound, and yet the linear bound still has the highest power. However, under the online setting where new hypotheses keep arriving, the tests with curved bounds are expected to need less time (number of hypotheses) on average to reach rejection.

\section{Martingale Fisher test} \label{apd:fisher}
The batch test by Fisher~\cite{fisher1992statistical} calculates $S_n = -2\sum_{i=1}^n \log p_i$. Since the distribution of $S_n$ under the global null is $\chi^2_{2n}$ (chi-square with $2n$ degree of freedom), the batch test rejects when $S_n$ is bigger than the $1 - \alpha$ quantile for $\chi^2_{2n}$. To design the martingale test, simply observe that $\{S_k\}_{k\in \mathcal{I}}$ is a martingale whose increments $f(pi) = -2 \log p_i$ are $\chi^2_2$ under the global null (after centering as $S_k - 2k$). Similar to the \mst, there are several types of uniform boundaries $u_\alpha(k)$ for chi-square increment martingales from the work of Howard et al.~\cite{howard2020time, howard2020time1}. We present two types: a sub-exponential (linear) boundary, and a sub-Gamma (curved) boundary. The general form of the martingale Fisher test rejects the global null if
\begin{align}
    \exists k \in \mathbb{N}: -2\sum_{i=1}^k \log p_i - 2k \geq u_\alpha(k),
\end{align}
where examples of $u_\alpha(k)$ include
\begin{enumerate}
    \item a sub-exponential linear boundary
    \begin{align} \label{eq:bound_exp}
        u_\alpha(k) = \left( \left(\frac{1.41m}{x_{m,\alpha}} + 2\right)\log\left(1+\frac{1.41x_{m,\alpha}}{m}\right) - 2\right) (k - m) + 2.82x_{m,\alpha},
    \end{align}
    where $x_{m,\alpha} = \min\left\{x:\ \exp\left\{-0.71x + \frac{m}{2} \log (1 + \frac{1.41x}{m})\right\} \leq \alpha\right\}$; and
    
    \item a sub-Gamma curved boundary
    \begin{align}
    \label{eq:bound_gam}
        u_\alpha(k) =~& 4.07\sqrt{k\left(\log\log(2 k) + 0.72 \log \frac{5.2}{\alpha}\right)}{}\\
        &+ 9.66\left(\log\log(2 k) + 0.72\log \frac{5.2}{\alpha}\right). \nonumber
    \end{align}
\end{enumerate}
The linear bound contains a parameter $m$ with the same interpretation as $m$ in the linear bound~\eqref{test:mst_lin} for \mst (in the main paper): it determines the time at which the bound is tightest --- a larger m results in a lower slope but a larger offset, making the bound loose early on. Based on the simulation results in Figure~\ref{fig:power_fish_fix}, we suggest a default value of $m = n/4$ if the number of hypotheses $n$ is finite, but it should be chosen based on the time by which we expect to have encountered most non-nulls (if any).

\begin{figure}[h!] 
    \centering
    \begin{subfigure}[t]{0.45\textwidth}
        \centering
        \includegraphics[width=0.8\linewidth]{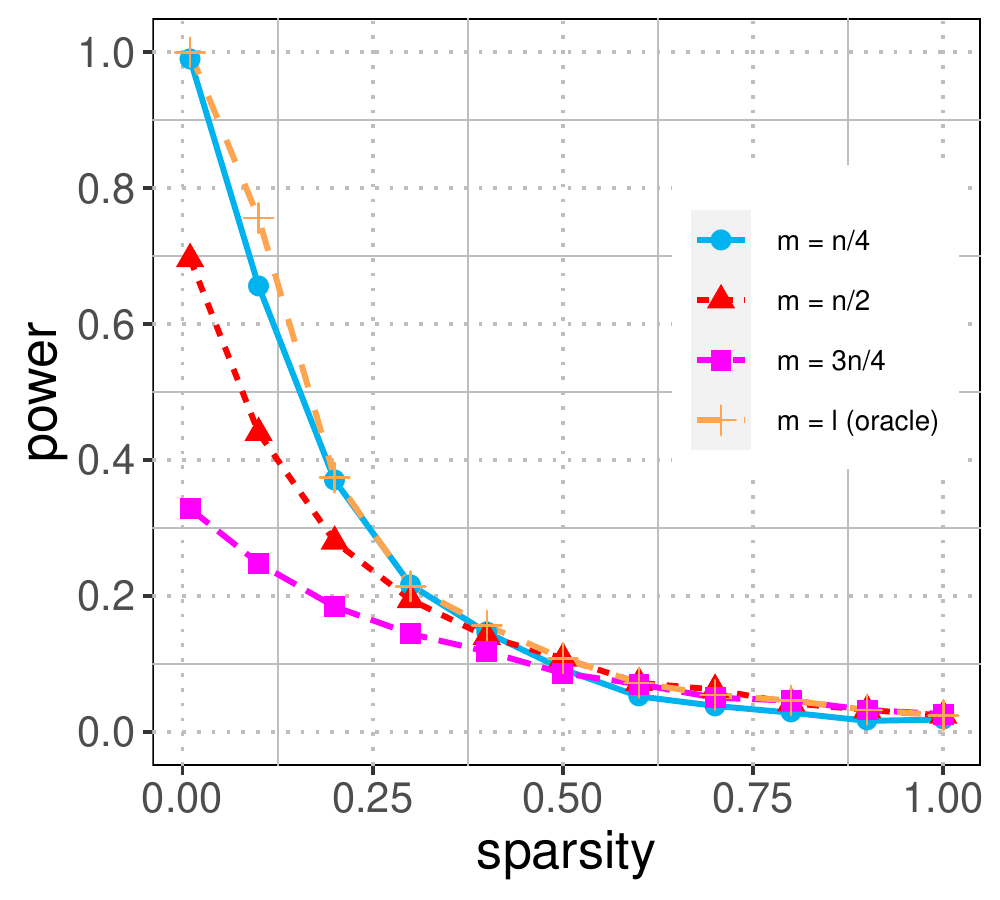}
        \caption{Power of martingale Fisher test using the linear bound with different choices of parameter~$m$.}
        \label{fig:power_fish_fix}
    \end{subfigure}
    \hfill
    \begin{subfigure}[t]{0.45\textwidth}
        \centering
        \includegraphics[width=0.8\linewidth]{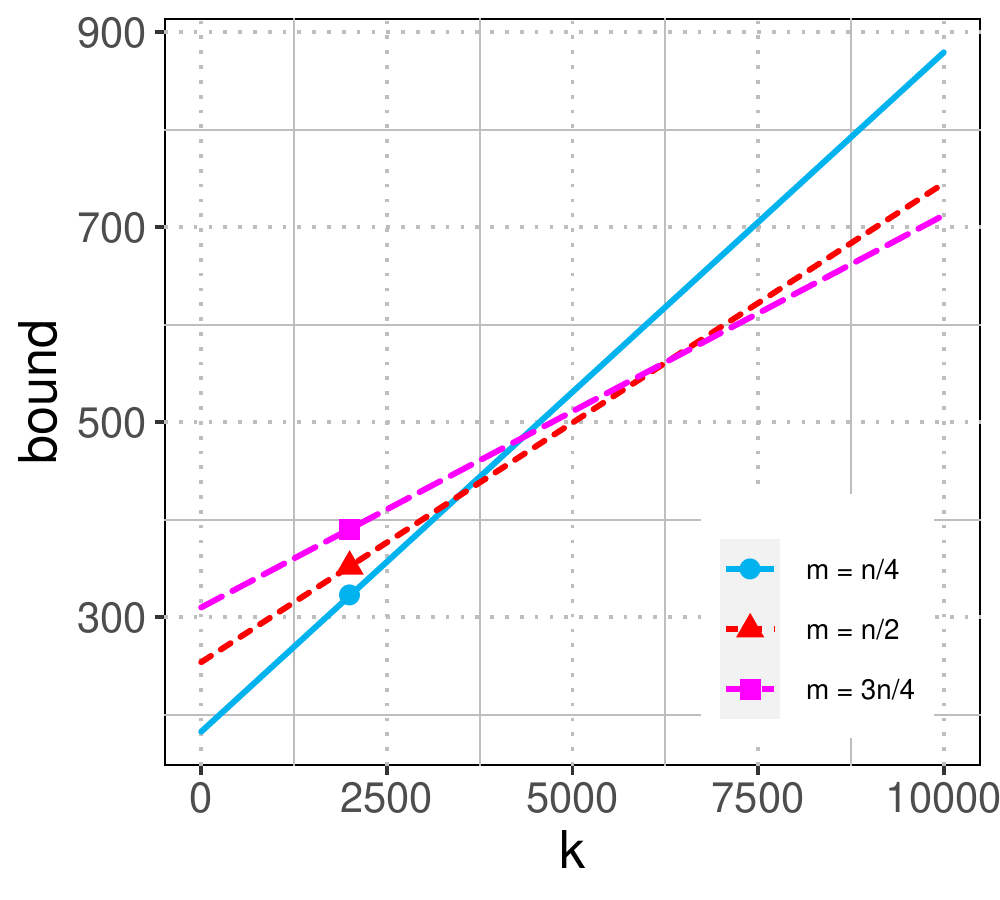}
        \caption{Plot of the linear bound with different choices of parameter~$m$.}
        \label{fig:curve_fish_fix}
    \end{subfigure}
    
    \caption{Testing the martingale Fisher test using the linear bound~\eqref{eq:bound_exp} with different choices of parameter $m$ across varying non-null sparsity. The choice $m=n/4$ leads to the highest power.}
    \label{fig:mft_linear}
\end{figure}

\begin{figure}[h!]
    \centering
    \begin{subfigure}[t]{0.45\textwidth}
        \centering
        \includegraphics[width=0.8\linewidth]{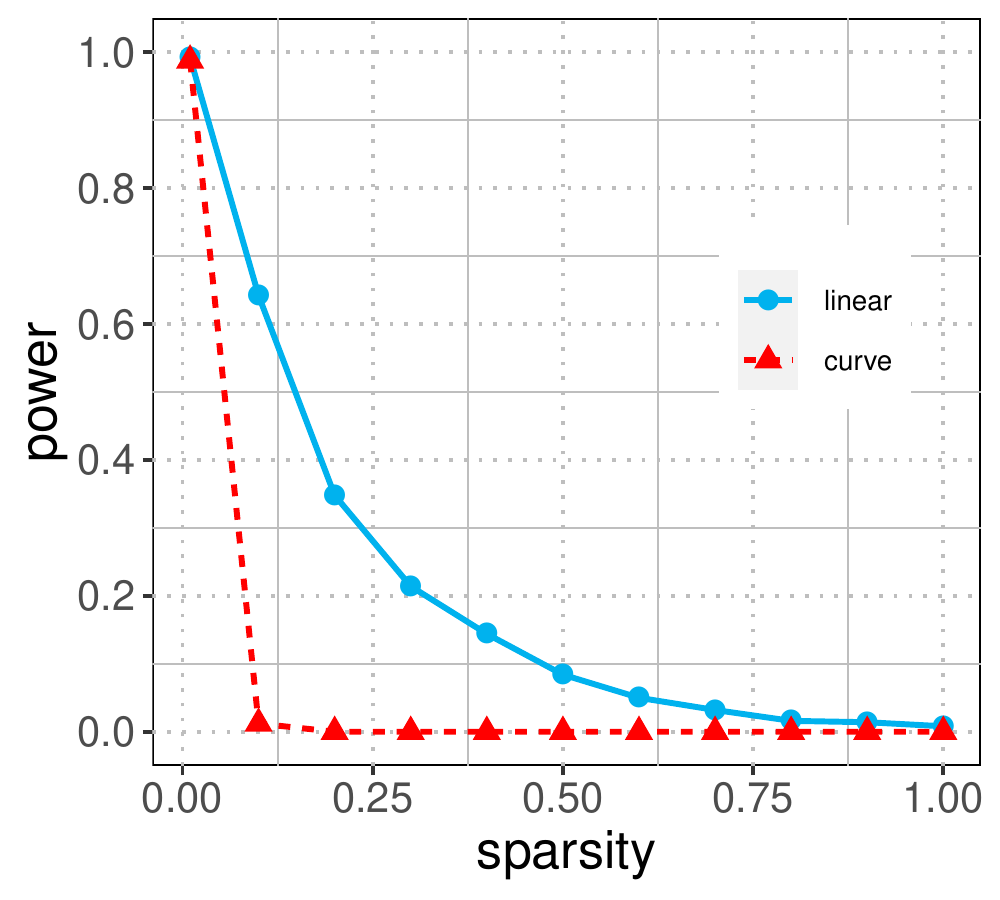}
        \caption{Power of the martingale Fisher test with varying sparsity score.}
        \label{fig:power_mft}
    \end{subfigure}
    \hfill
    \iffalse
    \begin{subfigure}[t]{0.45\textwidth}
        \centering
        \includegraphics[width=0.8\linewidth]{}
        \caption{type-I error with varying sparsity score.}
        \label{fig:error}
    \end{subfigure}
    \fi
    \begin{subfigure}[t]{0.4\textwidth}
        \centering
        \includegraphics[width=1\linewidth]{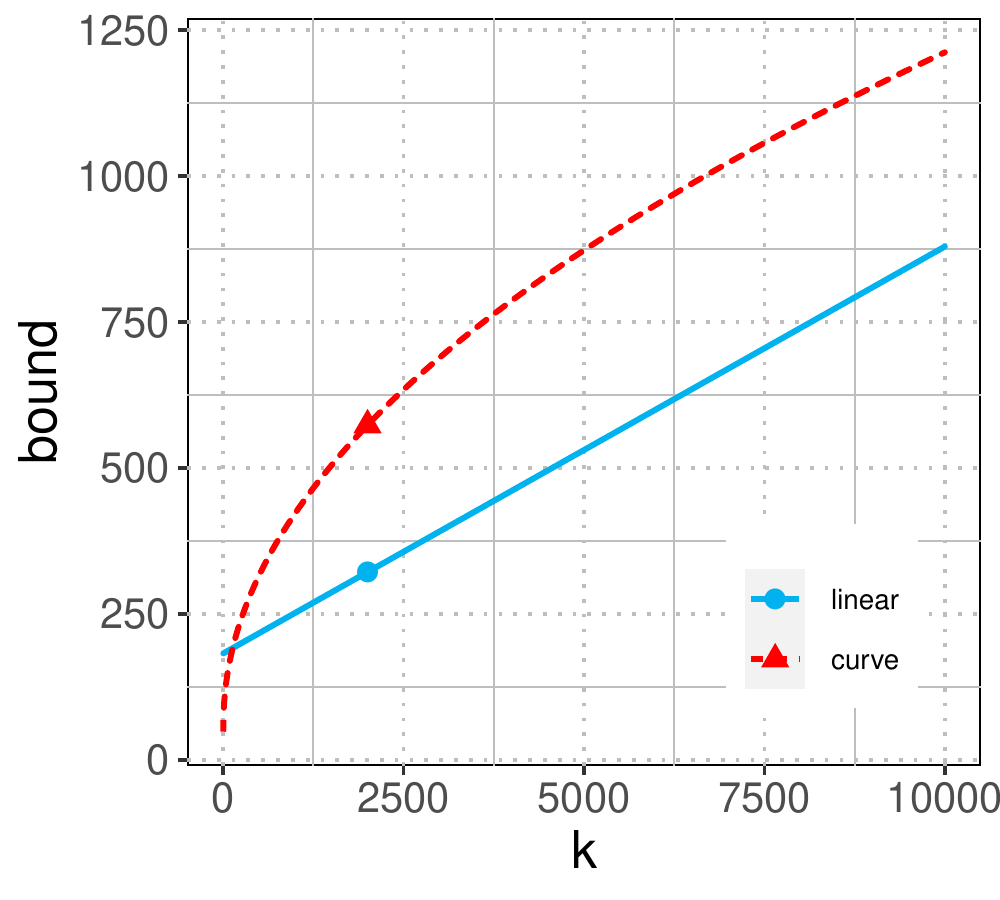}
        \caption{Plot of two bounds. The linear bound ($m = n/4)$ is tighter for most $k \leq 10^4$.}
        \label{fig:curve_mft}
    \end{subfigure}
    \caption{
    Comparison of the aforementioned two bounds~\eqref{eq:bound_exp} and~\eqref{eq:bound_gam} for the martingale Fisher test.
    %Power and type-I error of the martingale Fisher test with varying non-null sparsity, using the aforementioned two bounds~\eqref{eq:bound_exp} and~\eqref{eq:bound_gam}.
    }
    \label{fig:mft_curve}
\end{figure}

The power of the martingale Fisher test using linear and curved bounds are compared under different non-null sparsity (using simulation setting~\ref{set:sim_bound}). 
%The test using chi-square characterized linear bound has the highest power, closely followed by the one using sub-exponential characterized linear bound (Figure~\ref{fig:power}). It indicates that using the sub-exponential characterized bound when the exact distribution for the increment is unknown can be almost as powerful as the oracle case of knowing the distribution. 
The curve bound loses power quickly when non-null is rather sparse (see Figure~\ref{fig:power_mft}), consistent with the comparison between linear and curved bounds for the \mst in Appendix~\ref{apd:stou_bounds}.

\section{Martingale chi-squared test} \label{apd:chi-square}
The chi-squared test calculates $S_n = \sum_{i=1}^n \left[\Phi^{-1}(1 - p_i)\right]^2$. Since the distribution of $S_n$ under the global null is $\chi^2_{n}$ (a chi-square with $n$ degrees of freedom), the batch test rejects when $S_n$ is bigger than the $1 - \alpha$ quantile for $\chi^2_{n}$. To design the martingale test, simply observe that $\{S_k-k\}_{k\in \mathcal{I}}$ is a martingale, whose increment $\left[\Phi^{-1}(1 - p_i)\right]^2-1$ is distributed as~$\chi^2_1$ (minus one) under the global null. Similar to the \mst and martingale Fisher test  (in Appendix~\ref{apd:stou_bounds} and~\ref{apd:fisher}), there are several linear and curved boundaries $u_\alpha(k)$ for chi-square increment martingales from the work of Howard et al.~\cite{howard2020time, howard2020time1}. We present two types: a sub-exponential (linear) boundary, and a sub-Gamma (curved) boundary. The general form of the martingale chi-square test rejects the global null if
\begin{align}
    \exists k \in \mathbb{N}: \sum_{i=1}^k \left[\Phi^{-1}(1 - p_i)\right]^2 - k \geq u_\alpha(k),
\end{align}
where examples of $u_\alpha(k)$ include
\begin{enumerate}
    \item a sub-exponential linear boundary
    \begin{align} \label{eq:bound_exp_chi}
        u_\alpha(k) = \left( \left(\frac{m}{2x_{m,\alpha}} + 1\right)\log\left(1+\frac{2x_{m,\alpha}}{m}\right) - 1\right) (k - m) + 2x_{m,\alpha},
    \end{align}
    where $x_{m,\alpha} = \min\left\{x:\ \exp\left\{-\frac{x}{2} + \frac{m}{4} \log (1 + \frac{2x}{m})\right\} \leq \alpha\right\}$; and
    
    \item a sub-Gamma curved boundary
    \begin{align}
    \label{eq:bound_gam_chi}
        u_\alpha(k) =~& 3.42\sqrt{k\left(\log\log(2 k) + 0.72 \log \frac{5.2}{\alpha}\right)}{}\\
        &+ 9.66\left(\log\log(2 k) + 0.72\log \frac{5.2}{\alpha}\right). \nonumber
    \end{align}
\end{enumerate}
We expect the discussions on parameter~$m$ in the linear bound and on the comparison between the linear and curved bounds to be similar to that in the martingale Stouffer test (Appendix~\ref{apd:stou_bounds}) and the martingale Fisher test (Appendix~\ref{apd:fisher}). If testing the martingale chi-squared test by the same numerical experiment in Setting~\ref{set:sim_bound}, $m = n/4$ should lead to high power for various degrees of sparsity; and the linear bound should be tighter than the curved bound for most time $k \leq 10^4$, and hence lead to higher power when non-null is rather sparse.
\color{black}

\section{Bayesian modeling for the posterior probability of being non-null} \label{apd:em}

\paragraph{Modeling the posterior probabilities of being non-null} 
Define the $Z$-score for hypothesis $H_i$ be $Z_i = \Phi^{-1}(1 - p_i)$. Instead of modeling the \pvalues, we choose to model the $Z$-scores since under setting~\ref{set:simple} in the main paper they are distributed as a Gaussian either under the null or the alternative:
\[
H_0: Z_i \sim N(0,1) \text{ versus } H_1: Z_i \sim N(\mu,1),
\]
where $\mu$ is the mean value for all the non-nulls. We model $Z_i$ by a mixture of Guassians:
\[
Z_i \sim (1 - q_i) N(0,1) + q_i N(\mu,1), \text{ with } q_i \sim \mathrm{Bernoulli}(\pi_i),
\]
where $q_i$ is the indicator of whether the hypothesis $H_i$ is a true non-null.

The non-null structures are imposed by the constraints on non-null probability $\pi_i$. In our examples, the blocked non-null structure is encoded by fitting non-null probabilities $\pi_i$ as a smooth function of the hypothesis position (covariates) $x_i$, specifically as a logistic regression model on a spline basis:
\begin{align} \label{eq:p_model_block}
    \pi_i = \pi_\beta(x_i) = \frac{1}{1 + \exp(- \beta \phi(x_i))},
\end{align}
where $\phi(x_i)$ is a spline basis. The hierarchical structure is imposed by a partial ordering constraint on $\pi_i$:
\begin{align} \label{eq:p_model_tree}
    \pi_i \geq \pi_j, \quad \text{if $i$ is the parent of $j$},
\end{align}
when the probability of being non-null decreases down the tree ($\pi_i \geq \pi_j$ if the probability increases).

\paragraph{An EM framework for the posterior probabilities of being non-null} 
An EM algorithm is used to train the model because masked \pvalues are modeled. Specifically, we treat \pvalues as the hidden variables, and the masked \pvalues $g(p)$ as observed. In terms of the Z-score $Z_i$, $Z_i$ is a hidden variable and the observed variable $\widetilde{Z_i}$ is its absolute value $|Z_i|$ (if $p_i$ is masked).

Define a sequence of hypothetical labels $w_i = \one(Z_i = \widetilde{Z_i})$, and the likelihood of data $(\widetilde{Z_i}, w_i, q_i)$ is
\begin{align*}
    l(\widetilde{Z_i}, w_i, q_i) =~& w_iq_i\log(\pi_i\phi(\widetilde{Z_i} - \mu)) + w_i(1-q_i)\log((1 - \pi_i)\phi(\widetilde{Z_i})) {}\\
    +~& (1 - w_i)q_i\log(\pi_i\phi(-\widetilde{Z_i} - \mu)){}\\
    +~& (1 - w_i)(1-q_i)\log((1 - \pi_i)\phi(-\widetilde{Z_i})),
\end{align*}
where $\phi(\cdot)$ is the PDF of a standard Gaussian. 

The E-step updates $w_i,q_i$. Notice that $w_i$ and $q_i$ are not independent, so we update the joint distribution of $(w_i, q_i)$, namely parameters
\[
w_iq_i =: a_i, \quad w_i(1 - q_i) =: b_i, \quad (1 - w_i)q_i =: c_i, \quad (1 - w_i)(1 - q_i) =: d_i,
\]
where $a_i + b_i + c_i + d_i = 1$. For a simple expression of the updates, we define
\begin{align*}
    L\left(\widetilde{Z_i}, \mu, \pi_i\right) :=~& \pi_i\phi(\widetilde{Z_i} - \mu)) + (1 - \pi_i)\phi(\widetilde{Z_i}) {}\\
    &~+ \pi_i\phi(-\widetilde{Z_i} - \mu) + (1 - \pi_i)\phi(-\widetilde{Z_i}).
\end{align*}
For hypothesis $i$ whose $p$-value is masked, the updates are 
%In the following, we denote $a_i := w_iq_i, b_i := w_i(1-q_i), c_i := (1 - w_i)q_i, d_i := (1 - w_i)(1-q_i)$.
%Denote $\pi_i\phi(\widetilde{Z_i} - \mu)) + (1 - \pi_i)\phi(\widetilde{Z_i}) + \pi_i\phi(-\widetilde{Z_i} - \mu) + (1 - \pi_i)\phi(-\widetilde{Z_i})$ as $L\left(\widetilde{Z_i}, \mu, \pi_i\right)$. The E-step updates $a_i, b_i, c_i, d_i$ as follows.
\begin{align*}
    a_{i,\text{new}} =~& \mathbb{E}[w_iq_i \mid \widetilde{Z_i}] = \frac{\pi_i\phi(\widetilde{Z_i} - \mu)}{L\left(\widetilde{Z_i}, \mu, \pi_i\right)};{}\\
    b_{i,\text{new}} =~& \mathbb{E}[w_i(1 - q_i) \mid \widetilde{Z_i}] = \frac{(1 - \pi_i)\phi(\widetilde{Z_i})}{L\left(\widetilde{Z_i}, \mu, \pi_i\right)};{}\\
    c_{i,\text{new}} =~& \mathbb{E}[(1 - w_i)q_i \mid \widetilde{Z_i}] = \frac{\pi_i\phi(-\widetilde{Z_i} - \mu)}{L\left(\widetilde{Z_i}, \mu, \pi_i\right)};{}\\
    d_{i,\text{new}} =~& \mathbb{E}[(1 - w_i)(1 - q_i) \mid \widetilde{Z_i}] = \frac{(1 - \pi_i)\phi(-\widetilde{Z_i})}{L\left(\widetilde{Z_i}, \mu, \pi_i\right)}.
\end{align*}
If the $p$-value is unmasked for hypothesis $i$, we have $w_i = 1$ and the updates are
\begin{align*}
    a_{i, \text{new}} =~& \left(1 + \frac{(1-\pi_i)\phi\left(\widetilde{Z_i}\right)}{ \pi_i\phi\left(\widetilde{Z_i}-\mu\right)}\right)^{-1};{}\\
    b_{i, \text{new}} =~& 1 - a_{i, \text{new}}; \quad c_{i, \text{new}} = 0; \quad d_{i, \text{new}} = 0.
\end{align*}

In the M-step, parameters $\mu$ and $\pi_i$ are updated. The update for $\mu$ is
\begin{align*}
    \mu_{\text{new}} = \argmin_\mu  \sum_i l(\widetilde{Z_i}) = \frac{\sum (a_i - c_i) \widetilde{Z_i}}{\sum (a_i + c_i)}.
\end{align*}
The update for $\pi_i$ depends on the non-null structure, which encodes different constraints on $\pi_i$. Under the block non-null structure, updating $\pi_i$ corresponds to updating $\beta$ in model~\eqref{eq:p_model_block} for $\pi_\beta(x_i)$. The update is equivalent to fitting $a_i + c_i$ by a logistic regression:
\begin{align*}
    (\beta_{\text{new}}) =~& \argmax_\beta \sum_i (a_i + c_i)\log \pi_\beta(x_i) + (b_i + d_i) \log(1 - \pi_\beta(x_i)){}\\
    =~& \argmax_\beta \sum_i (a_i + c_i)\log \pi_\beta(x_i) + (1 - a_i - c_i) \log(1 - \pi_\beta(x_i)),
\end{align*}
and $\pi_{i,\text{new}} = \pi_{\beta_{\text{new}}}(x_i)$. Under the hierarchical structure, updating $\pi_i$ is equivalent to fitting a partial isotonic regression on $a_i + c_i$ (Barlow~\cite{barlow1972isotonic}, Theorem~3.1 and Robertson~\cite{robertson1988order}, Theorem 1.5.1):
\begin{align*}
    (\pi_{i,\text{new}}) =~& \argmax_{\text{partial ordered} \{\pi_i\}} \sum_i (a_i + c_i)\log \pi_i + (1 - a_i - c_i) \log(1 - \pi_i){}\\
    =~& \argmin_{\text{partial ordered} \{\pi_i\}} \sum_i (a_i + c_i - \pi_i)^2,
\end{align*}
where the partial ordering is defined in statement~\eqref{eq:p_model_tree}.

Suppose we wish to model the alternative mean $\mu$ differently for individual hypotheses. In that case, we can think of the alternative mean as a parametric function of the covariates: $\mu_i = \mu_\gamma(x_i)$ where the vector $\gamma$ denotes the parameters. A simple example is a linear function: $\mu_\gamma(x_i) = \gamma^T x_i$. The updates in the E-step is the same as above with $\mu$ replaced by $\mu_\gamma(x_i)$. In the M-step, the update for~$\mu_i$ corresponds to the update for $\gamma$: 
\begin{align*}
    (\gamma_{\text{new}}) = \argmax_\gamma \sum_i a_i \left(\widetilde{Z_i} - \mu_\gamma(x_i)\right)^2 + c_i \left(-\widetilde{Z_i} - \mu_\gamma(x_i)\right)^2,
\end{align*}
which is equivalent to the solution of a least square regression to a set of pseudo responses $\{\widetilde{Z_1}, \ldots, \widetilde{Z_n},  -\widetilde{Z_1}, \ldots -\widetilde{Z_n}\}$ with weights $\{a_1, \ldots, a_n, c_1, \ldots, c_n\}$. We use the EM algorithm with constant $\mu$ for the experiments in our paper, because it tends to be robust to heterogeneous alternative mean values in simulations.

\section{Comparison with alternative methods} \label{apd:alter_methods}
We compared the interactive test with the adaptive weighted Fisher test (AW-Fisher) and weighted Higher Criticism (weighted-HC) in the example of a grid of hypotheses. Our simulation considers a small grid ($10 \times 10$) because the AW-Fisher test has a very high computational cost. We used the R package \texttt{AWFisher} by Huo et al. (2020) \cite{huo2020p}, which refers to a base library of null distributions for cases with less than 100 hypotheses; it took 6373.5 CPU hours using AMD Opteron(tm) Processor (1.4GHz) to complete the base library. Without such a base library, the computational complexity of the AW-Fisher test is $\mathcal{O}(2^n)$, and roughly $\mathcal{O}(n \log(n))$ for our interactive test.

\begin{figure}[h]
    \centering
    \includegraphics[width=0.5\linewidth]{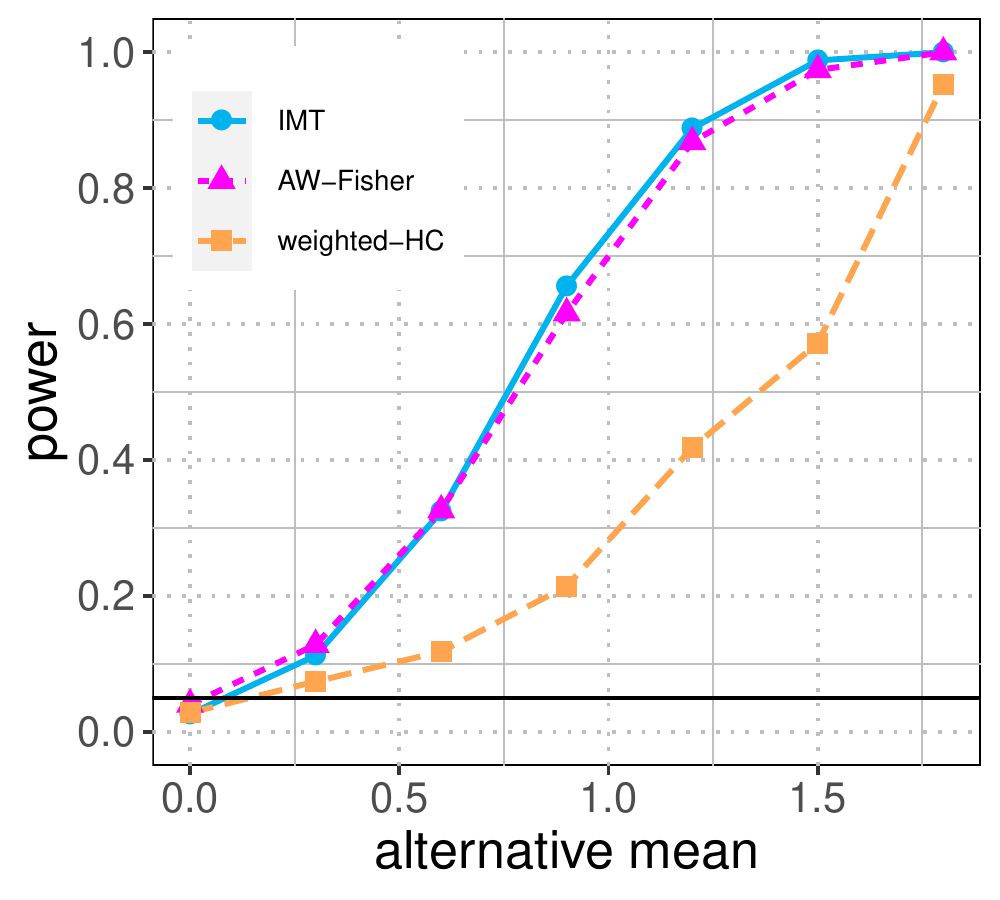}
    \caption{Power of the \imt (IMT), AW-Fisher, and weighted-HC when the non-null cluster is in the center of a $10 \times 10$ grid. IMT and AW-Fisher both have high power, but the AW-Fisher has a high computational cost.}
    \label{fig:alternative_methods}
\end{figure}

As described in Section~\ref{sec:sim_cluster}, we simulated a non-null cluster is in the center of the hypothesis grid. The weights in HC use the oracle information of the non-null position and is set to $1$ for the non-nulls and $0.5$ for others. Since we have included several simulations to compare the \imt with \mst and Stouffer's test in Section~\ref{sec:ipm}, above in Figure~\ref{fig:alternative_methods}, we only focus on the comparison among the interactive test, AW-Fisher and weighted-HC. Although the AW-Fisher test achieves similar power as the \imt, it has very high computational cost as described above. Also, we remark that one main advantage of the interactive test we propose is that it can incorporate various types of prior knowledge and covariates in a data-dependent way. Meanwhile, most existing methods require the analyst to commit to one structure or prior knowledge before observing the $p$-values. For example, the weighted-HC might achieve higher power with a different set of weights, but the weights need to be specified ahead of time.

\end{document}